\newif\iflncs
\documentclass[smallextended]{svjour3}
\usepackage{fullpage}

\usepackage{appendix}
\usepackage{amsmath}
\usepackage{amssymb}
\usepackage{ifpdf}
\usepackage{algorithmic}
\usepackage{graphicx}
\usepackage{subfig}
\usepackage{wrapfig}
\usepackage{cite}
\usepackage{colortbl}
\usepackage{tikz}
\usepackage{morefloats}
\usepackage{enumitem}
\usepackage{xcolor}



\usepackage{commands-tam}

\newif\iftodo

\newif\ifhighlight
\newcommand{\update}[1]{\ifhighlight\colorlet{colorsaved}{.}\color{blue} #1 \color{colorsaved} \else #1 \fi}

\ifpdf

  \usepackage[pdftex]{epsfig}
  \usepackage[pdftex]{hyperref}

\else

    \usepackage[dvips]{epsfig}
    \newcommand{\href}[2]{#2}

\fi
\newif\ifinproceedings

\newif\ifabstract
\newif\iffull

\newtheorem{observation}[theorem]{Observation}

\newcommand{\cgSE}{CG_{SE}}

\newcommand{\cgNE}{CG_{NE}}
\newcommand{\cgNW}{CG_{NW}}
\newcommand{\cgSW}{CG_{SW}}

\newcommand{\necav}{conc_{NE}}
\newcommand{\secav}{conc_{SE}}
\newcommand{\swcav}{conc_{SW}}
\newcommand{\nwcav}{conc_{NW}}

\newcommand{\necon}{conv_{NE}}
\newcommand{\secon}{conv_{SE}}
\newcommand{\swcon}{conv_{SW}}
\newcommand{\nwcon}{conv_{NW}}

\newcommand{\on}{\mathtt{on}}
\newcommand{\off}{\mathtt{off}}
\newcommand{\latent}{\mathtt{latent}}

\newcommand{\frametiles}{T_{frame}}

\newcommand{\cdtiles}{T_{CD}}
\newcommand{\alphatiles}{T_{\alpha}}

\title{Replication of arbitrary hole-free shapes via self-assembly with signal-passing tiles}

\author{
    Andrew Alseth
        \thanks{Department of Computer Science and Computer Engineering, University of Arkansas, Fayetteville, AR, USA.
        \protect\url{awalseth@uark.edu}.
        This author's work was supported in part by National Science Foundation grant CAREER-1553166}
\and
    Jacob Hendricks
        \thanks{Department of Computer, Information, and Data Sciences, University of Wisconsin -- River Falls, River Falls, WI, USA.
        \protect\url{jacob.hendricks@uwrf.edu}.
        }
\and
	Matthew J. Patitz
        \thanks{Department of Computer Science and Computer Engineering, University of Arkansas, Fayetteville, AR, USA.
        \protect\url{mpatitz@self-assembly.net}.
        This author's research was supported in part by National Science Foundation grants CCF-1422152 and CAREER-1553166}
\and
    Trent A. Rogers
        \thanks{Hamilton Institute and Department of Computer Science, Maynooth University, Ireland.  
        \protect\url{trent.rogers@mu.ie}.
        Research supported by European Research Council (ERC) under the European Union’s Horizon 2020 research and innovation programme (grant agreement No 772766, Active-DNA project), and Science Foundation Ireland (SFI) under Grant number 15/ERCS/5746.}
}
\institute{}
\date{}

\begin{document}

\maketitle

\begin{abstract}
In this paper, we investigate the abilities of systems of self-assembling tiles which can each pass a constant number of signals to their immediate neighbors to create replicas of input shapes.  Namely, we work within the Signal-passing Tile Assembly Model (STAM), and we provide a universal STAM tile set which is capable of creating unbounded numbers of assemblies of shapes identical to those of input assemblies.  The shapes of the input assemblies can be arbitrary 2-dimensional hole-free shapes.  This improves previous shape replication results in self-assembly that required models in which multiple assembly stages and/or bins were required, and the shapes which could be replicated were more constrained, as well as a previous version of this result that required input shapes to be represented at scale factor 2.



\vspace{-12pt}
\end{abstract}


\section{Introduction}

    
    
    
    
    
            
        
    
    
    
    

The process of self-assembly, in which a disorganized collection of relatively simple components autonomously combine to form more complex structures, occurs in many natural systems (e.g. the formation of crystals such as snowflakes, or a variety of cellular components). With a desire to harness the power of self-assembly to build systems with molecular precision while creating complex structures via rational design, research has often focused on the subfield of \emph{algorithmic self-assembly}, which occurs when systems and components are designed so that the combination of the components is forced to follow the steps of a designated algorithm. The algorithms can be designed to control the shapes and dimensions of assemblies, potentially with high complexity and precision. Algorithmic self-assembly was initially shown to have great theoretical power in the abstract Tile Assembly Model (aTAM) \cite{Winf98} with systems being capable of universal computation, and since then many theoretical (e.g. \cite{SolWin07,IUSA,RotWin00,jCCSA}) and even experimental results (e.g. \cite{evans2014crystals,drmaurdsa,BarSchRotWin09}) have continued to explore the possibilities.

Most work in algorithmic self-assembly focuses on systems with input seed assemblies (often only a single tile) which then grow into a target structure. The goal is often to do so while using the smallest number of unique types of tiles, or optimizing some other system parameter. However, a variety of other types of behaviors have also been studied, including performing series of complex computations \cite{jCCSA,jSADS}, the identification of target shapes from a set input assemblies of multiple shapes \cite{ShapeIdentAlgo}, and the replication of input patterns \cite{STAMPatternRep} (theoretical) and \cite{SchulYurWinfEvolution} (experimental) or shapes \cite{RNaseSODA2010,SelfReplicationDNA}. In this paper, we focus on the latter goal. Specifically, we present a single tile set that is capable of replicating the shapes of any two-dimensional hole-free shapes which are given as input assemblies. Our construction works within the Signal-passing Tile Assembly Model (STAM) \cite{jSignals}, in which the binding of a tile's glue is able to initiate a ``signal'' that causes one or more other glues on the tile to either turn ``on'' or ``off''. Each signal can be activated only once, and is completely asynchronous, meaning that the time taken for the activation or deactivation of a glue may be arbitrarily long (although not infinite). The tile set that we present is robust to both the asynchronous nature of the signals, as well as to an input consisting of assemblies of multiple shapes. As long as each input assembly has a single generic glue type on every surface of its perimeter, an infinite number of assemblies will be produced with the exact shape of each. Our main result, Theorem \ref{thm:exp_rep}, has advantages over prior replication results in tile assembly. It works for all two-dimensional hole-free shapes and requires no scale factor (a previous version of this result \cite{STAMshapes} required a scale factor of 2), while \cite{SchulYurWinfEvolution} replicates one-dimensional patterns, \cite{STAMPatternRep} replicates two-dimensional patterns on rectangular assemblies, and \cite{RNaseSODA2010} replicates a much more constrained set of shapes. Additionally, our construction uses a universal tile set and requires only a single stage where \cite{RNaseSODA2010} requires differing tile sets and multiple stages, and while \cite{SelfReplicationDNA} is capable of replicating three-dimensional shapes, the model a more complex extension to the STAM and the constructions make use of tiles of multiple shapes as well as glues which can be flexible and allow for the reconfiguration of assemblies, neither of which is required by our construction in the basic STAM (which itself has experimental motivation \cite{SignalTilesExperimental}). Thus, we present a single, universal shape replicating tile set in the STAM such that a system using its tiles will perform the parallel and exponentially increasing, unbounded replication of all sets of two-dimensional hole-free input shapes.

As a supporting result, in Lemma \ref{lem:rectangle-layer} we also prove that our construction is capable of a form of distributed ``leader election'' in which a nondeterministic process grows a rectangular ``frame'' around each input assembly in a bounded amount of time (based on the shape). Although the perimeter of the input assembly is uniformly marked by a single, generic glue type, once the rectangular frame is completed a distinct corner is identified as the basis for sending a signal that marks a single location on the perimeter of the input assembly and allows the construction to proceed to the next stage with a guarantee of order of growth and correctness. 
This is especially difficult given the distributed, parallel, and asynchronous nature of tile assembly in the STAM combined with the facts that absolutely no assumptions can be made about the input shapes and that there can be multiple input shapes. This process of uniquely identifying a single point on the perimeter of an assembly of arbitrary shape and uniform glue type requires the bulk of the complexity of our construction, and may perhaps be of use for other constructions in future work with ultimately different goals.

\update{A universal shape-replicating tile system has also been demonstrated for the 2-Handed Tile Assembly Model with negative glues (-2HAM) \cite{chalkUniversalShapeReplicators2017}.
 Similar to the tileset presented in this paper, the -2HAM tileset in \cite{chalkUniversalShapeReplicators2017} takes hole-free shapes as input and generates copies of the input shape via a common process of generating a frame around
 the target shape. 
 The overall process of -2HAM is similar to that of the STAM replicator presented in this work; create a frame around the input shape, detach the frame from the input, fill in frame with new tiles, and eject the replicated shape from the frame. 
The performance of the -2HAM replicator varies from the STAM replicator in this paper in both the types of shapes which can be replicated and the information provided to the replicator, in the form of exposed glues.
 First, the set of shapes which can be replicated in this paper is not limited, whereas the gadgets required for the -2HAM replicator are limited by the `feature size' metric (seen also in \cite{RNaseSODA2010}).
 Second, the need for ``leader election'' as carried out by the STAM replicator is bypassed due to the fact that the input shapes encode the specific location for binding of the tiles which initiate the replication process.
 The leader election process necessitates the large number of signals required for tiles to communicate between one another.
 Finally, the STAM replicator presented in this paper is shown to replicate \emph{exponentially}, wheas the -2HAM replicator is posited to be a \emph{quadratic} replicator.}


The topic of shape replication is interesting to study for multiple reasons. First is the potential for practical applications, since having even a much more restricted shape replication system could be very useful. A nanoscale ``copying machine'' could ease and speed up the production of targeted structures and materials. Additionally, we hope that the study of these simpler shape replicators can lead to studies of more complex systems capable of evolution directed by carefully structured selective pressures, which may give more insight into the process of evolution in general and provide additional avenues toward biomimicry and utilizing the power of evolution to design structures.

This paper is organized as follows. In Section \ref{sec:prelims} we present a high-level introduction to the models used in this paper and several definitions used throughout. In Section \ref{sec:frame-building} we present our leader election construction (which we refer to as \emph{frame building}), and in Section \ref{sec:rep} we present the shape replication result.  Note that a preliminary and greatly shortened version of this paper was published in \cite{STAMshapes}.

\section{Preliminaries}\label{sec:prelims}

Here we provide informal descriptions of the models and terms used in this paper.  Formal definitions can be found in \cite{jSignals3D}.

\subsection{Informal definition of the 2-Handed Assembly Model}

The 2-Handed Assembly Model (2HAM) \cite{AGKS05g,DDFIRSS07} is a mathematical model of tile-based self-assembly systems, where the basic components are square \emph{tiles} that can have \emph{glues} on their edges allowing them to bind together. The 2HAM is a generalization of the abstract Tile Assembly Model (aTAM) \cite{Winf98} that allows for two assemblies, both possibly consisting of more than one tile, to attach to each other. 
We now give a brief, informal, sketch of the 2HAM,

A \emph{tile type} is a unit square with each side having a \emph{glue} consisting of a \emph{label} (a finite string) and \emph{strength} (a non-negative integer).   We assume a finite set $T$ of tile types, but an infinite number of copies of each tile type, each copy referred to as a \emph{tile}.
A \emph{supertile} is (the set of all translations of) a positioning of tiles on the integer lattice $\Z^2$.  Two adjacent tiles in a supertile \emph{interact} if the glues on their abutting sides are complimentary and have positive strength.
Each supertile induces a \emph{binding graph}, a grid graph whose vertices are tiles, with an edge between two tiles if they interact.
The supertile is \emph{$\tau$-stable} if every cut of its binding graph has strength at least $\tau$, where the weight of an edge is the strength of the glue it represents.
That is, the supertile is stable if at least energy $\tau$ is required to separate the supertile into two parts.

Given a set of tiles $T$, define a \emph{state} $S$ of $T$ to be a multiset of supertiles, or equivalently, $S$ is a function mapping supertiles of $T$ to $\N \cup \{\infty\}$, indicating the multiplicity of each supertile in the state. We therefore write $\alpha \in S$ if and only if $S(\alpha) > 0$.

A \emph{(two-handed) tile assembly system} (\emph{TAS}) is an ordered triple $\mathcal{T} = (T, S, \tau)$, where $T$ is a finite set of tile types, $S$ is the \emph{initial state}, and $\tau\in\N$ is the temperature.  For notational convenience we sometimes describe $S$ as a set of supertiles, in which case we actually mean that $S$ is a multiset of supertiles with one count of each supertile. We also assume that, in general, unless stated otherwise, the count for any singleton tile in the initial state is infinite.

Given a TAS $\calT=(T,S,\tau)$, a supertile is \emph{producible}, written as $\alpha \in \prodasm{T}$, if either it is a single tile from $T$, a supertile from $S$, or it is the $\tau$-stable result of translating two producible assemblies without overlap. 
A supertile $\alpha$ is \emph{terminal}, written as $\alpha \in \termasm{T}$, if for every producible supertile $\beta$, $\alpha$ and $\beta$ cannot be $\tau$-stably attached.

\subsection{Informal description of the STAM}

The STAM is an extension of the 2HAM which is intended to provide a model based on experimentally plausible mechanisms for glue activation and deactivation via \emph{signals} caused by glue binding events, but to abstract them in a manner which is implementation independent.
Therefore, no assumptions are made about the speed or ordering of the completion of signaling events (i.e. the execution of the transition functions that activate and deactivate glues and thus communicate with other tiles via binding events).  This provides a highly asynchronous framework in which care must be made to guarantee desired results, but which then provides robust behavior independent of the actual parameters realized by a physical system.
A detailed, technical definition of the STAM model is provided in \cite{jSignals3D}.

In the STAM, tiles are allowed to have sets of glues on each edge (as opposed to only one glue per side as in the aTAM and 2HAM).  Tiles have an initial state in which each glue is either ``$\texttt{on}$'' or ``$\texttt{latent}$'' (i.e. can be switched $\texttt{on}$ later).  Tiles also each implement a transition function which is executed upon the binding of any glue on any edge of that tile.  The transition function specifies, for each glue $g$ on a tile, a set of glues (along with the sides on which those glues are located) and an action, or \emph{signal} which is \emph{fired} by $g$'s binding, for each glue in the set.  The actions specified may be to: (1) turn the glue $\texttt{on}$ (only valid if it is currently $\texttt{latent}$), or (2) turn the glue $\texttt{off}$ (valid if it is currently $\texttt{on}$ or $\texttt{latent}$).  This means that glues can only be $\texttt{on}$ once (although may remain so for an arbitrary amount of time or permanently), either by starting in that state or being switched $\texttt{on}$ from $\texttt{latent}$ (which we call \emph{activation}), and if they are ever switched to $\texttt{off}$ (called \emph{deactivation}) then no further transitions are allowed for that glue.  This essentially provides a single ``use'' of a glue (and any signals sent by its binding).  Note that turning a glue $\texttt{off}$ breaks any bond that that glue may have formed with a neighboring tile. Also, since tile edges can have multiple active glues, when tile edges with multiple glues are adjacent, it is assumed that all matching glues in the $\texttt{on}$ state bind (for a total binding strength equal to the sum of the strengths of the individually bound glues).  The transition function defined for a tile type is allowed a unique set of output actions for the binding event of each glue along its edges, meaning that the binding of any particular glue on a tile's edge can initiate a set of actions to turn an arbitrary set of the glues on the sides of the same tile $\texttt{on}$ or $\texttt{off}$.

As the STAM is an extension of the 2HAM, binding and breaking can occur between tiles contained in pairs of arbitrarily sized supertiles.  It was designed to model physical mechanisms which implement the transition functions of tiles but are arbitrarily slower or faster than the average rates of (super)tile attachments and detachments.  Therefore, rather than immediately enacting the outputs of transition functions, each output action is put into a set of ``pending actions'' which includes all actions which have not yet been enacted for that glue (since it is technically possible for more than one action to have been initiated, but not yet enacted, for a particular glue). Any event can be randomly selected from the set, regardless of the order of arrival in the set, and the ordering of either selecting some action from the set or the combination of two supertiles is also completely arbitrary.  This provides fully asynchronous timing between the initiation, or firing, of signals (i.e. the execution of the transition function which puts them in the pending set) and their execution (i.e. the changing of the state of the target glue), as an arbitrary number of supertile binding (or breaking) events may occur before any signal is executed from the pending set, and vice versa.  

A STAM system consists of a set of tiles and a temperature value.  To define what is producible from such a system, we use a recursive definition of producible assemblies which starts with the initial (super)tiles and then contains any supertiles which can be formed by doing the following to any producible assembly:  (1) executing any entry from the pending actions of any one glue within a tile within that supertile (and then that action is removed from the pending set), (2) binding with another supertile if they are able to form a $\tau$-stable supertile, or (3) breaking into $2$ separate supertiles along a cut whose total strength is $< \tau$ (due to one or more glues along that cut having been deactivated).



\subsection{Additional definitions}\label{sec:frame-defns}

Throughout this paper, we will use the following definitions and conventions. 
We define a \emph{shape} as a finite, connected subset of $\Z^2$.  Following~\cite{ShapeIdentAlgo}, we say that a shape $s$ is \emph{hole-free} if the complement of $s$ is an infinite connected subset of $\Z^2$. We say that an assembly $\alpha$ is hole-free if $\dom \alpha$ is hole-free.  Then, an \emph{input assembly} is a non-empty, $\tau$-stable, hole-free assembly $\alpha$ such that every glue on the perimeter of $\alpha$ is strength 1 and of the same type. Throughout this section, we denote an input assembly by $\alpha$ and the type of the glue exposed on the perimeter of $\alpha$ by $x$.  Furthermore, a \emph{side} of a shape is any segment of the perimeter which connects two vertices (each of which can be convex or concave).

The algorithm that we present here will make use of many different gadgets and refer to various features of an input assembly and the shape of this input assembly. For convenience and brevity, we use the following conventions.

\begin{enumerate}

    \item[-] When a tile initially binds to an assembly, we call the sides which have glues that participate in that initial binding event the \emph{input} side(s), and the other sides the \emph{output} sides.
    \item[-] Respectively for each of northeast, southeast, southwest, and northwest, define convex corners with that pair of incident edges as $\necon$, $\secon$, $\swcon$, and $\nwcon$
    \item[-] Respectively for each of northeast, southeast, southwest, and northwest, define concave corners with that pair of incident edges as $\necav$, $\secav$, $\swcav$, and $\nwcav$
    \item[-]  CW: clockwise, CCW: counterclockwise
    \item[-] Let $R$ be the smallest bounding rectangle for $\alpha$ (i.e. the smallest rectangle that completely contains $\alpha$)
    \item[-] Let $PERIM(\alpha)$ be the set of all perimeter edges of $\alpha$.

	\item[-] Let $C$ be the convex hull of the set, $S$ say, of points in $\R^2$ defined by the corners of each tile. (That is, $(x,y)\in S$ iff $(x - \frac{1}{2},y - \frac{1}{2})$, $(x + \frac{1}{2},y - \frac{1}{2})$, $(x - \frac{1}{2},y + \frac{1}{2})$, or $(x + \frac{1}{2},y + \frac{1}{2})$ is in $\dom \alpha$.)
 	\item[-] We define $EXT(\alpha)$ to be the set of edges in $PERIM(\alpha)$ that are completely contained in the boundary of $C$.
 	\item[-] We define $INT(\alpha)$ to be $PERIM(\alpha) / EXT(\alpha)$.
	\item[-] Then, a \emph{concavity} of $\alpha$ is defined to be a subset, $E$, of edges of $INT(\alpha)$ such that each edge of $E$ is incident to some other edge in $E$.
\end{enumerate}
For an intuitive picture of the last five definitions above, see Figure~\ref{fig:convex-hull}.

\begin{figure}[htp]
\centering
    \includegraphics[width=3in]{}
    \caption{An input assembly $\alpha$ consisting of the gray tiles. (Strength-1 $x$ glues on the exterior of $\alpha$ are not shown.) The blue plus red highlighted edges make up $PERIM(\alpha)$. The dashed line depicts the boundary of the convex hull $C$. The edges highlighted in blue make up the edges of $EXT(\alpha)$, and the edges highlighted in red make up the edges of $INT(\alpha)$. \vspace{-10pt}}
  \label{fig:convex-hull}
\end{figure}

\vspace{-10pt}
\section{Frame Building}\label{sec:frame-building}

Throughout this paper, we provide constructions which take as input assemblies of arbitrary 2D hole-free shapes. 
All input assemblies have completely uniform perimeters in terms of glue labels, meaning that no location on a perimeter is marked anyhow differently from the others.  Given the local nature of the self-assembly process, namely that tiles bind based only on local interactions of matching glues, and also with the order and locations of tile attachments being nondeterministic and growth of assemblies massively in parallel, a distributed problem such as ``leader election'' can be quite difficult, and similarly so is the problem of uniquely identifying exactly one point on the perimeter of an input assembly when no assumptions can be made about the shape other than the facts that it: (1) is connected, and (2) has no interior holes which are completely surrounded by the assembly.  Therefore, in this section we provide a construction which is a single universal STAM system, with temperature parameter equal to 2, capable of forming \emph{frames}, or simply borders composed of tiles, completely surrounding input assemblies in such a way that the growth of the frames performs a distributed algorithm which uniquely identifies exactly one perimeter location on each input shape.  We leverage this by using the tile at this location to verify that a completed frame surrounds the shape to be replicated and then detach the frame from the shape.
While this algorithm and STAM system, as well as several of the novel techniques, are of independent interest, they also play integral roles in the remaining constructions of this paper and will potentially also provide a useful toolkit for future constructions in others.

At a very high-level, the frame building construction can be broken into three main components.  First, a series of layers of tiles attach to the input assembly $\alpha$, each slowly helping to fill in the openings to any concavities, until eventually $\alpha$ is enclosed in an assembly which has a rectangular outer layer.  Second, that rectangular layer is able to detach after its unique southeast corner tile ``gadget'' initiates the propagation of a signal inward through all of the layers to the easternmost of the southernmost tiles which have attached directly to $\alpha$.  The tile that is immediately to the left of this tile is elected as the ``leader'' of the frame.  Third, the leader initiates a signal which propagates in a counterclockwise direction around $\alpha$, carefully ensuring that the entire perimeter of $\alpha$ is surrounded by tiles which have bound to it and made a complete ``mold'' of the shape.  After this is accomplished, the entire frame detaches from $\alpha$.  The result is a perfect mold of $\alpha$, with generic glues exposed around its entire interior surface except for one specific location, the leader, which exposes a unique glue.  It is from this unique glue that the frame assembly will then be able to initiate the growth which fills in the frame, making a replica of $\alpha$.


\vspace{-5pt}
\subsection{Building layers of the frame}

We now give an extremely high-level sketch of the formation of the \emph{frame}.  (See Sections~\ref{sec:layer1} and \ref{sec:frame-layers} for more details.)  Essentially, the frame grows as a series of layers of tiles which begin on (possibly many) southeast convex corners of $\alpha$ (depending on its shape) and grow counterclockwise (CCW) around $\alpha$.  A greatly simplified example of the basic tiles which form layers of the frame can be seen in Figure~\ref{fig:basic-convex-frame}.  Each path which forms a layer can grow only CCW, and therefore, depending on $\alpha$'s shape, may crash into a concavity of $\alpha$ (or one formed by a previous layer that the current layer is growing on top of).  Such \emph{collisions} are detected by a specialized set of \emph{collision detection} tiles, and an example of a collision and its detection is shown in Figure~\ref{fig:new-frame2-concave-collision1}.  The need for collision detection tiles is technical and related to the need for the exposed glues on all parts of the growing frame assembly to be minimized and carefully controlled so that multiple shapes and copies of shapes can be replicated in parallel, without separate assemblies interfering with each other.

The growth of frame layers is carefully designed so that they are guaranteed to proceed until all external openings to concavities of $\alpha$ have been filled in by partially completed layers, resulting in layers which are more and more rectangular, and eventually an exactly rectangular layer.  At this point, and only at this point, we are guaranteed to have a layer which has exactly one convex southeast corner. Due to the distributed and asynchronous nature of the assembly process, and the fact that each tile only has local information, throughout layer formation it is necessary for some layers to make local ``guesses'' that they are rectangular, and in order for that not to cause errors, a mechanism of layer detachment is used.  Basically, layers which guess they may be rectangular attempt to disconnect, but only the first truly rectangular layer can successfully detach.  At this point it activates glues on the layer immediately interior of it, which it has \emph{primed} to receive a signal (i.e. a glue which can bind to receive the signal has been turned on) from a tile which will now be free to attach since the covering exterior layer dissociates. This is then used in the unique leader election.  It is by the careful use of the ``global'' information provided by the layer detachment that the construction can proceed correctly.

\subsection{Overview of frame building tiles and signals}
\update{
A key benefit of STAM over other models of self-assembly (particularly aTAM) is the ability of STAM tiles to reuse physical space; in essence, each tile can carry out multiple computations via glue activation and deactivation. 
We leverage this advantage by providing tiles that are able to communicate by the successive activation of glues. 
Our design methodology revolves around two key aspects - the set of tiles along with their initial binding conditions to a growing frame, and the sets of glue activation signals which are necessary to grow the frame around $\alpha$ such that each frame uniquely maps to the input shape provided, and is able to detach only upon guarantee of completion. 
In this section we provide the descriptions about what tiles are present in the system, and the various functionality provided by signals. We do not provide exact descriptions of the signals in this section; these are present in the following sections with details of the frame assembly and leader election processes.
}

\subsubsection{Overview of frame building tiles}\label{sec:base-tiles}

The local information which tiles utilize to bind to the growing structure comprises of both the shape of the growing frame, and the active glues presented. We first provide the set of tiles which serve as the template for the signals to be added. We omit single-use tiles which are context specific to a single phase of the frame building process: collisions (Section~\ref{sec:frame-layers}), the leader election process (Section~\ref{sec:leader-details}), and mold creation (Section~\ref{sec:outlining}). Figure~\ref{fig:basic-convex-frame} includes all tiles included in frame building which will have additional modifications to incorporate signal passing.

\begin{figure}[htp]
\centering
    \includegraphics[width=\textwidth]{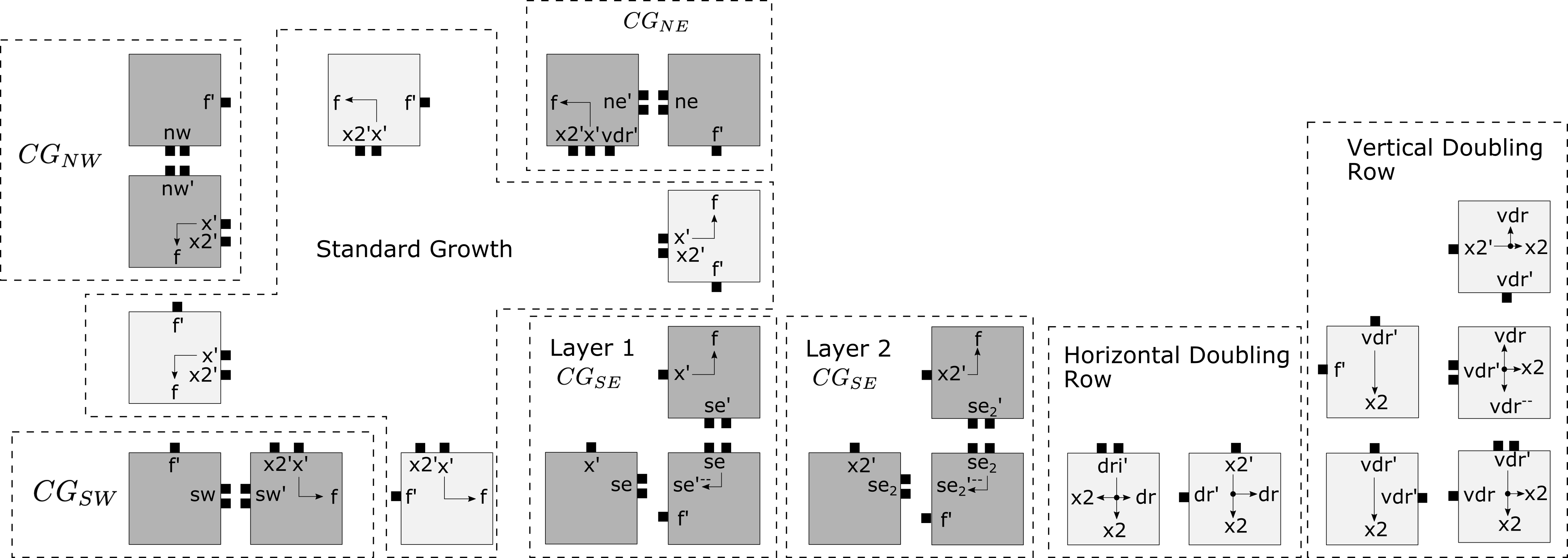}
    \caption{A simplified version of the tile set which grows the  frame around $\alpha$.  
    The darker grey tiles represent \emph{corner gadgets} ($CG$) which form as duples (or a triple for the southeast corner gadget, $\cgSE$).
     Additionally, $\cgSE$ is able to either form the triple, or remain a duple and bind overhangs as shown in Figure~\ref{fig:frame-corner-cases1}(e). 
     All tiles with both $x2'$ and $x'$ glues are able to bind to $\alpha$ and any subsequent layer, and also initiate the $f$ signal. Horizontal doubling row and vertical doubling row tiles are included.
    Additionally, strength 2 glues which begin in the latent state are identified with a superscript of two dashes;
    for example, $se'^{--}$
    Note that this is a basic, beginning tile set to which we will add additional signals and tiles throughout the construction.\vspace{-10pt}}\label{fig:basic-convex-frame}
\end{figure}

\update{
\subsubsection{Overview of frame building signals}\label{sec:frame-signals}

In this section we provide a high-level overview of the main signals that are propagated through tiles of the layers which grow a frame around $\alpha$. Rather than depicting individual tiles, we show segments of standard paths, $\cgSE$ gadgets, ``doubling rows'', collision detection tiles and gadgets, and the signals which propagate through them to control their growth and the growth of additional layers when necessary. The logical functionality of each signal is explained, as well as the (main) cases in which they are initiated. In order to provide a clear and relatively succinct description of each signal, we reserve explanations of most ``special cases'' for later sections (and a full enumeration of special cases can be seen in Figure \ref{fig:frame-corner-cases1}).

The main focus is to show how, in every case where a frame layer grows but is not rectangular (where a rectangular frame layer has exactly one of each of the $4$ convex corner types, and it grows out from the north of the same $\cgSE$ which it eventually collides with from the west) a signal will be propagated that allows for a new frame layer to grow immediately outside of it, and such a layer will always be able to be initiated by the attachment of a new $\cgSE$. Additionally, it is important that special ``doubling'' rows are added in certain situations so that rows are guaranteed to eventually become rectangular.

A listing of the relevant signals for this process, as well as an overview of when  each is initiated and the logical function of each follows:
}

\begin{figure}
    \centering
    \includegraphics[width=5.2in]{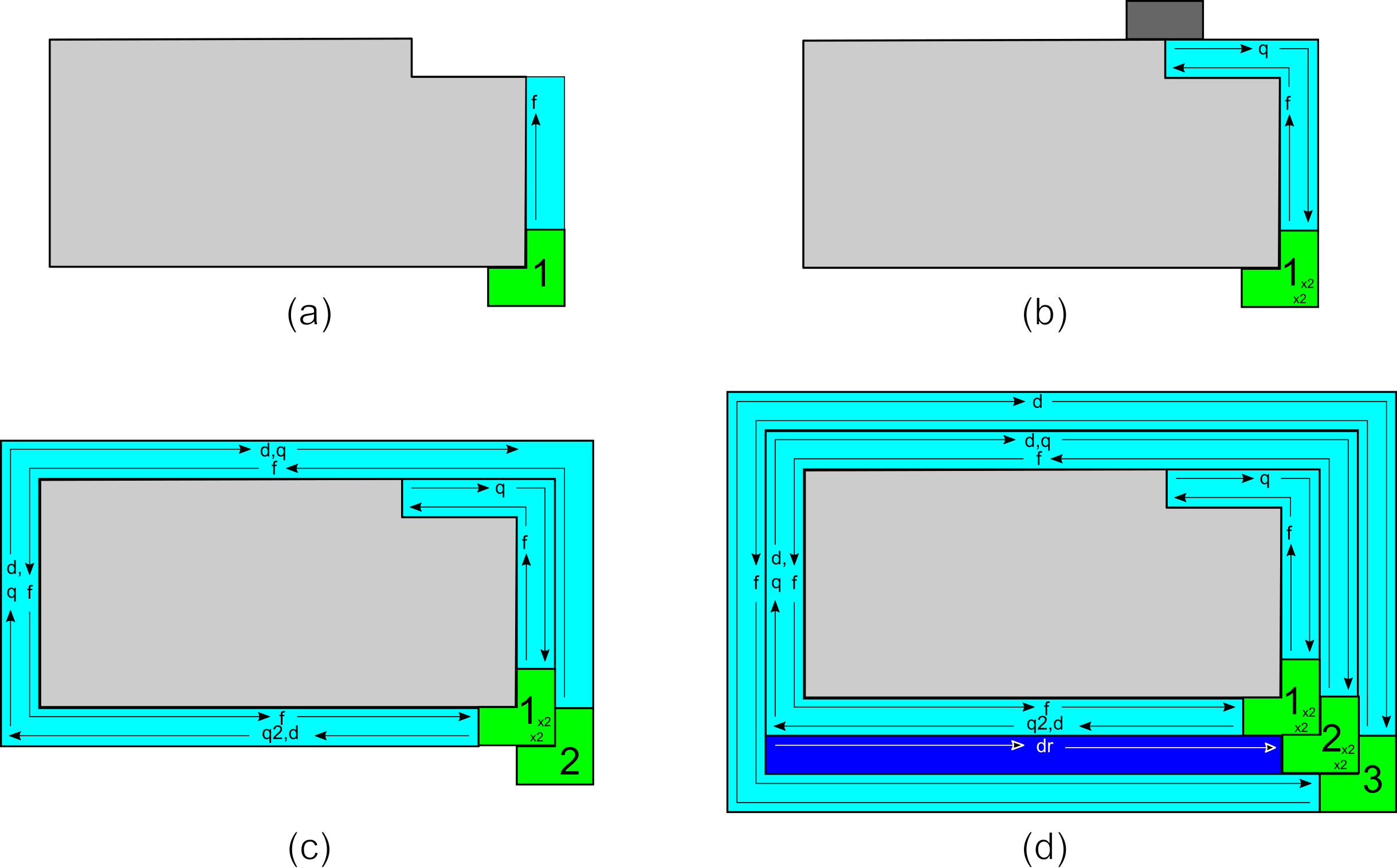}
    \caption{Simple example demonstrating several signals used during frame construction. In all figures light grey represents $\alpha$, each $\cgSE$ is shown in green, standard paths are light blue, and the horizontal doubling row is dark blue. (a) The first $\cgSE$ attaches to $\alpha$ and begins northward growth of a standard path, propagating the $f$ signal CCW. (b) The first standard path collides with $\alpha$ and the collision detection gadget initiates the CW propagation of the $q$ signal. This causes all tiles of the standard path and the $\cgSE$ to activate $x2$ glues that allow for a new layer to piggyback. (c) A second layer grows a standard path which eventually collides with the first $\cgSE$. This collision initiates the CW passing of a $d$ signal, as well as a $dri$ signal (since the $\cgSE$ received the $q$ signal from its north). When the $dri$ signal reaches the west side, it is turned into a $q$ signal that travels along with the $d$ signal. (d) The $dri$ signal initiates the growth of the horizontal doubling row once. The $q$ signal causes the third $\cgSE$ to attach and the next layer to grow. This standard path collides with the $\cgSE$ which initiated it. Note that if a horizontal doubling row was not forced to grow before that layer, it would not be rectangular and able to collide with the third $\cgSE$. Without horizontal doubling rows, in a situation like this layers would continue to grow infinitely far without ever becoming rectangular.}
    \label{fig:horiz-doubling-example}
\end{figure}

\begin{figure}
    \centering
    \includegraphics[width=6.2in]{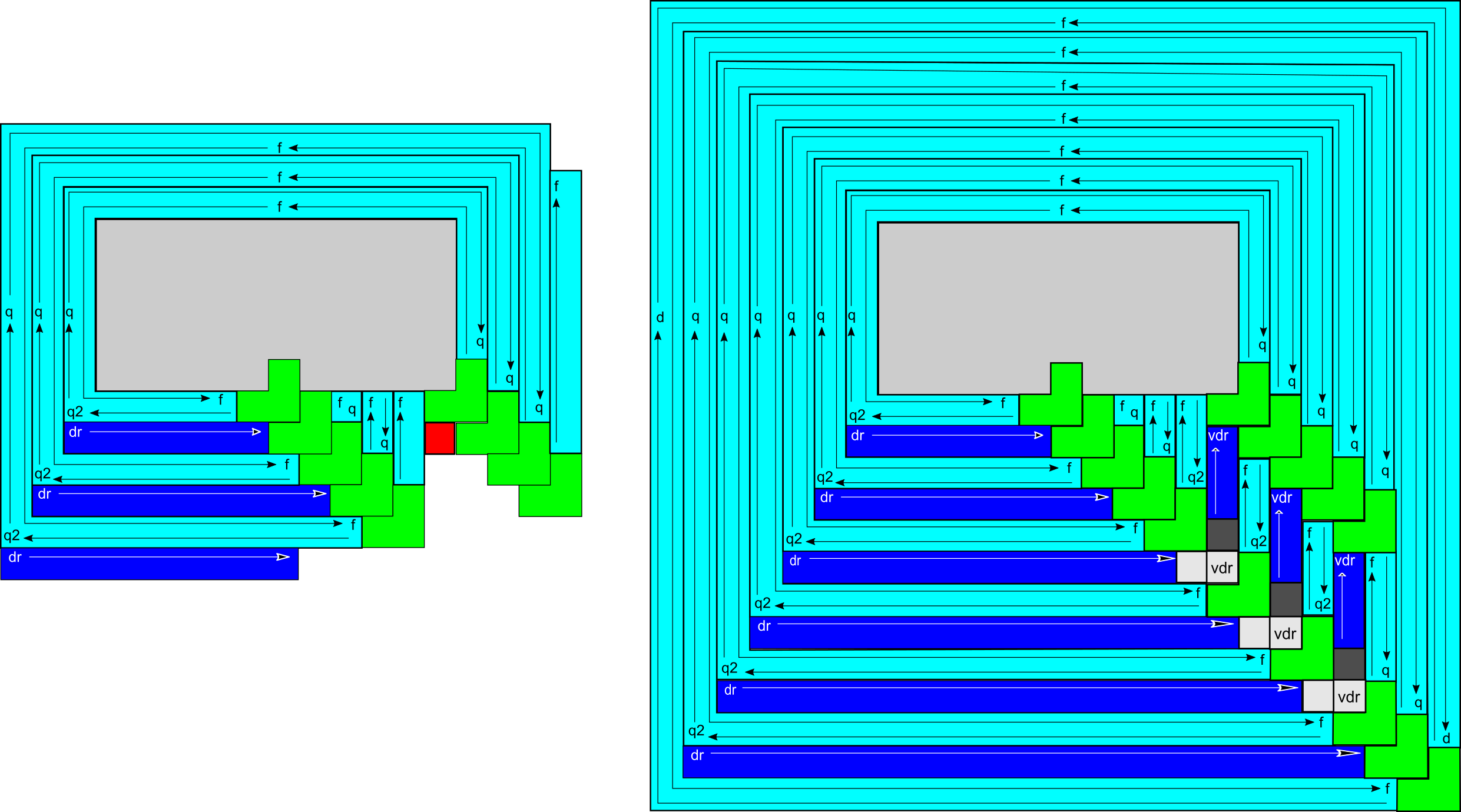}
    \caption{Example frame growth around an input assembly $\alpha$ (grey) which is nearly rectangular (with the exception of a single-tile concavity on the south). (Left) The first few layers of growth from two stacks of exterior $\cgSE$'s.  $\cgSE$'s are green, standard paths in light blue, and doubling rows in dark blue.
    Doubling rows are able to form from the first $\cgSE$ in the stack, as both layer 1 and layer 2 $\cgSE$ have an $x$ glue on the northernmost tile which sense if it is adjacent to $\alpha$.
    Note that the doubling row signal is activated for the second $\cgSE$ in the left stack, however since it is adjacent to a doubling row (and not a standard path), the message will not be transmitted as the doubling row tiles do not contain $q2$ signals.
    Also highlighted in red is the location of the first collision detection tile attachment which detects that the upward growing path is next to another $\cgSE$.  (Right) Once the collision shown in red on the left is encountered, every new layer of the left stack of $\cgSE$'s will be spaced by a vertical doubling row which causes each newly attaching $\cgSE$ in the stack to be offset from the previous by $(2,-2)$.  This offset guarantees that the left stack eventually covers the right stack, finally allowing for the growth of a perfectly rectangular layer.}
    \label{fig:stack-catch-up}
    \vspace{-10pt}
\end{figure}
\update{
\begin{enumerate}
    \item $f$: the ``forward'' signal propagates CCW through a standard path, allowing it to continue until it crashes into (1) $\alpha$, (2) some portion of a standard path on a different layer, or (3) a $\cgSE$.
    
    \item $q$: the ``quit'' signal which begins from the point of a collision of a standard path and propagates CW and CCW through that standard path. As it propagates, the tiles of the standard path turn on $x2$ glues on their right sides. When it returns to the $\cgSE$ from which the standard path initiated, that $\cgSE$ turns on the $x2$ glues needed to allow another $\cgSE$ to piggyback on it and begin the growth of a standard path which piggybacks on the previous.
    
    \item $d$: the ``detach'' signal begins from the collision of a standard path with a $\cgSE$ in standard position (i.e. when the standard path is growing from west to east and collides with the westernmost tile of a $\cgSE$). The $d$ signal is passed by a strength-2 glue and propagates CW through the standard path and each tile receiving it turns off the $x2'$ glue on its left side. A $\cgSE$ only turns off its $x2'$ glues if it receives this signal from its north.
    
    \item $q2$ (horizontal): the horizontal ``quit 2'' signal is used to initiate the growth of a \emph{horizontal doubling row} (i.e. an extra row of tiles on the southern side of a standard path). Similar to the $d$ signal, it begins from the collision of a standard path with a $\cgSE$ in standard position. However, it is only initiated by a $\cgSE$ which has received a $q$ signal from the standard path it initiated to its north. The $q2$ signal is passed CW, east to west along the southern row of the standard path and causes each tile to turn on a strength-1 $x2$ glue on its right (southern) side until reaching the westernmost tile of the row. This tile will be the west tile of a $\cgSW$ and will turn on a strength-2 $dri$ glue on its southern side and an $x2$ glue on its west side.
        \begin{enumerate}
            \item The first tile of a horizontal doubling row will be able to attach to the strength-2 $dri$ glue, and cooperative attachments will allow the rest of the row to grow from west to east until attaching to the westernmost tile of the $\cgSE$ which initiated the $q2$ signal. Each of the tiles of this horizontal doubling row will expose an $x2$ glue on its south side.
            
            \item The west tile of the $\cgSW$, which activates its own south-facing $x2$ glue, initiates a $q$ signal which travels CW through the remaining portion of the standard path.
            
            \item The logical behavior of the $q2$ signal is to continue the CW propagation of the $q$ signal received by the $\cgSE$ from its north through a standard path which grows to its west, while also causing one extra row to grow on the south of that standard path. The necessity for this extra row, depicted in Figure \ref{fig:horiz-doubling-example}, is that if one or more layers of standard paths had collisions before some standard path successfully grew around to return to a $\cgSE$, there would be fewer standard paths arriving at $\cgSE$s than $\cgSE$s. The inclusion of horizontal doubling rows allows the standard paths growing on the southern side to ``catch up'' to the outermost $\cgSE$ and eventually form a rectangular layer.
        \end{enumerate}
        
    \item $q2$ (vertical): the vertical ``quit 2'' signal is used to indicate that two (diagonally growing) stacks of $\cgSE$ gadgets are about to overlap, or have already overlapped (i.e. the western stack has grown to a location where its easternmost tile is at an $x$-coordinate one less than that of a $\cgSE$ of the eastern stack). It is initiated by a special collision tile that detects that a tile of an eastern (i.e. north growing) portion of a standard path is diagonally adjacent to the west tile of a $\cgSE$. (Figure \ref{fig:stack-catch-up} depicts a situation requiring this behavior at the location in red, and Figure \ref{fig:collision-detection-corner-cases} depicts the tile that detects this collision.)
        \begin{enumerate}
            \item When the collision detection tile attaches, it causes the standard path to pass the $q2$ signal south to the $\cgSE$, and for the standard path to activate glues on the east to allow for attachment of vertical doubling row tiles.
            
            \item Once the $q2$ signal reaches the $\cgSE$ at the south end of the standard path, it initiates the propagation of a $q2$ signal to the incoming standard path on the west side of the $\cgSE$. 
            
            \item The $\cgSE$ which recieves a $q2$ signal also activates an eastern strength-2 glue that allows the first vertical doubling row (`$vdr$') tile to attach, and the rest of the vertical doubling row tiles attach to complete the row's growth to the north via cooperation. Each tile of the vertical doubling row exposes an $x2$ glue to its east, allowing another layer to grow over them.
            
            \item A tile attaches to the south of the first $vdr$ tile and allows a new $\cgSE$ to piggyback on it.
            
        \end{enumerate}
\end{enumerate}
}
\begin{figure}
    \centering
    \includegraphics[width=4.2in]{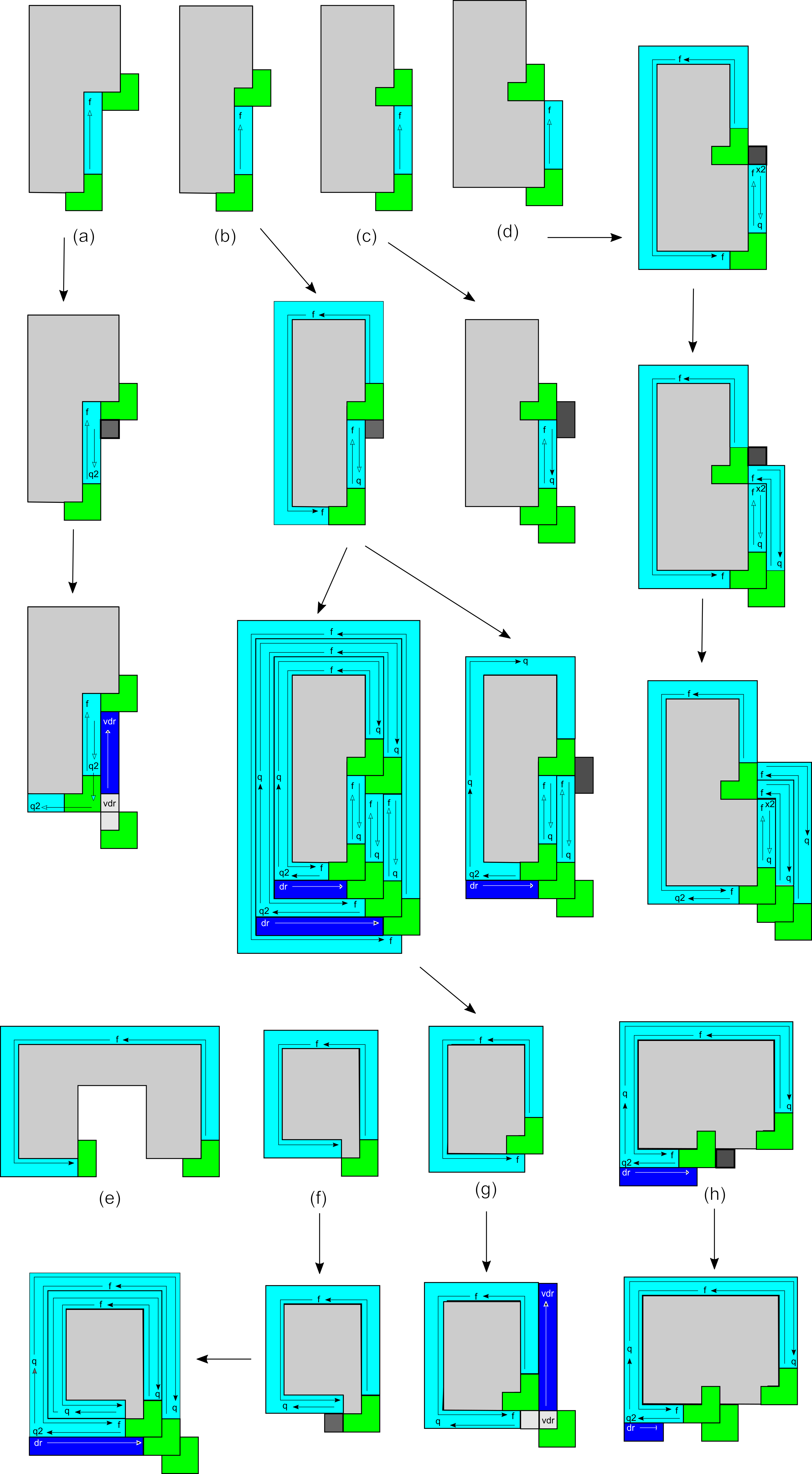}
    \caption{The figures labeled (a) through (h) show simple assemblies displaying each of the ``corner cases'' dealing with every possible situation where a standard path interacts with a $\cgSE$. Assemblies following arrows depict stages of subsequent growth. Relevant signals are shown (and the $d$ signal is omitted since it does not influence the behaviors depicted). Grey portions in the centers represent $\alpha$, $\cgSE$ gadgets are green, standard paths segments are light blue, doubling rows dark blue, collision tiles and gadgets are dark grey, and a few special case vertical doubling row tiles are light grey. For each scenario, the initial interaction between a standard path and a $\cgSE$ is shown, and then one or more subsequent assemblies show additional growth depicting how that particular scenario is handled (except for (e) which simply shows the situation of a standard path growing into the westernmost location which could possibly be occupied by a $\cgSE$ and the 2-tile version of the $\cgSE$ gadget attaching instead).}
    \label{fig:frame-corner-cases1}
\end{figure}

\subsubsection{Prevention of faulty tile addition and faulty signal generation}
\update{
The wide variety of signals which are present have the potential to generate spurious tile additions and unexpected (and potentially invalid) final assemblies.
These outcomes are prevented by utilizing 3 main categories of constraints in our construction.

First, each frame tile (or frame gadget) placed is done so deterministially with only one exception. Specifically, for any frame assembly, at each point of growth there exists only a single
valid tile (or gadget) placement which can occur except for specific instances at southeast corners of shapes in all but a single case.
This determinism is \emph{not} in reference to the frames which can possibly be generated from an input shape $\alpha$; due to the asynchronous nature of STAM, 
certain tile placements may be carried out before others and enable (or block) the placement of valid tiles.
Indeed, asynchronous signal activation and tile placement causes a number of edge cases as demonstrated in Figure~\ref{fig:frame-corner-cases1}.
In this construction, the single case where nondeterministic tile placement is possible in the case where either a frame growth tile may bind to the southern $x$
glue of a southeast corner of $\alpha$, or a $\cgSE$ may be placed. To counteract the possible blocking of the full 3-tile $\cgSE$ gadget, it is possible for the two eastern tiles that make a vertical rectangle
to attach to the frame growth tile and $\alpha$ via both $f'$ and $x'$ glues, instead of typically binding to its western tile with strength 2 $se$ glues.

Second, we prevent exposure of unnecessary glues during the signal passing process of frame building by utilizing collision detection tiles. 
Only a subset of the possible glues are presented on tile sides which are potentially locations for tile growth. 
In particular, those are the collision detection glues ($c, \: cg, \: cr, \: cq, \: c_{dir}, \:sr,$), duple attachement glues of strength 2, and
frame growth glues ($f, \: x, \: x2, \:vdr, \:vdi$). 
The remaining glues are activated only on sides of tiles which are adjacent to tiles of the same layer or within the same gadget.
As such, we can strictly control which internal signals are activated by placement of new tiles and attachment of collision detection tiles

Third, signal deactivation which leads to tiles breaking apart from assemblies is strictly limited to collision detection and the outermost frame tiles.
The only time that any frame tile detaches from the assembly is in the case of the entire outermost, rectangular frame detaching from the remainder of the structure.
In this case, all the $x2$ glues which bind a tile to the frame layer prior are deactivated, and no possible tile types can attach to the frame.
Additionally, collision tiles themselves have no signals - their attachment and detachment is controlled by the frame growth tiles. Thus, they will bind in the same
manner, regardless if they have been joined to a frame tile prior.
}

\subsection{Assembly of layer 1 tiles}\label{sec:layer1}

First we will describe the tiles that initially attach to $\alpha$, starting the frame building process. We refer to the tiles that assemble the frame as the \emph{frame building tiles}.
\update{
Moreover, as the frame assembles, any tile (or tile as part of a gadget) that is initially bound to a tile of $\alpha$ is referred to as a \emph{layer 1} tile. 
A \emph{layer 2} tile is initially bound to a layer 1 tile or another layer 2 tile. 
We note that layer 2 tiles (gadgets) can in some cases be bound to a tile of $\alpha$, but their initial placement is not caused by such interactions. 
As shown in Figure~\ref{fig:basic-convex-frame}, many of the frame building tiles have the ability to be either layer 1 or layer 2 tiles; the only tiles which are restricted to layer 1 are those part of the $\cgSE$ gadget with $x'$ glues, and the only tiles which are restricted to layer 2 are $\cgSE$ gadget tiles with $x2'$ glues and doubling row tiles. 
}

We now describe how the the frame assembly begins.

\begin{observation} \label{obs:secExists}
Every input assembly has at least one convex corner of each type.  This follows directly from the definition of a shape formed by an assembly. 
For instance, starting from an arbitrary point on the perimeter of any valid $\alpha$, follow the perimeter in a CCW direction until returning at the original 
location, after having traversed the entire perimeter of $\alpha$.  
This must be possible because $\alpha$ is a connected 2-D shape and the perimeter of $\alpha$ must be a continuous line.  
In order to move in a full circle around the perimeter which is composed of unit squares, it is necessary to make at least one turn in each direction (N, W, S, and E), and each such turn occurs at a convex corner $\secon$, $\necon$, $\nwcon$, and $\swcon$, respectively.
\end{observation}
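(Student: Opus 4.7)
The plan is to prove the observation by an extremal-position argument: for each of the four convex-corner orientations, I will pick a specific lex-extremal tile of $\alpha$ whose local geometry forces a convex corner of that type. Since $\dom(\alpha)$ is finite and nonempty, each of the four lexicographic extrema is well-defined, which is all the extra structure the argument uses.

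To set up the $\secon$ case, I would let $x^\ast = \max\{x : (x,y) \in \dom(\alpha)\}$ and $y^\ast = \min\{y : (x^\ast,y) \in \dom(\alpha)\}$, and write $t$ for the tile at position $(x^\ast, y^\ast)$. By maximality of $x^\ast$ the position $(x^\ast+1, y^\ast)$ is not in $\dom(\alpha)$, so the east edge of $t$ lies in $PERIM(\alpha)$. By the choice of $y^\ast$ within the column $x=x^\ast$, the position $(x^\ast, y^\ast - 1)$ is also not in $\dom(\alpha)$, so the south edge of $t$ lies in $PERIM(\alpha)$ as well. These two perimeter edges meet at right angles at the SE corner of $t$, which is precisely a $\secon$ convex corner of $\alpha$.

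The remaining three cases follow by symmetric extremal choices. I would use (max-$x$, max-$y$) to obtain a $\necon$ corner, (min-$x$, max-$y$) to obtain a $\nwcon$ corner, and (min-$x$, min-$y$) to obtain a $\swcon$ corner, in each case running the same two-non-neighbor argument to expose the two perimeter edges meeting at the claimed corner of the extremal tile.

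There is no real obstacle to this argument; the only thing requiring care is bookkeeping---checking, in each of the four symmetric cases, that the two missing neighbor positions are indeed the two positions adjacent to the correct corner of the extremal tile, so that the orientation of the resulting convex corner matches the claim. I would also remark that the argument uses only finiteness and nonemptiness of $\dom(\alpha)$; neither connectedness nor hole-freeness (both of which are part of the definition of an input assembly) is actually needed for this particular observation, which makes the conclusion slightly more general than the stated hypothesis requires.
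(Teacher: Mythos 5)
Your proposal is correct, but it takes a genuinely different route from the paper. The paper argues via a perimeter traversal: follow the boundary of $\alpha$ CCW as a closed circuit of unit edges, observe that completing such a circuit forces at least one turn corresponding to each of the four cardinal headings, and identify each such turn with a convex corner of the corresponding type. You instead use an extremal-tile argument: the tile at $(x^\ast,y^\ast)$ with $x^\ast$ maximal and $y^\ast$ minimal within that column has no neighbor to its east or to its south, so its east and south edges both lie on the perimeter and meet in a $\secon$; the other three types follow from the symmetric extremal choices. Your version is the more rigorous of the two --- the paper's step ``each such turn occurs at a convex corner'' is not literally true for an arbitrary turn (a CCW traversal changes heading at concave corners as well, in the opposite rotational sense), so the traversal argument really needs the total-turning count of a simple closed curve to be made precise, whereas your argument needs nothing beyond finiteness and nonemptiness of $\dom(\alpha)$. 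Your closing remark is also accurate: neither connectedness nor hole-freeness is used, so the conclusion holds for any finite nonempty subset of $\Z^2$, which is strictly more general than what the paper's traversal argument (which implicitly needs a single well-defined outer boundary curve) establishes. What the paper's approach buys in exchange is that it introduces the CCW perimeter-walk picture that the rest of the frame construction is built on, so it serves an expository purpose beyond proving the observation itself.
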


\begin{enumerate}
    \item The first tile attachment to $\alpha$ must consist of a $\cgSE$ attaching to a $\secon$ corner of $\alpha$.  (See Figure~\ref{fig:basic-convex-frame}.) Since $\alpha$ may be non-rectangular, it may have multiple $\secon$ corners, but by Observation~\ref{obs:secExists} it must have at least one, and by the fact that every side of every tile on the perimeter of $\alpha$ exposes an $x$ glue, a $\cgSE$ can attach to the single tile of each $\secon$ by binding to its two exposed $x$ glues of strength $1$ each (for a total binding strength of $2$ in this temperature-$2$ system).  Therefore, one or more such attachments occur.
The tiles of a $\cgSE$ are described in Figure~\ref{fig:layer1-SEcg}. By Observation~\ref{obs:secExists}, at least one $\cgSE$ binds to $\alpha$, initiating the assembly of a frame. (Note that no other tiles or corner gadgets of layer 1 tiles are able to attach directly to $\alpha$ since they could only bind with at most strength 1.)

\begin{figure}[htp]
\centering
    \includegraphics[width=4in]{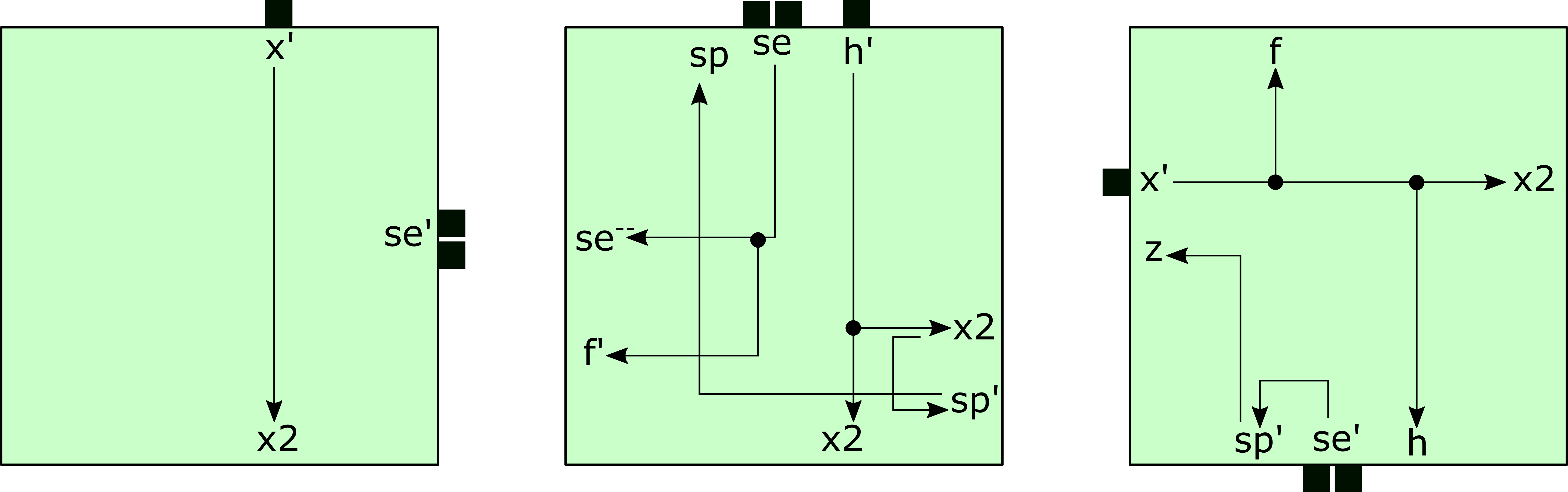}
    \caption{The 3 tiles that make up a layer 1 $\cgSE$ (a $\cgSE$ consisting of layer 1 tiles). 
    These tiles form a $\cgSE$ that can bind to two exposed $x$ glues. 
    (Left) A $\cgSE$ tile shown labeled $C$ in Figure~\ref{fig:secgCases}. 
    (Middle) A $\cgSE$ tile shown labeled $B$ in Figure~\ref{fig:secgCases}. 
    Note the $f'$ glue, which allows for the $\cgSE$ to bind as a duple in cases where it may be hindered by the growth of a standard path.
    (Right) A $\cgSE$ tile shown labeled $A$ in Figure~\ref{fig:secgCases}.  Note that after the tile a $\cgSE$ binds to $x$ glues, it triggers $x2$ glues (among others) to turn $\on$.}\label{fig:layer1-SEcg}
\end{figure}

    \item From each attached $\cgSE$, a single-tile-wide path grows CCW with each layer 1 tile binding to the $x$ glue on $\alpha$ and the output $f$ glue (logically representing ``forward'') of the immediately preceding tile, forming what we call a \emph{standard path}.  In this way, we say that the $f$ message is passed CCW through a standard path of the frame.  The \emph{forward} direction for the path follows the direction of its growth, and its \emph{backward} direction is the opposite.  The \emph{back} of a frame tile is the side which served as input during its initial attachment with the glue $f'$.  The \emph{front} is the side opposite to the back.  Similarly, the \emph{left} is the input side with the input glue $x'$, and the \emph{right} side is the side opposite to the left.  Note that we use these terms to talk about directions relative to individual tiles, and the cardinal directions N, E, S, W to talk about absolute directions. The tiles that assemble these standard paths are described in Figure~\ref{fig:layer1-east-edge-crawler}.

\begin{figure}[htp]
\centering
    \includegraphics[width=1in]{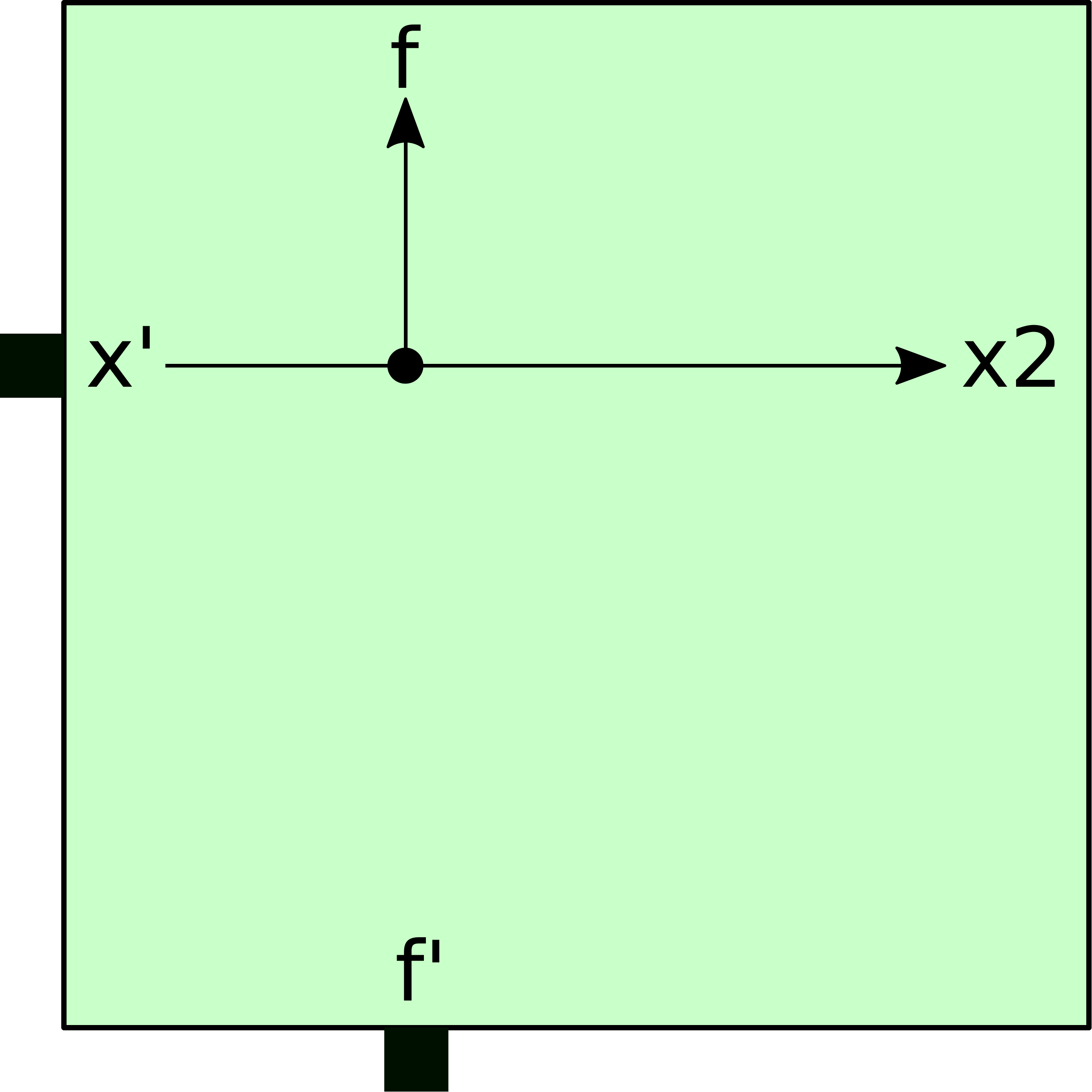}
    \caption{A tile that binds to an $x$ glue and an $f$ glue. The standard path that this tile assembles begins growth from a $\cgSE$. Then, as the standard path grows one tile attachment at a time, each consecutive tile that binds is located one tile location above the previously binding tile. That is, these tiles grow standard paths along the east edges of $\alpha$. Tile types that can grow standard paths along north, west, or south edges of $\alpha$ are obtained by appropriately rotating the tile type depicted in this figure. \vspace{-10pt}}\label{fig:layer1-east-edge-crawler}
\end{figure}

    \item A standard path of layer 1 can continue growing along straight edges of $\alpha$, with each tile using an $x'$ glue as input on its left side, and an $f'$ glue as input on its back side.  As each tile binds, it activates an $f$ glue on its front side, thus allowing the path to continue along straight edges.

    \item When arriving at a NE, NW, or SW convex corner, a simple 2-tile corner gadget (denoted by $\cgNE$, $\cgNW$, and $\cgSW$ respectively) allows for the initial layer to continue growing around the convex corner. 
    The tiles of a $\cgNE$ are described in Figure~\ref{fig:layer1-NE-gadget}.
    If a path arrives at an SE convex corner before a $\cgSE$ attaches, the middle and top tiles of Figure~\ref{fig:layer1-SEcg} are able to attach as a duple.

\begin{figure}[htp]
\centering
    \includegraphics[width=3in]{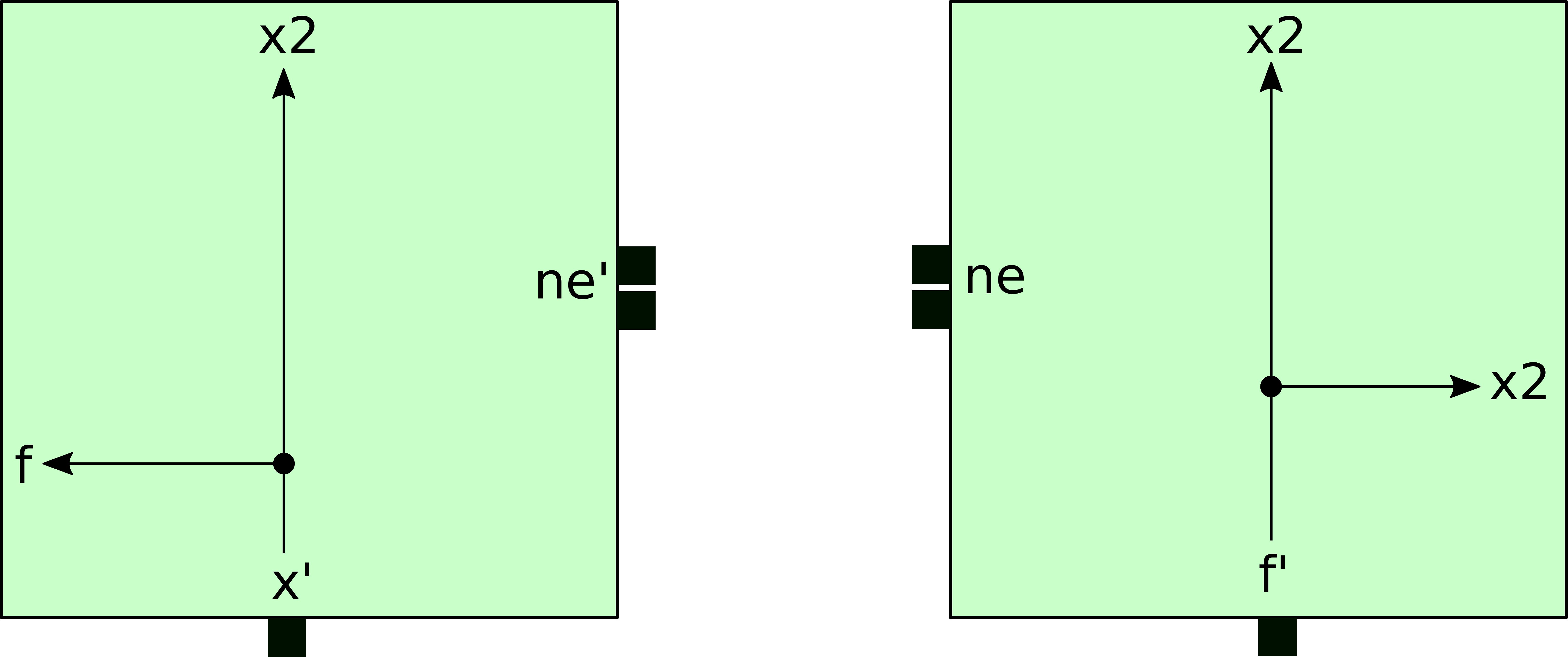}
    \caption{The 2 tiles that make up a $\cgNE$ that can bind to an $x$ glue and an $f$ glue. The 2-tile gadget formed by these tiles binds to an assembly effectively turning a $\necon$. The $f$ message now propagates from right to left. Moreover, once a $\cgNE$ binds, signals also trigger $x2$ glues to turn $\on$. To obtain a $\cgNW$ or a $\cgSW$, rotate the tiles shown in this figure appropriately CCW.\vspace{-10pt}}\label{fig:layer1-NE-gadget}
\end{figure}

    \item A standard path of tiles must eventually terminate in one of three situations:
        \begin{enumerate}
        \item It collides with (i.e. a tile of the path attaches so that its front side is immediately adjacent to another tile) some tile of $\alpha$ or a frame tile that is not part of a $\cgSE$. In this case, the standard path simply terminates.

        \item It grows from the west to a position adjacent to a $\cgSE$ at position 7 as shown in Figure~\ref{fig:secgCases}.  In this case, the standard path terminates since the tile $C$ in Figure~\ref{fig:secgCases} does not expose an $x$ glue on its south edge. \label{item:layer1-terminates-special-case}

\begin{figure}[htp]
\centering
    \includegraphics[width=1in]{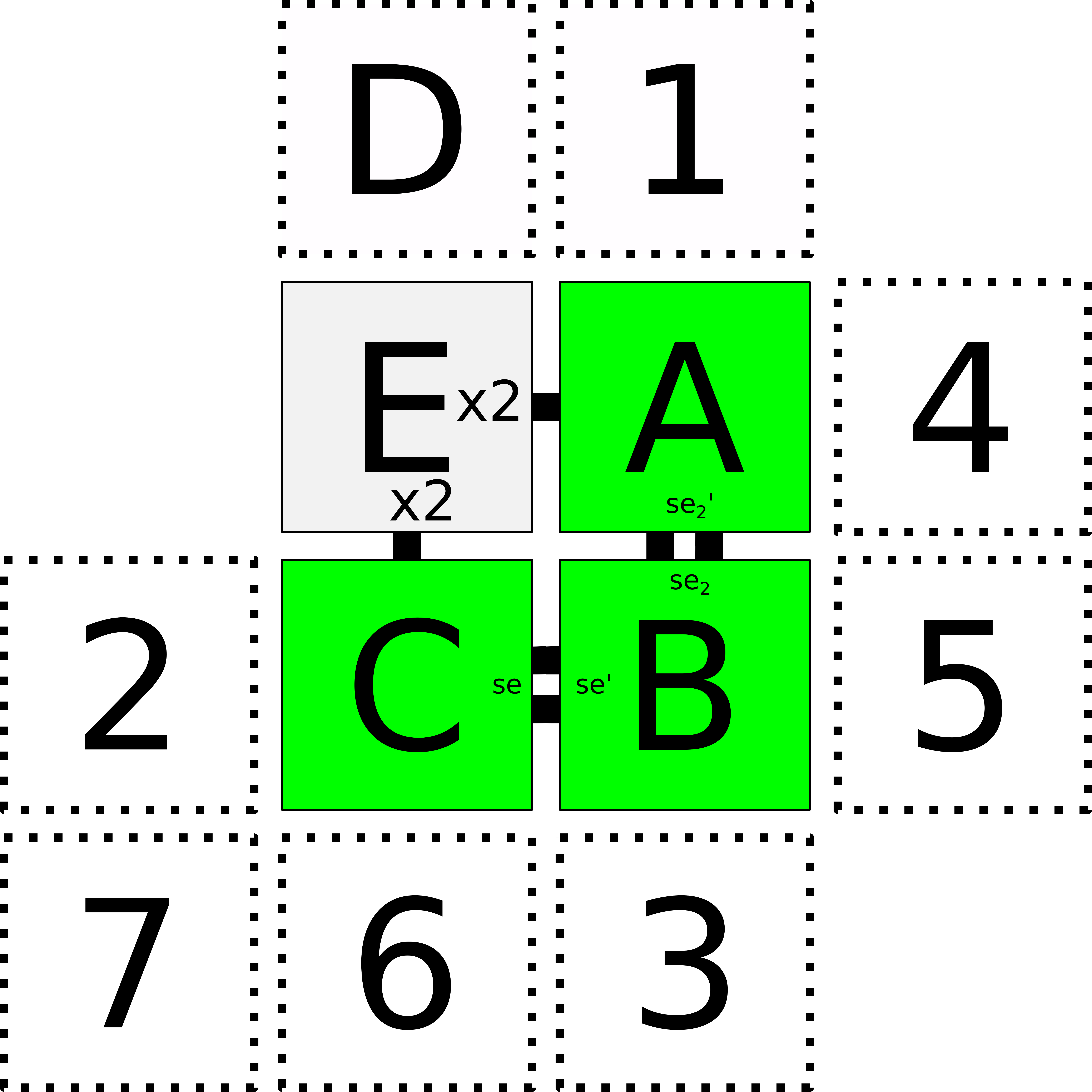}
    \caption{The tile locations in relation a $\cgSE$ (where the $\cgSE$ is represented by the green tiles).\vspace{-10pt}}\label{fig:secgCases}
\end{figure}

       \item It collides with a $\cgSE$ from position 2 (as shown in Figure~\ref{fig:secgCases}). Note that this collision with the $\cgSE$ is only considered to occur if the path was growing from the west. (The special case of it growing from the south, where the collision would be considered to be with the tile to the north, is discussed in more detail in Figure \ref{fig:vdr2}.)

        \end{enumerate}

\end{enumerate}

Layer 1 tiles assemble a frame by propagating an $f$ signal CCW around $\alpha$. 
Note that when $\alpha$ is not a rectangle, the growth of some standard paths of layer 1 may be blocked by tiles of $\alpha$. 
On the other hand, if $\alpha$ is a rectangle, then layer 1 tiles assemble a single tile wide rectangular frame around $\alpha$. In either case, layer 1 exposes $x2$ glues which will allow for additional frame tiles to attach, forming additional layers of the frame. The assembly of additional layers is given in the next section.

\subsection{Assembly of additional frame layers}\label{sec:frame-layers}

In this section, we will show how additional layers assemble. We will see that these layers assemble in a similar manner to layer 1, with the exception that the tiles that assemble additional layers are enhanced so that as a given layer is assembling, it is capable of ``detecting'' when it is not a complete rectangular frame layer. That is, each tile has been equipped with glues and signals that check to determine if any part of $\alpha$ or some frame layer has a tile located to its right side. When a layer detects that it is not a rectangle, it allows for an additional layer to attach by exposing more $x2$ glues. Moreover, each tile is enhanced with glues and signals so that when a standard path of tiles forms by starting from a $\cgSE$ and ending at a (possibly different) $\cgSE$, the tiles of this portion of the standard path ``guess'' that they belong to a rectangular frame layer and propagate a message that attempts to turn the left hand side $x2'$ glues of this frame layer $\off$, thus attempting to detach this frame layer. Later we will show that such a frame layer cannot detach without actually being a rectangular frame layer.

At a high level, each consecutive layer that grows is able to do so because a previous layer has detected that it is not a rectangular layer, and with each additional layer more and more concavities  are ``filled in''.
(See Figures~\ref{fig:new-frame2-full-example-begin} and \ref{fig:new-frame2-full-example}.)
We will show that after some finite number of layers form, the last layer determines that it is a rectangle (due to the lack of a signal specifying otherwise) and propagates a detach message that eventually causes this last layer to disassociate. Here we see the use of disassociation to detect a global property of a frame layer (namely, that it is a rectangle). We will then use the fact that a rectangle has a single $\secon$ to ``elect a leader tile''. This leader election is described in Sections~\ref{sec:main-leader} and \ref{sec:leader-details}. Now, we present the assembly of additional layers.

\begin{figure}[htp]
    \centering
    \subfloat[Early stages of a full example showing the growth of the frame around $\alpha$.  Light grey: $\alpha$, Green: $\cgSE$, Light blue: standard path.]{%
    \includegraphics[width=2.4in]{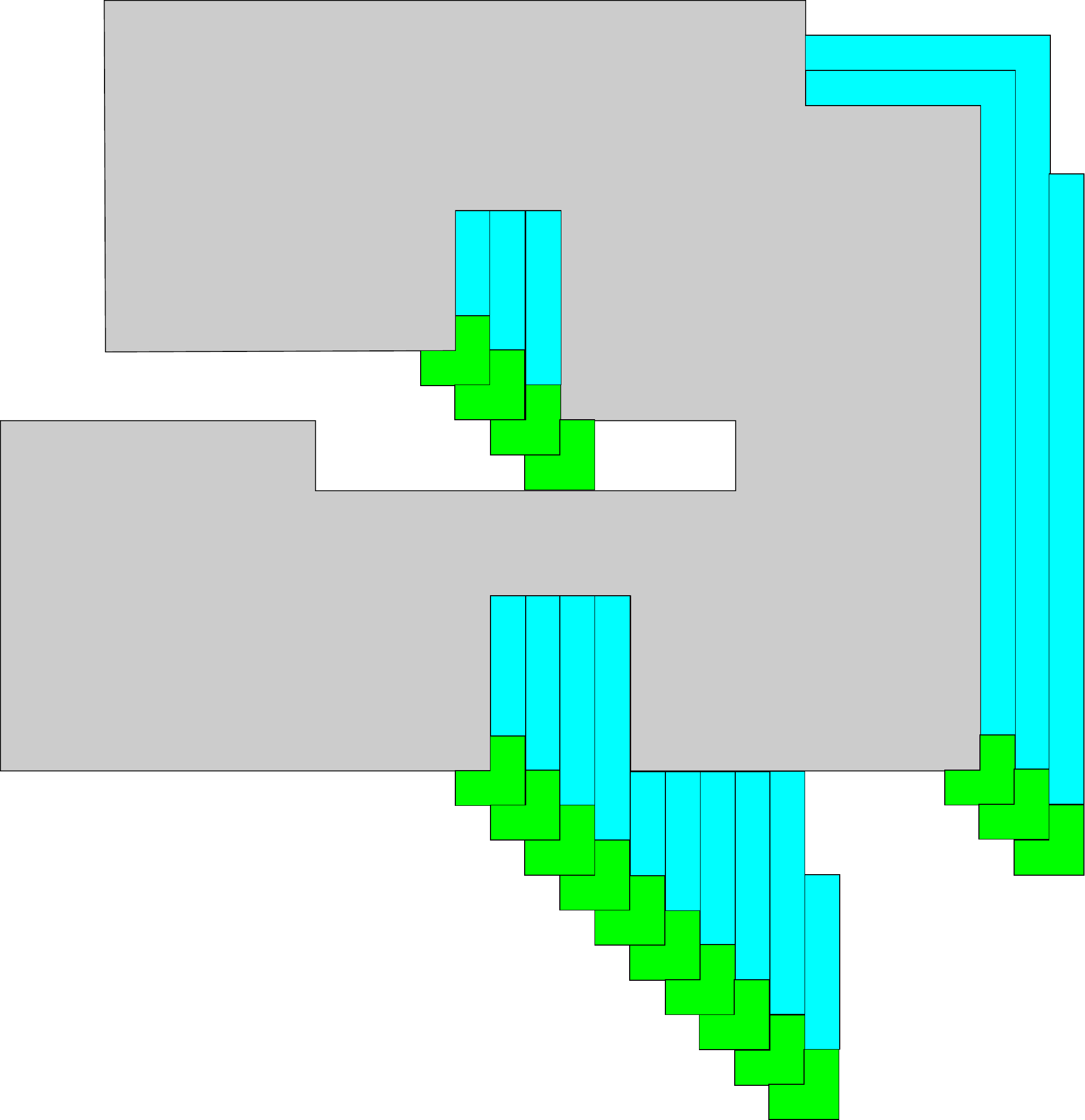}%
    \label{fig:new-frame2-full-example-begin}%
    }%
    \qquad\quad%
    \subfloat[The full frame for the example of Figure~\ref{fig:new-frame2-full-example-begin} showing the growth of the frame around $\alpha$ until it terminates in the first rectangular layer. Doubling rows are shown as dark blue.]{%
    \includegraphics[width=3.0in]{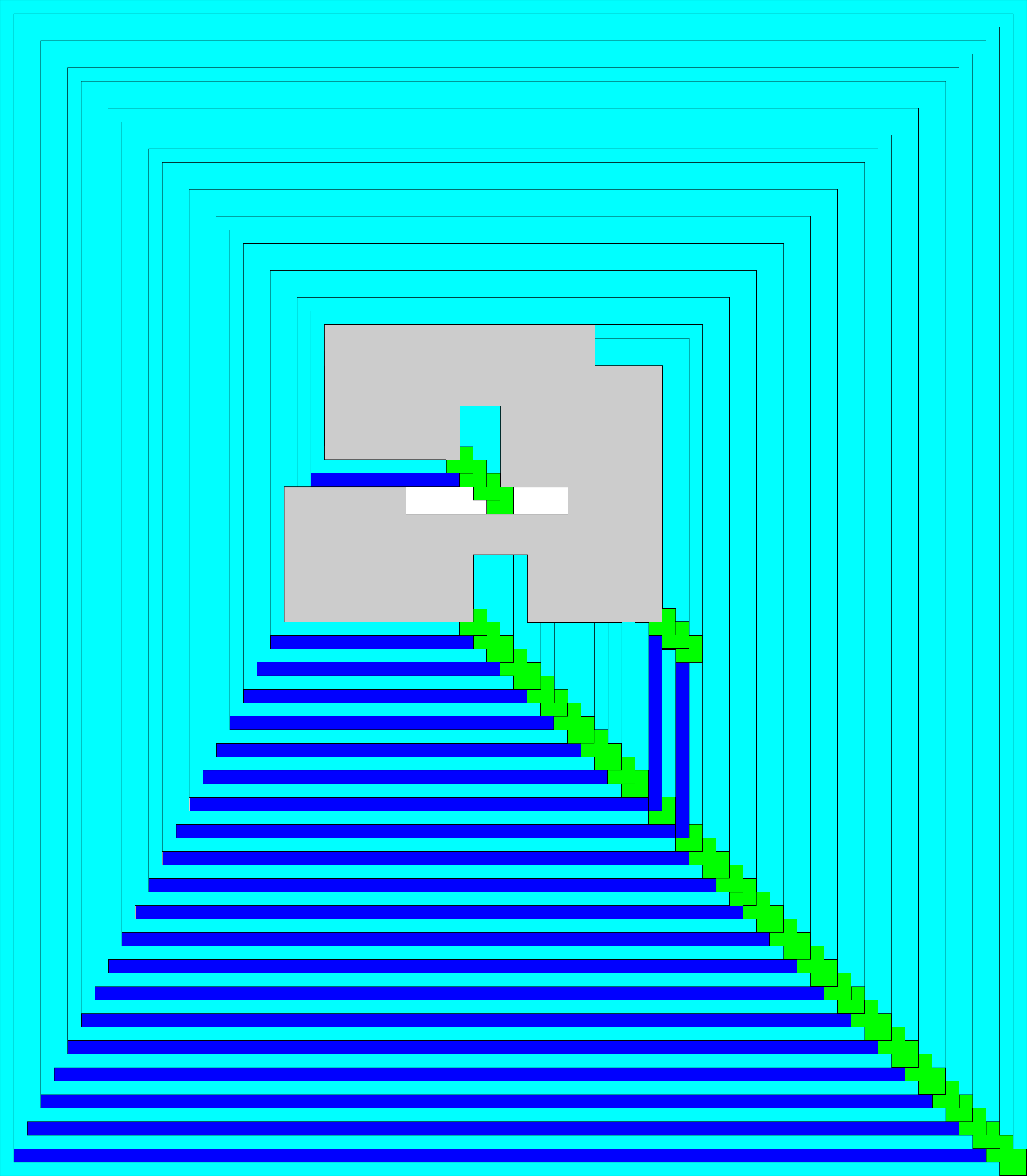}%
    \label{fig:new-frame2-full-example}%
    }%
    \caption{Frame growth examples}
    \label{fig:frame-grow-examples}
\end{figure}



\begin{enumerate}
	\item The first tiles that attach to layer 1 tiles may fall into the two following cases. Note that the first case always happens, and the second case may or may not, depending on the shape of $\alpha$.
	\begin{enumerate}
		\item The tile attachment to layer 1 consists of a 3-tile $\cgSE$ attaching to a $\cgSE$ that exposes $x2$ glues. The signals that pertain to this case are described in Figure~\ref{fig:SE-corner-gadget-tiles}.
		\item Some standard row terminated growth using $x$ glues as described in Case~\ref{item:layer1-terminates-special-case} above. 
        In this case, the tile attachment to layer 1 consists of a singleton tile attaching to a $\cgSE$ at location 6 in Figure~\ref{fig:secgCases} relative to this $\cgSE$ of layer 1. Such a tile is obtained by enhancing an appropriately rotated version of the layer 1 tile shown in Figure~\ref{fig:layer1-east-edge-crawler} with the signals and glues shown in Figure~\ref{fig:south-edge-crawling-tile}.
         For this enhancement, we do not duplicate the $f^\prime$ glue. 
        We rotate the tiles in  Figure~\ref{fig:layer1-east-edge-crawler} and Figure~\ref{fig:south-edge-crawling-tile} so that the active $f^\prime$ glues belong to the same edge. We obtain the a new tile by including all of the signals and glues of both tiles without duplicating glues that the two tiles have in common. For example, the resulting tile will have a single active $f^\prime$ glue and a single latent $f$ glue. 
	\end{enumerate}
	
\begin{figure}[htp]
\centering
    \includegraphics[width=4.5in]{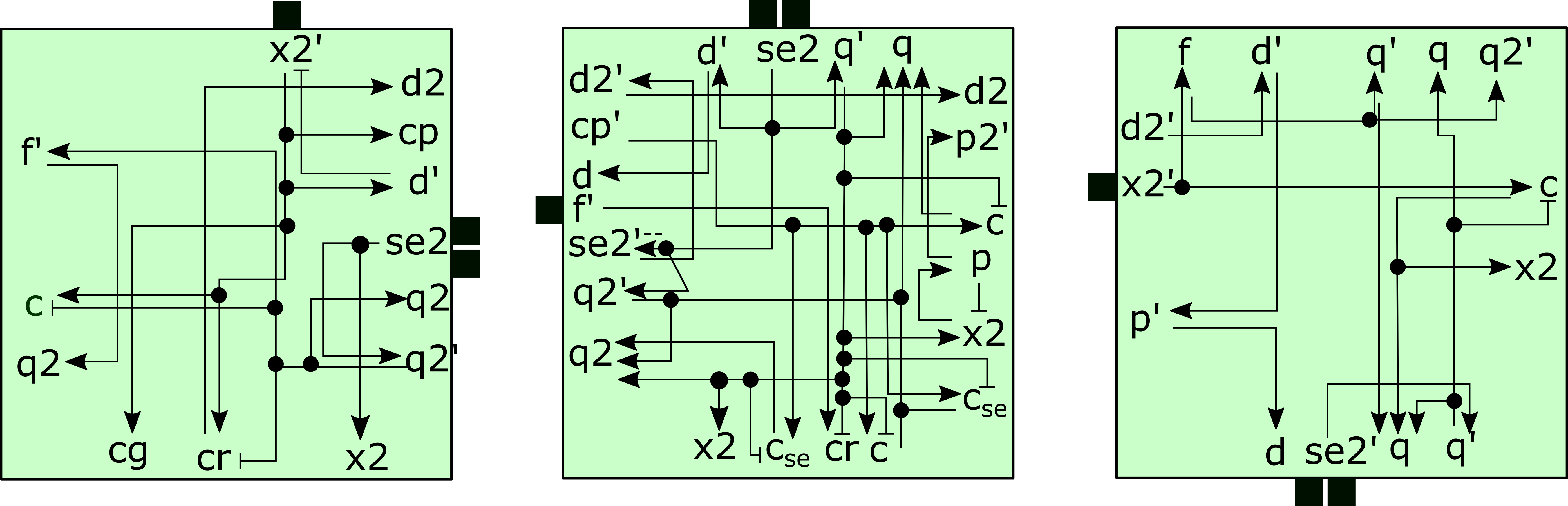}
    \caption{Signals and glues for the 3 tiles that make up a layer 2 $\cgSE$. 
    Here, we again only depict some of the signals of these tiles which correspond to the $d$, $f$, $q2$, $q$.
     We note that these four signals are present on the layer 1 tiles as well.
    Later, additional glues and signals will be added.}\label{fig:SE-corner-gadget-tiles}
\end{figure}


\begin{figure}[htp]
\centering
\subfloat[Signals and glues for a tile that grows standard paths along the south edge of a frame or $\alpha$.
 This includes the glues and transitions necesseary to carry out the $f$, $d$, $q$ signals.
 Note the $sr$ glue, which allows for a detachment duple to attach to this tile and a $\cgSE$]{%
 \label{fig:south-edge-crawling-tile}
\makebox[2in][c]{\includegraphics[width=1.5in]{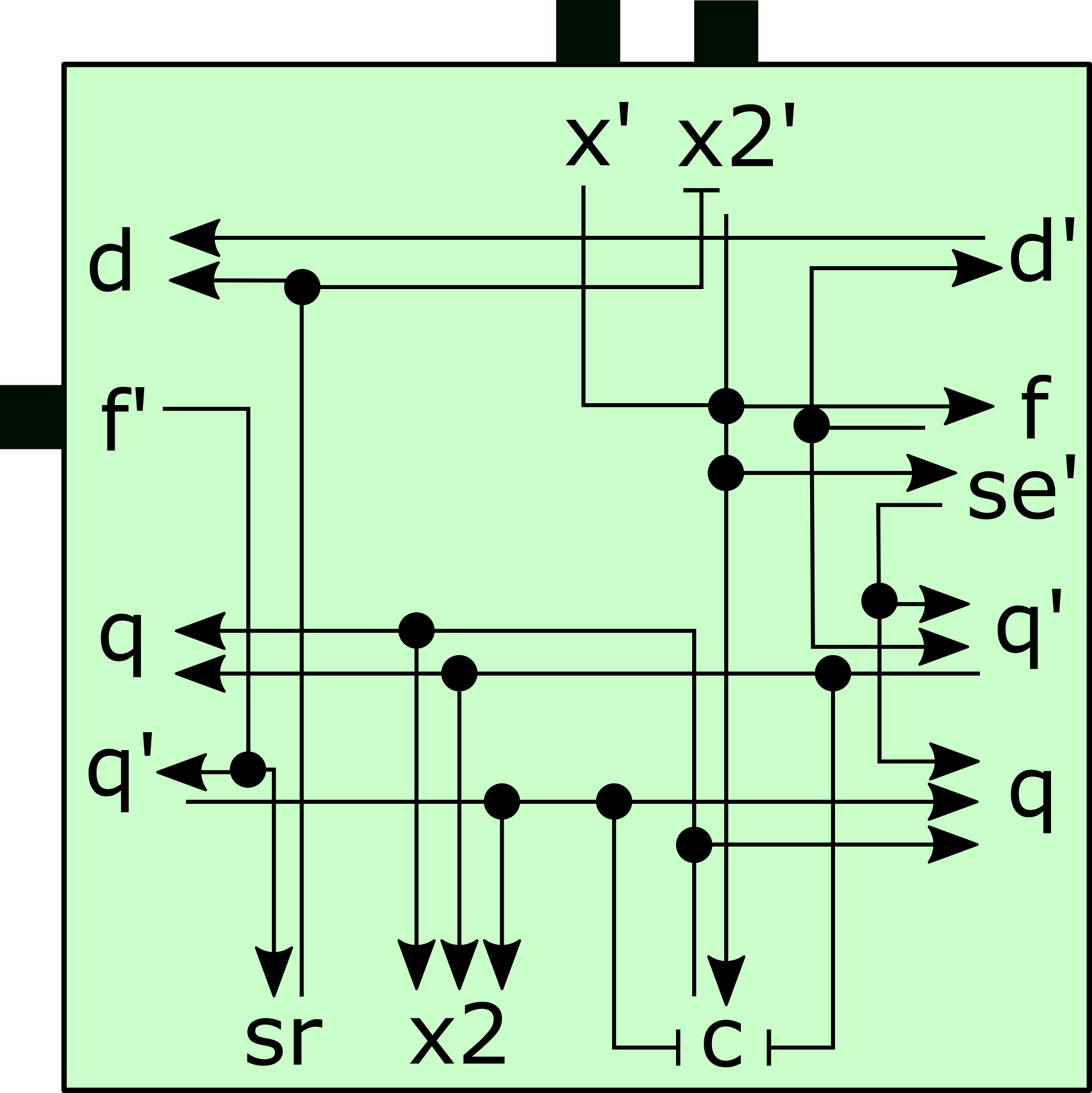}}
}
\qquad%
\subfloat[A frame tile which grows along a northern edge and which has been enhanced to detect collisions via collision detection tiles (see Figure~\ref{fig:collision-detection-tiles}) and pass the \emph{quit} (i.e. $q$) message as needed.
  Also, once it is able to determine it is on a quitting path (via attachment to the $c$ glue or reception of a $q$ message), it deactivates its collision detection $c$ glue and activates the $x2$ glue which will allow another layer to grow over it.
   This tile is also enhanced to pass the \emph{detach} (i.e. $d$) message CW. Note that this tile deactivates the left $x2'$ glue when the $d$ message is received, and also that the $d$ and $d'$ glues are strength-2 glues.\label{fig:new-frame2-tile}]{%
\makebox[3.8in][c]{\includegraphics[width=1.5in]{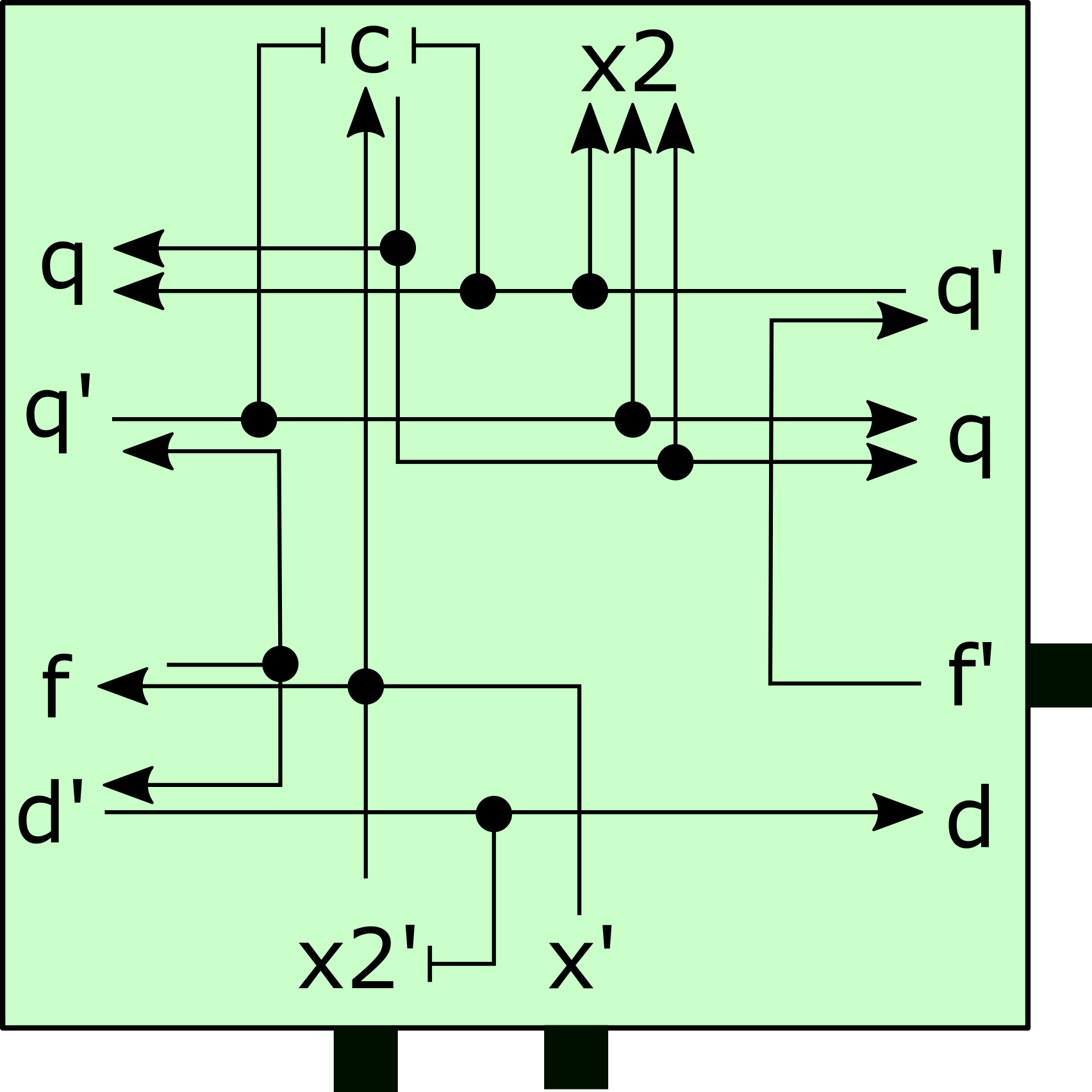}}%
}
\caption{Growth tiles.}
\end{figure}

    \item From each attached layer 2 tile, a single-tile-wide path grows CCW with each tile binding to the $x2$ glue of a previous layer or $x$ glue on $\alpha$ and the output $f$ glue of the immediately preceding tile. As in the assembly of layer 1, we call this the formation of a \emph{standard path} and use analogous notions of back, front, left and right sides. Hence, standard paths grow along straight edges of either $\alpha$ or previous layers that have exposed $x2$ glues on their right sides, with each tile using an $x'$ or $x2'$ glue as input on its left side, and an $f'$ glue as input on its back side.  As each tile binds, it activates an $f$ glue on its front side, thus allowing the path to continue along straight edges. The glues and signals pertaining to the case where a standard path tile binds to an $x2$ glue are described in Figures~\ref{fig:south-edge-crawling-tile} and~\ref{fig:new-frame2-tile}.


	\item When arriving at a NE, NW, or SW convex corner that exposes two $x2$ glues, a simple 2-tile corner gadget (denoted by $\cgNE$, $\cgNW$, and $\cgSW$ respectively) allows for continued growth around the convex corner while continuing to pass the $f$ message. The signals for a $\cgNE$ are described in Figure~\ref{fig:NE-corner-gadget-tiles} and those for $\cgNW$ and $\cgSW$ gadgets are similar but rotated appropriately.

\begin{figure}[htp]
\centering
    \includegraphics[width=4in]{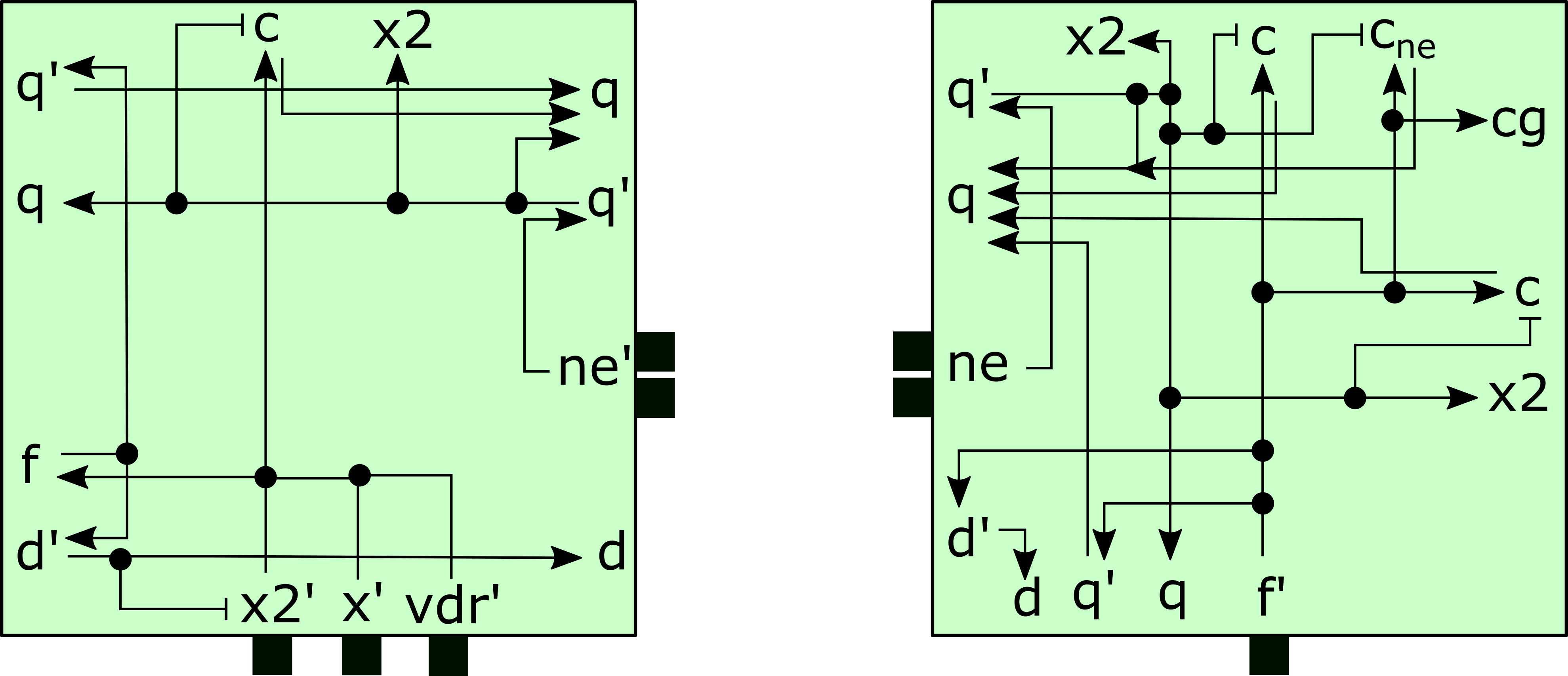}
    \caption{The 2 tiles that make up a $\cgNE$ that can bind to $x2$, $x$, and $vdr$ glues along a path initiated by an $f$ glue.
    Note, the $vdr'$ is unique to the $\cgNE$ to be able to attach to vertical doubling rows.
    }\label{fig:NE-corner-gadget-tiles}
\end{figure}

	\item As the tiles of a layer bind to $x2$ glues, they trigger $c$ glues to turn $\on$. Corner gadgets also trigger $c_{dir}$ glues to turn $\on$, where $dir$ can be $se$, $sw$, $ne$ or $nw$. During the assembly of a layer, if any of the collision detection singleton tiles or duples depicted in Figures~\ref{fig:collision-detection-tiles},~\ref{fig:collision-detection-corner-cases}, or~\ref{fig:standard-row-nonquitting-cgse} bind, then the frame layer, containing the tile to which they bound, cannot be rectangular. These tiles are considered the \emph{collision detection} tiles. For examples of various cases where one of these singleton or duple tiles bind, see Figures~\ref{fig:new-frame2-concave-collision1},~\ref{fig:new-frame2-duple-detect-collision},~\ref{fig:new-frame2-single-detect-collision}, and~\ref{fig:special-collision-detection}.

\begin{figure}[htp]
\centering
    \includegraphics[width=3.5in]{images/collision-detection-tiles.png}
    \caption{The tiles which, individually or after forming duples, are able to bind to any frame layer which  is not rectangular.  Note that there are also analogous versions of all of these tiles where the $x2'$ glues are replaced by $x'$ glues so that collisions between frame layers and $\alpha$ as well as between frame layers can be detected. The analogous versions of these tiles are not depicted. They are obtained by replacing all of the $x2^\prime$ glues in this figure with $x^\prime$ glues.}\label{fig:collision-detection-tiles}
\end{figure}



\begin{figure}[htp]
\centering
\subfloat[While the collision detection duples/tiles of Figure~\ref{fig:collision-detection-tiles} detect collisions with $\alpha$ and frame layers that present $x2$ glues, the tiles depicted here detect collisions of frame layers with other frame layers using $c$ and $c_{dir}$ glues where $dir$ is either $se$, $sw$, $ne$, or $nw$. See Figure~\ref{fig:special-collision-detection} for examples of how these detection duples/tiles are used.\label{fig:collision-detection-corner-cases}]{%
\makebox[4in][c]{\includegraphics[width=3.5in]{images/collision-detection-corner-cases.png}}
}%
\qquad%
\subfloat[Tiles which detects the case where a standard row terminates before overlapping with $\cgNE$, $\cgNW$, and $\cgSW$ gadgets. The analog for this case in a $\cgSE$ causes a vertical doubling row to form; its analog of this tile is shown in Figure~\ref{fig:vdr-tiles1}.\label{fig:standard-row-nonquitting-cgse}]{%
\makebox[1.8in][c]{%
    \includegraphics[width=.85in]{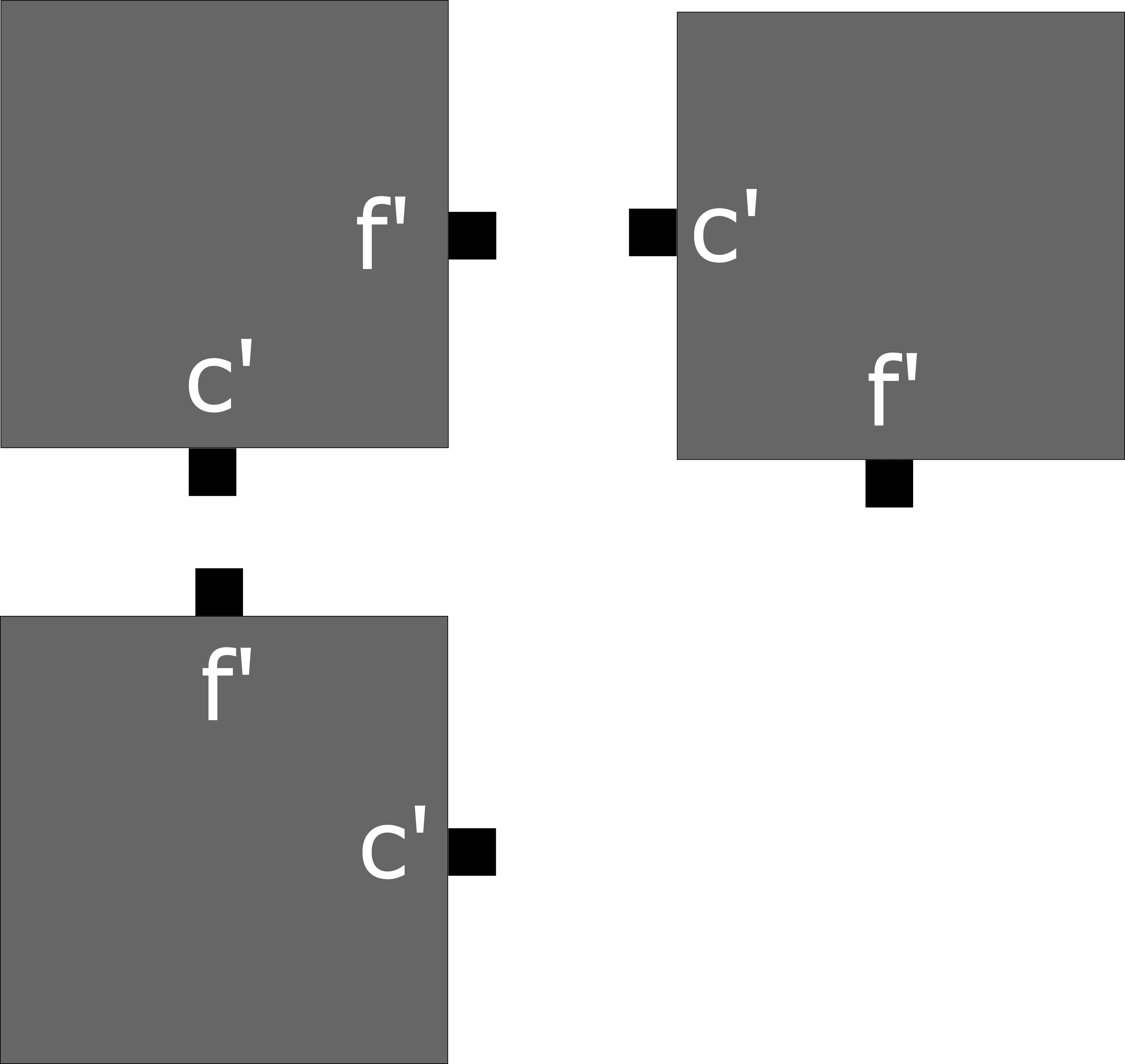}}
}
\caption{Collision detection tiles.}
\label{fig:collision-detection-multi}
\end{figure}


	\item Any tile belonging to a frame layer that binds via a $c$ or $c_{dir}$ glue begins the forward and backward propagation of a \emph{quit} ($q$) message.  It also turns off its $c$ glue to which the collision detection tile attached (causing it to detach) and turns on an $x2$ glue on that side, thus allowing for another frame layer tile to bind via this $x2$ glue. Every tile in the path which receives the $q$ message turns off the $c$ glue on its right side and turns on an $x2$ glue there, and continues the propagation of $q$ in the CW and CCW directions.

\begin{figure}[htp]
\centering
    \includegraphics[width=5.0in]{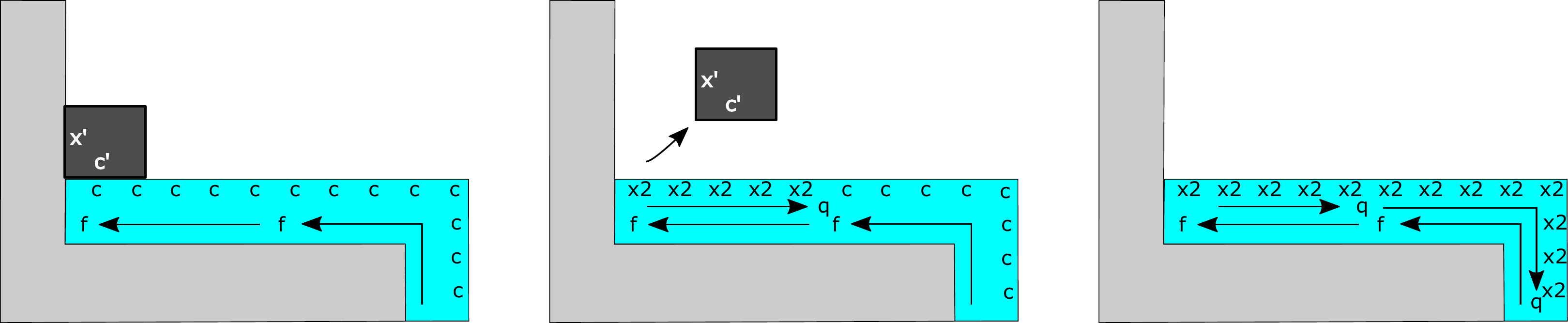}
    \caption{Depiction of a path colliding into a concavity, which is detected by the attachment of a collision detection tile.  Upon connecting, the collision detection tile initiates a $q$ message which causes all outward $c$ glues to deactivate and $x2$ glues to turn on.  This also causes the collision detection tile to fall off.}
    \label{fig:new-frame2-concave-collision1}
\end{figure}

\begin{figure}[htp]
    \centering
    \subfloat[An example of a collision detection by a collision detection duple (which attached, initiated the $q$ message, then detached).]{%
    \label{fig:new-frame2-duple-detect-collision}%
    \includegraphics[width=2.0in]{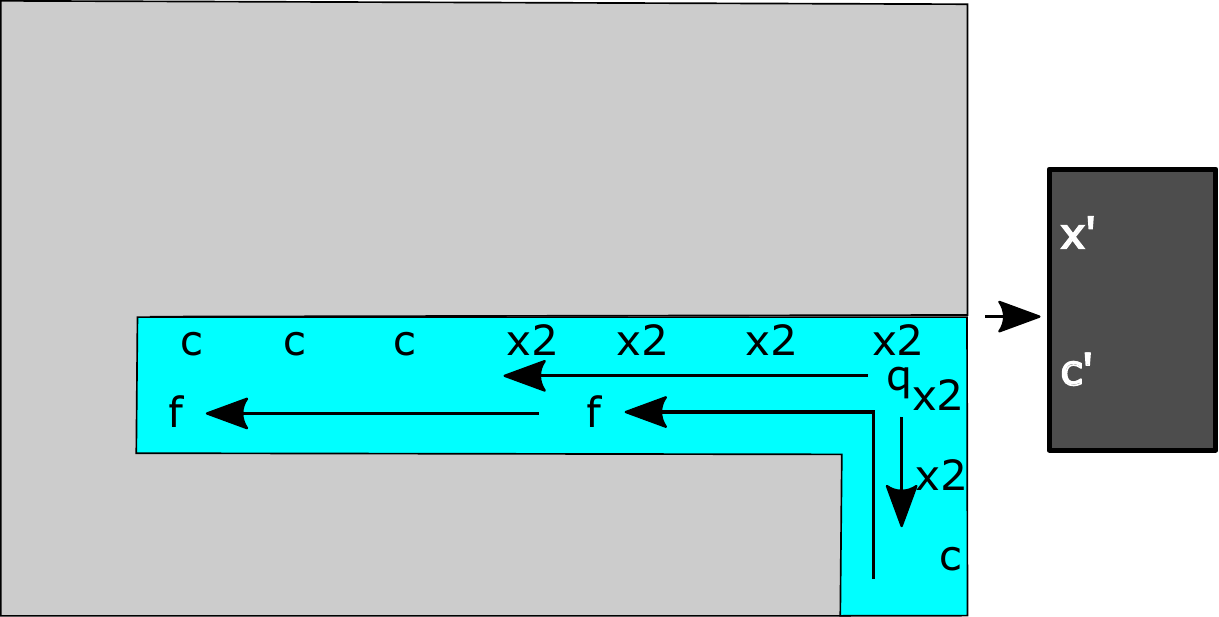}
    }%
    \qquad
    \subfloat[An example of a collision detection by a collision detection tile (which attached, initiated the $q$ message, then detached), while the path is still growing into a concavity.]{%
    \label{fig:new-frame2-single-detect-collision}%
    \includegraphics[width=2.0in]{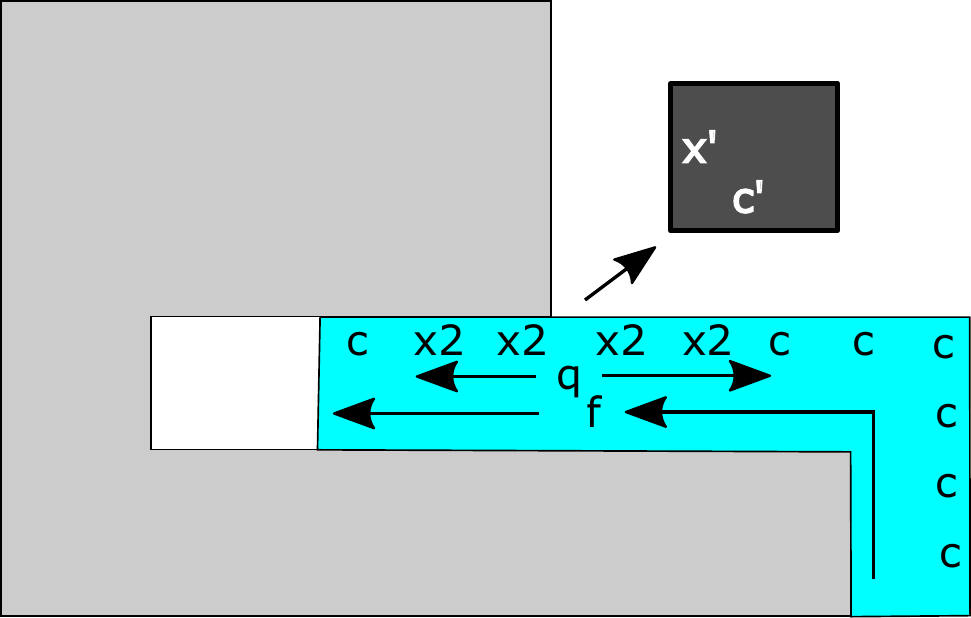}%
    }%
    \caption{Typical collision detection examples}
    \label{fig:col-detect}
\end{figure}

\begin{figure}[htp]
\centering
  \subfloat[][Collision detection enabled by a collision detection tile like the one in Figure~\ref{fig:standard-row-nonquitting-cgse}.]{%
        \label{fig:new-frame2-special-detect-collision}%
	        \includegraphics[width=1.5in]{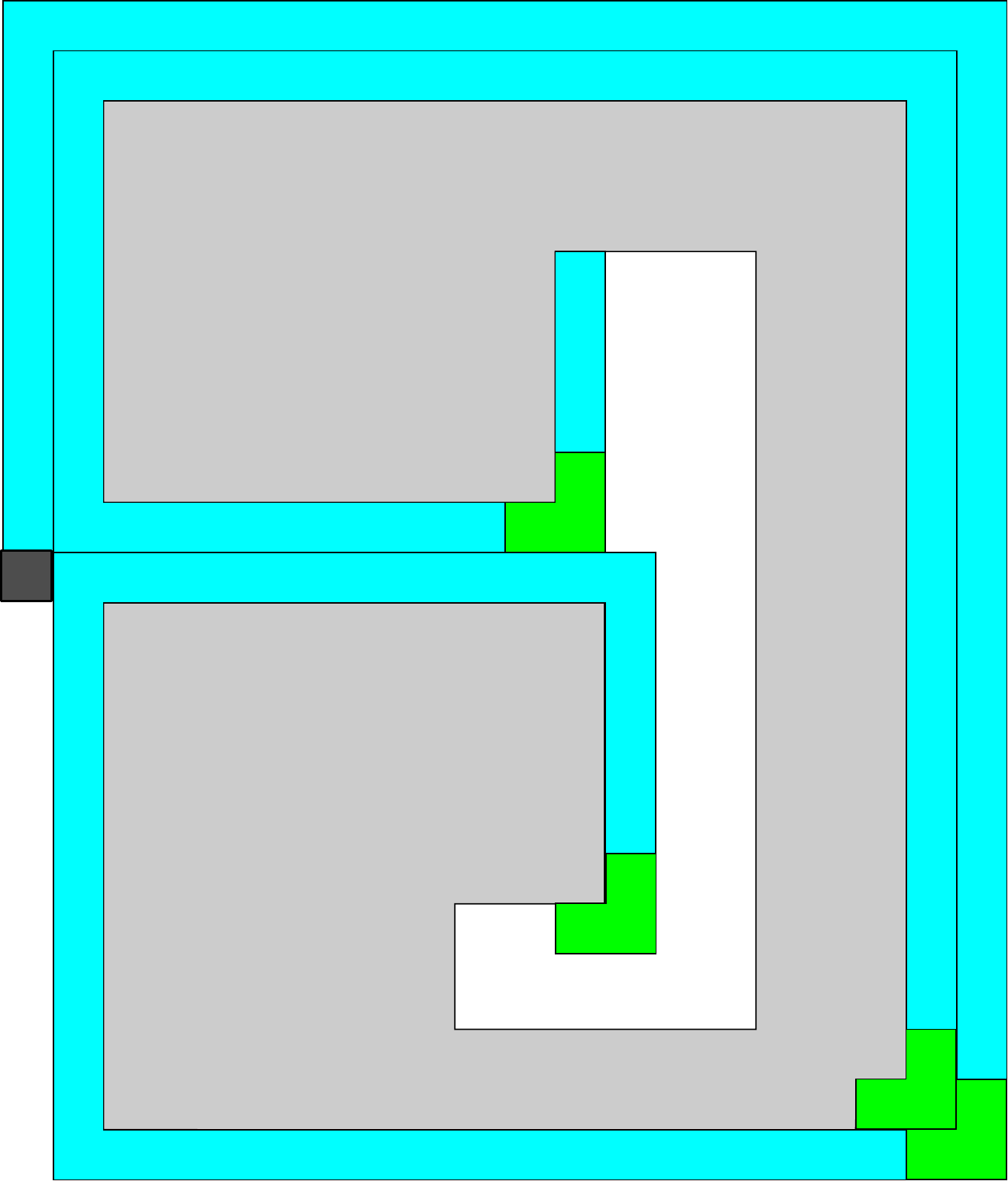}
        }%
        \quad\quad
  \subfloat[][Collision detection enabled by a collision detection duple which attaches to a $\cgNW$, similar to those shown in Figure~\ref{fig:collision-detection-corner-cases}.]{%
        \label{fig:new-frame2-special-detect-collision2}%
        		\includegraphics[width=1.5in]{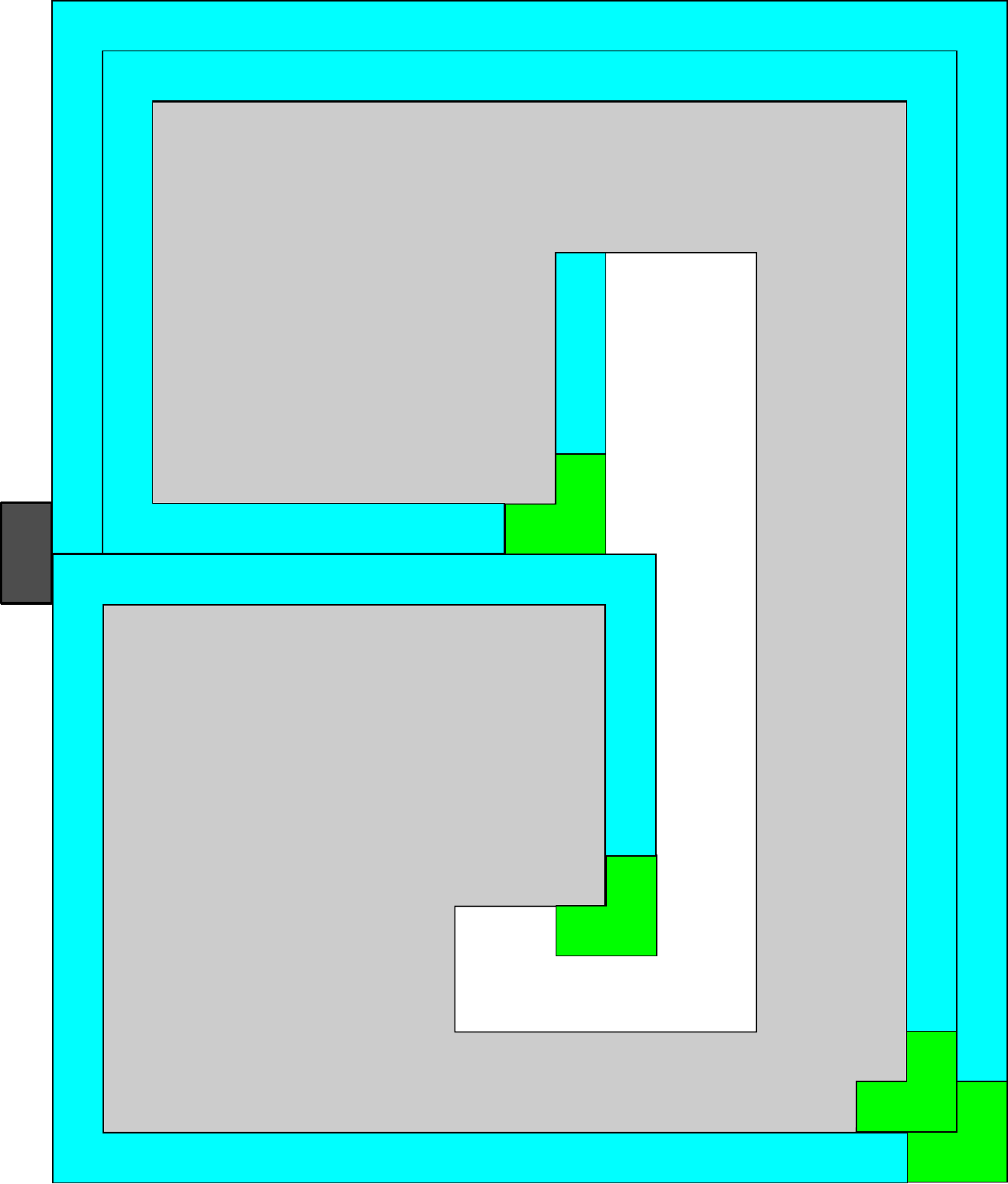}
        }%
    \qquad%
  \subfloat[A schematic depiction of the case where a doubling row forms.\label{fig:doubling-row-example}]{\includegraphics[width=1.5in]{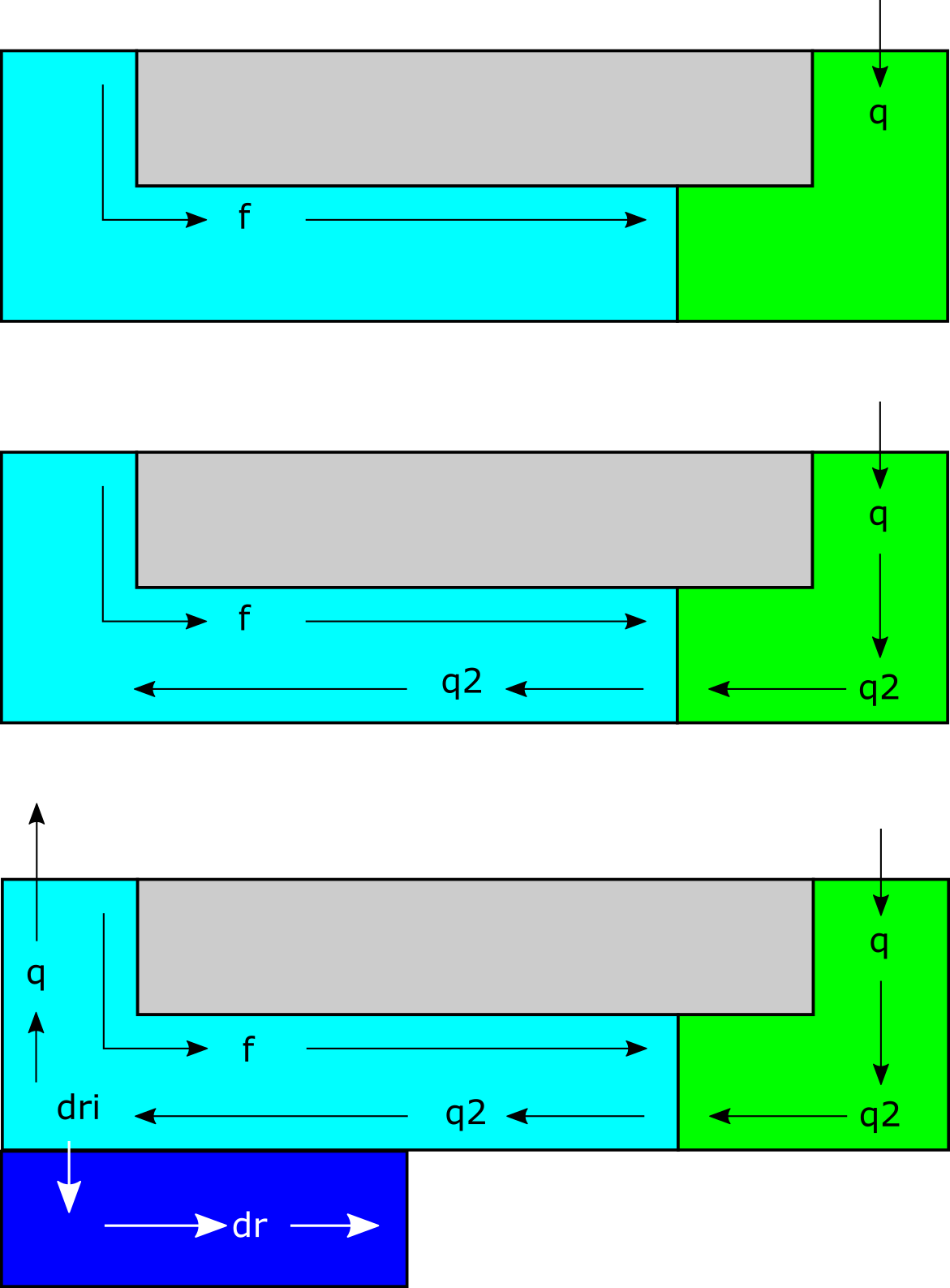}}%
  \caption{Examples of special cases of collision detection and doubling row formation.}
  \label{fig:special-collision-detection}
  
\end{figure}

	\item The $q$ message will then pass through all tiles on that path back to the $\cgSE$ which initiated the path.  
    That $\cgSE$ will now have the information that the path was not a rectangle (i.e. it encountered some concavity), 
    so it will activate glues which allow another $\cgSE$ to attach to its back and begin growth of a new layer.
    \update{
    If a standard path grows into the westernmost tile of the $\cgSE$, then this standard path will propagate the $q$ message via the $q2$ signal  
    to initiate growth of a \emph{doubling row} to its south. In Figure~\ref{fig:doubling-row-example}, a doubling row is shown in dark blue.
    The tiles which comprise a doubling row are shown in Figure~\ref{fig:doubling-row-tiles}.
     
    To initiate the growth of a doubling row, when a $\cgSW$ receives a $q2$ message it turns $\on$ a strength-2 $dri$ (\emph{doubling row initiator}) glue.
    This allows for the growth of a doubling row which will grow along the south border of the layer.
    This in effect makes the south of the layer 2 rows wide, aiding in the production of a rectangular frame.
    The effect of horizontal doubling rows are further explained in Section~\ref{sec:frame-proof}.}

\begin{figure}[htp]
\centering
    \subfloat[A depiction of the tiles and signals which enable the growth of a doubling row.]{%
     \label{fig:doubling-row-tiles}    
        \includegraphics[width=2.0in]{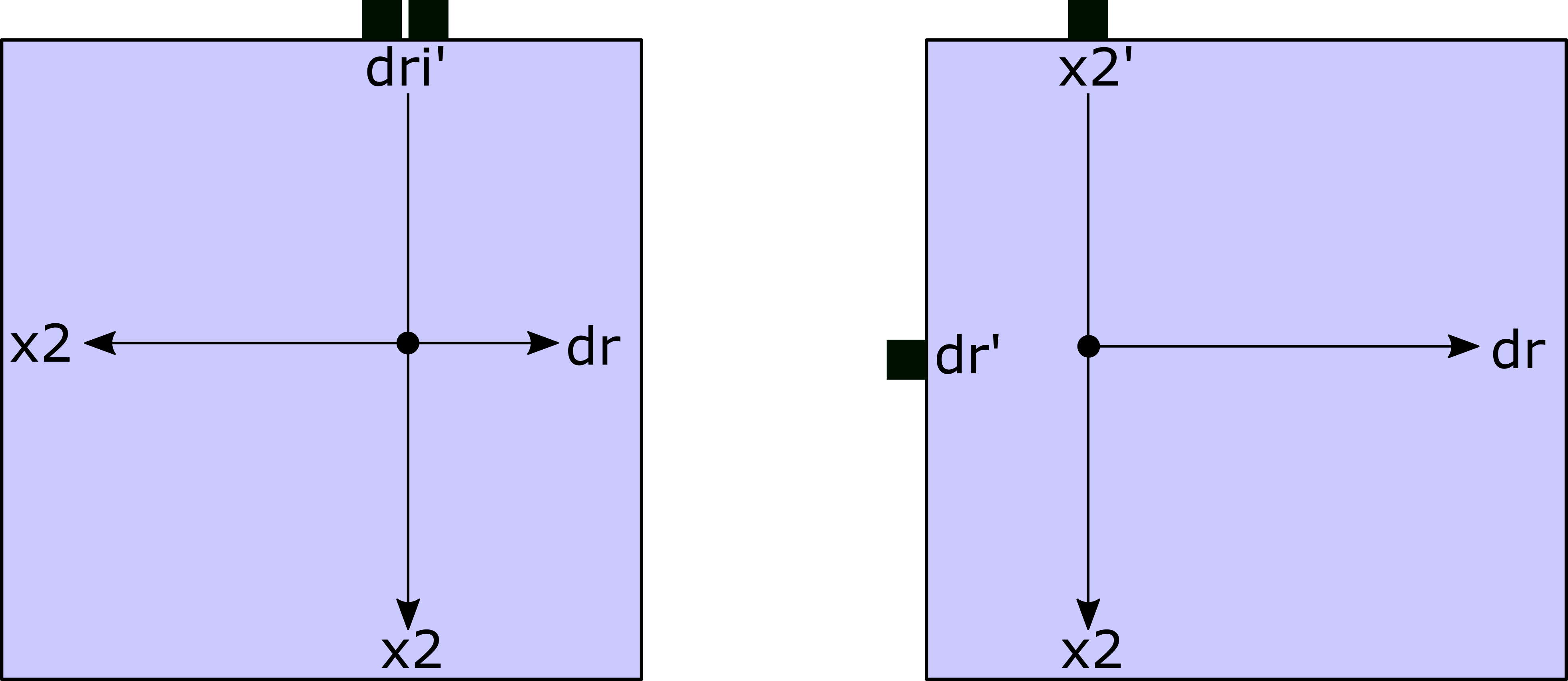}
    }
    \qquad
    \subfloat[The tiles which are part of a $\cgSW$ gadget. 
              Note that the $q2$ glue must activate the doubling row before the $q$ signal is sent north.
              This ensures that the doubling row attaches before any possible layer growing on top of it.]{%
    \label{fig:cgSW}    
    \includegraphics[width=2in]{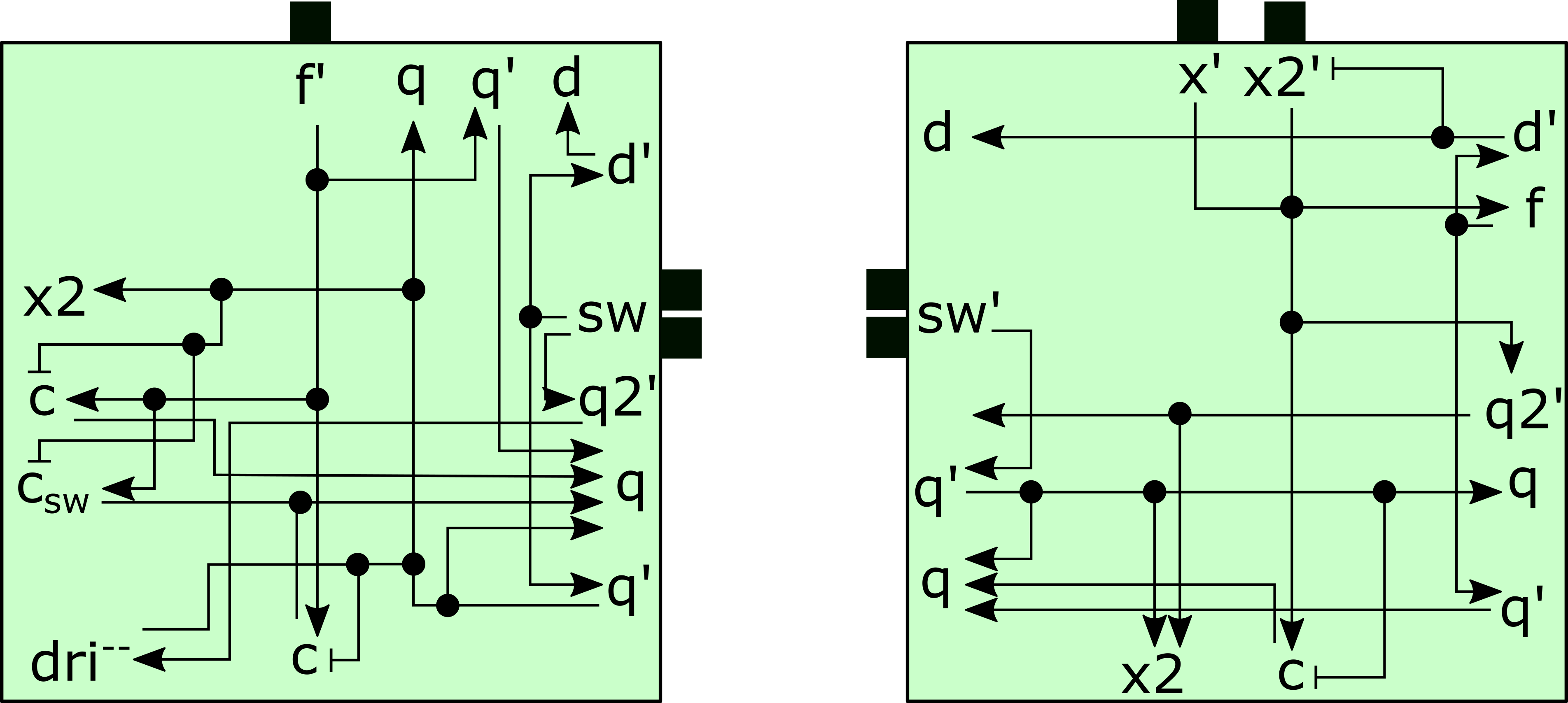}
    }
    \qquad
    \subfloat[The glues specific to passing the $q2$ message CW through south standard growth tiles.]{%
    \label{fig:q2-grow-signals}    
    \includegraphics[width=.85in]{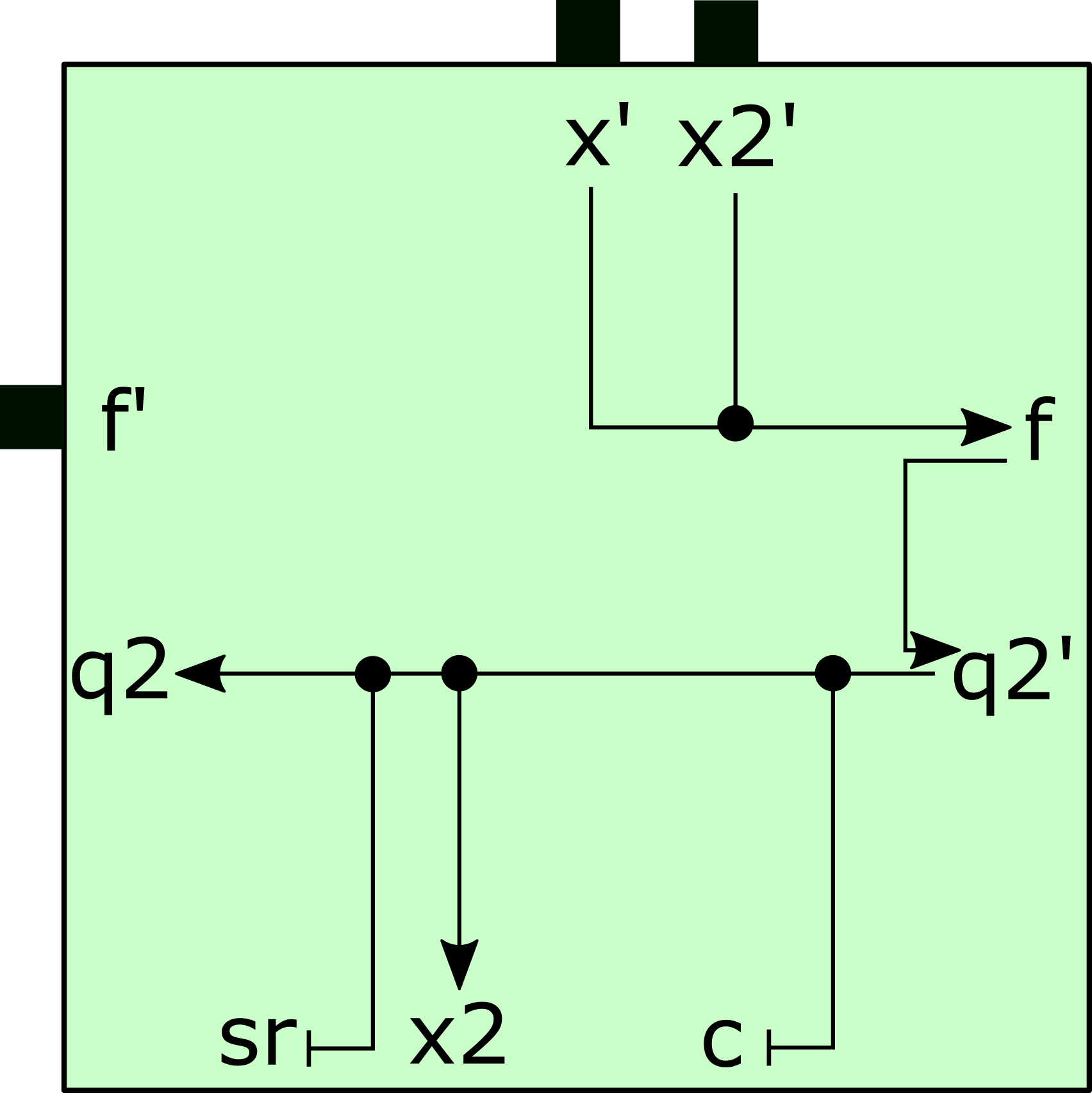}
    }
    \caption{Tiles involved in horizontal doubling row formation}
    
\end{figure}


	\item 
    \update{ 
        A \emph{vertical doubling row} is a layer of tiles which form to the east of some $\cgSE$ in two specific instances.
        First, if a row of tiles grows along eastern edges of some frame layer and places a tile at location 7 of some other $\cgSE$ in Figure~\ref{fig:secgCases},
        then a vertical doubling row forms as described in Figure~\ref{fig:vdr2}.
        Second, if a $\cgSE$ is `overtaken' by a growing southern standard path growing as seen in Figure~\ref{fig:vdr1}, a vertical doubling row forms.
        The tiles which initiate and generate a vertical doubling row are provided in Figure~\ref{fig:vdr-base-tiles}, and the signals which allow for this in a $\cgSE$ are show in Figure~\ref{fig:vdr-growth}.
        The purpose of these vertical doubling rows are to allow for a stack of $\cgSE$ to overtake another to correct a situation where a frame is not rectangular.
        The effect of vertical doubling rows are further explained in Section~\ref{sec:frame-proof}.

    }

     \begin{figure}[htp]
        \centering
          \subfloat[][Tiles which are placed in location 3 (i.e., below south-western tile in $\cgSE$) as part of the vertical doubling rows. 
                        (left) This tile detects if the $\cgSE$ been overtaken by a growing standard path (i.e., case 2 as demonstrated in Figure~\ref{fig:vdr1}).
                        (right) This tile is placed if the $q2$ signal has been recieved from the north of the $\cgSE$ (i.e., case 1 as demonstrated in Figure~\ref{fig:vdr2})]{%
                \label{fig:vdr-tiles1}%
                \makebox[2.8in][c]{\includegraphics[width=2in]{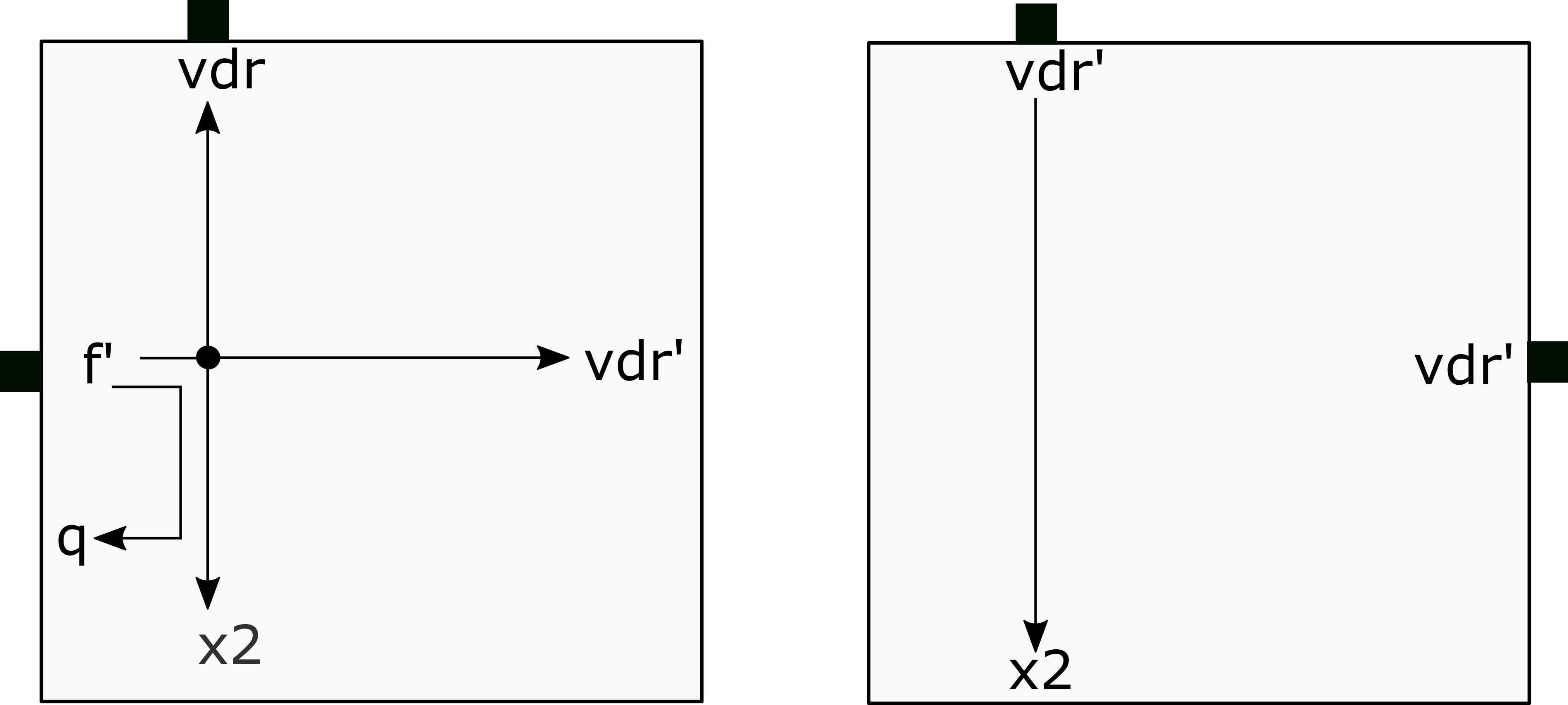}}
                }%
                \quad\quad
          \subfloat[][Tiles which grow the vertical doubling row.
            (left) Tile which attaches via strength 2 to $\cgSE$ initially.
            (center) Tile which allows for new $\cgSE$ to attach, attaches to prior tile via strength-2 glues.
            (right) Tile which grows the north vertical doubling row.]{%
                \label{fig:vdr-tiles2}%
                        \includegraphics[width=3in]{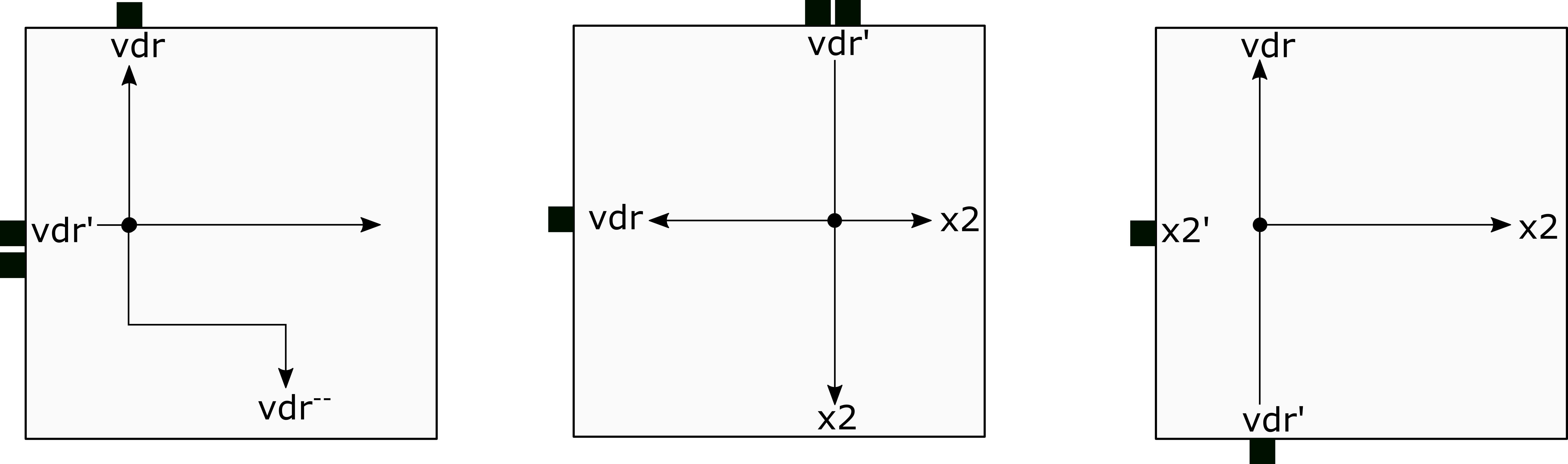}
                }%
          \caption{Vertical doubling row tiles.}
          \label{fig:vdr-base-tiles}
          
        \end{figure}

        \begin{figure}[htp]
            \centering
              \subfloat[][Signals added to a northern growing standard path tile to enable vertical doubling rows.]{%
                    \label{fig:vdr-growth}%
                    \includegraphics[width=1in]{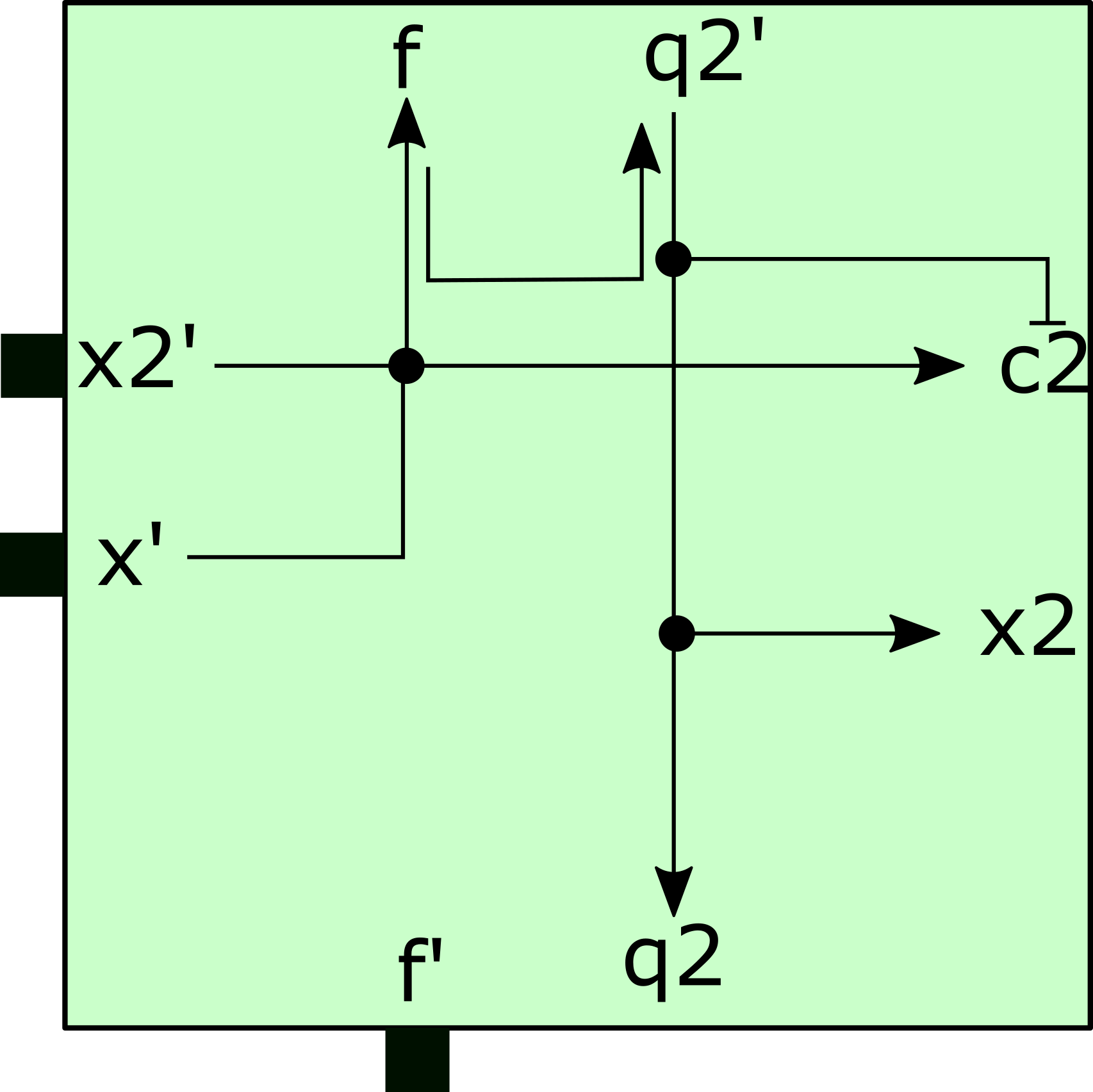}
                    }%
                    \quad\quad
              \subfloat[][$\cgSE$ signals which allow for the growth of the vertical doubling row.
                            Note that the $q$ signal is utilized to indicate a collision in case 2 of generating a vertical doubling row, to account for instances
                            such as shown in Figure~\ref{fig:collision-detection-corner-cases2}
                            This does not cause any spurious activation of signals which could cause the $\cgSE$ to activate its $x2$ glues,
                            as the strength-2 $vdr$ glue is must be bound (and would block placement of a $\cgSE$) to initiate the $q$ signal north]{%
                    \label{fig:vdr-cgSE}%
                    \makebox[4in][c]{\includegraphics[width=3in]{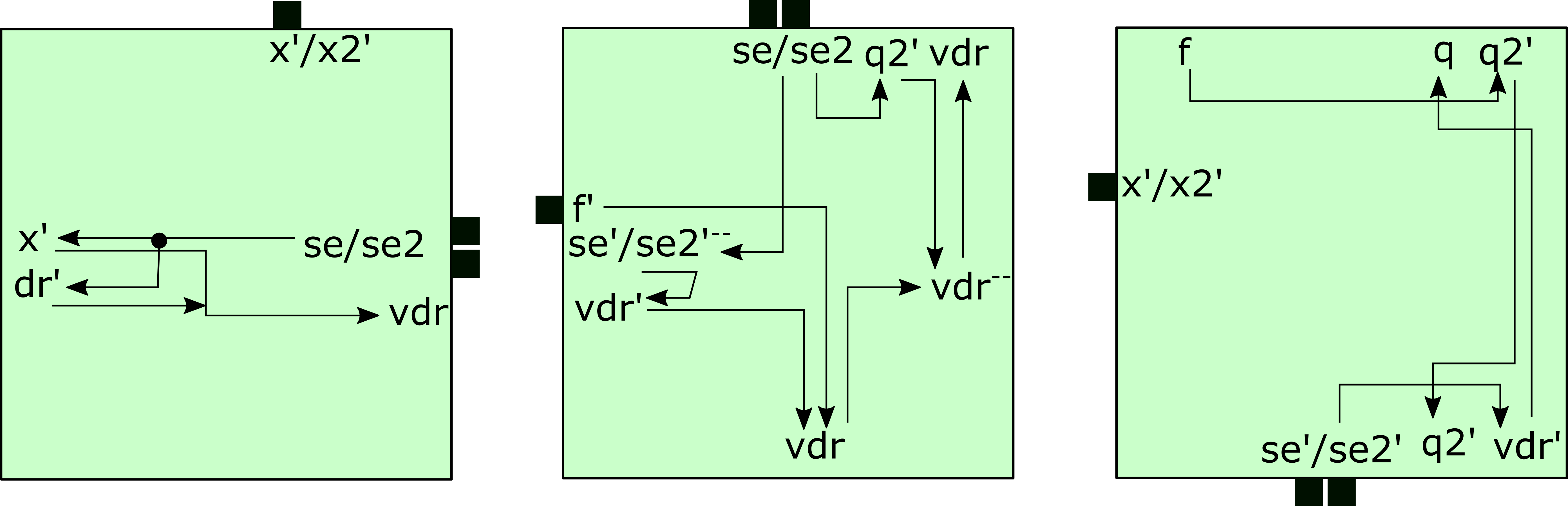}}
                    }%
              \caption{Signals to enable vertical doubling rows}
              \label{fig:vdr-frame-tiles}
              
            \end{figure}

\begin{figure}[htp]
\centering
    \includegraphics[width=4.5in]{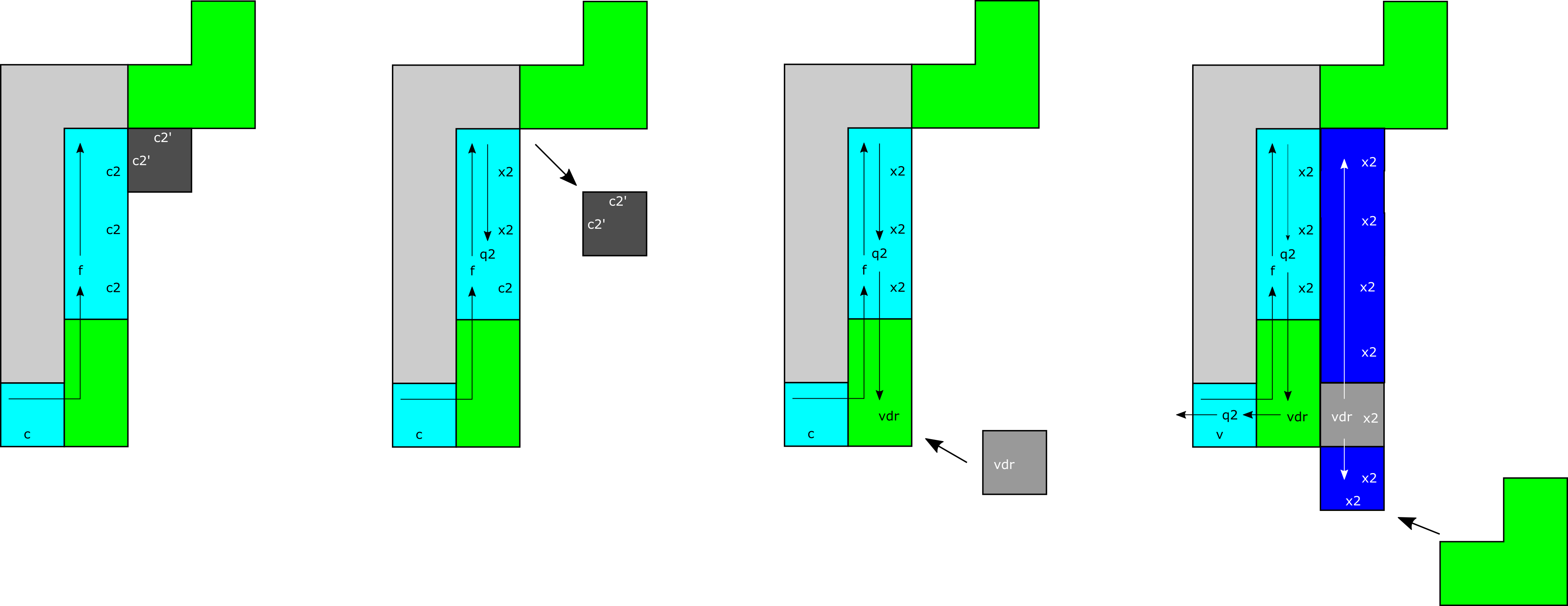}
    \caption{First case of vertical doubling row. When a standard growth places a tile in location 7, a singleton tile collision tile detects this by cooperatively 
             binding to $c2$ glues with which we enhance our north frame growth and $\cgSE$ tiles. 
             The binding of this singleton tile initiates a $q2$ message that is passed CW through a layer that turns $\off$ $c$ and $c2$ glues (allowing for 
             the detachment of $ch$) and turns $\on$ $x2$ glues on east edges. When this message is received by the corner tile of the 
             first $\cgSE$ that it reaches; instead of activating an $x2$ glue on the east edge, it activates a $vdr$ glue. 
             $vdr$ initiates the growth of a vertical doubling row (shown as the blue column) by propagated a $vdr$ message upward. 
             The $vdr$ message is also propagated downward allowing for the attachment of a singleton tile. This singleton tile activates south 
             and east $x2$ glues that allow for a $\cgSE$ to attach.}
    \label{fig:vdr2}
\end{figure}

\begin{figure}[htp]
    \centering
        \includegraphics[width=5.0in]{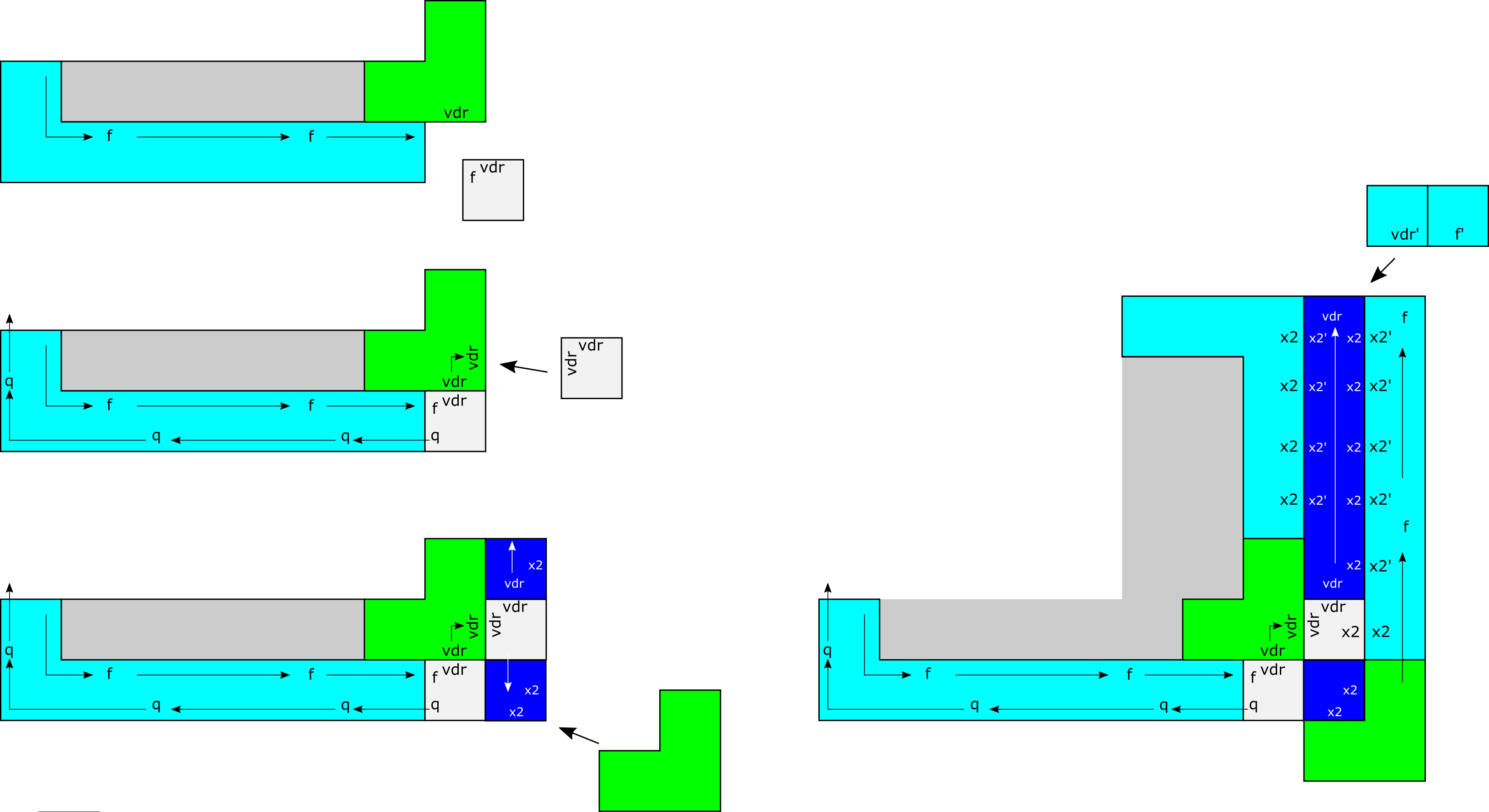}
        \caption{(left) Second case of vertical doubling row. We enhance our frame building tiles so that when a tile of a row growing along the south edges of a frame layer places a 
        tile, $t$, at location 3, the tile of the $\cgSE$ passes a message to activate a $vdr$ glue to allow for cooperative binding. This allows for a similar process as Figure~\ref{fig:vdr2}
        to generate a vertical doubling row.
        A vertical doubling row does not turn a northeast corner, unlike a typical standard growth path tile.
        Moreover, in this case, instead of propagating a $q2$ message CCW from the tiles of $CG$, a $q$ message is propagated.
        If a $q2$ signal was created, the next $\cgSE$ placed at the (2, -2) vector would encounter a standard path in location 3 as before. 
        See Figure~\ref{fig:collision-detection-corner-cases2} for an example of this case.
        (right) Example of frame growth over a vertical doubling row.
         When the vertical doubling row is initiated, the $q$ message is passed through the north side of $\cgSE$, allowing $x2$ glues to be activated
        }
        \label{fig:vdr1}
    \end{figure}


        \item When a frame layer tile is placed adjacent to a $\cgSE$ in position 2 as shown in Figure~\ref{fig:secgCases}, and this $\cgSE$ does not expose $x2$ glues, as long as the $\cgSE$ does not belong to layer 1, the appropriate collision detection duple in Figure~\ref{fig:rectangle-detection-duple} will attach and cause the initiation of a $d$ (\emph{detach}) message which will be propagated backward through the path, starting within the $\cgSE$ so that the $\cgSE$ binds to the colliding path tile using a $d$ glue, all of which are strength-2. (A $\cgSE$ of layer 1 would not have been able to attach to a collision detection gadget and will initiate the growth of layer 2 on its back.)  Each tile/duple in the path that receives the $d$ message turns the $x2'$ glue on its left side $\off$, thus detaching it from the layer that it grew on top of.  When a $d$ message arrives at the north side of a $\cgSE$, that $\cgSE$ deactivates the $x2'$ glues with which it initially attached to the assembly.\label{item:detach}

\begin{figure}[htp]
\centering
    \includegraphics[width=\textwidth]{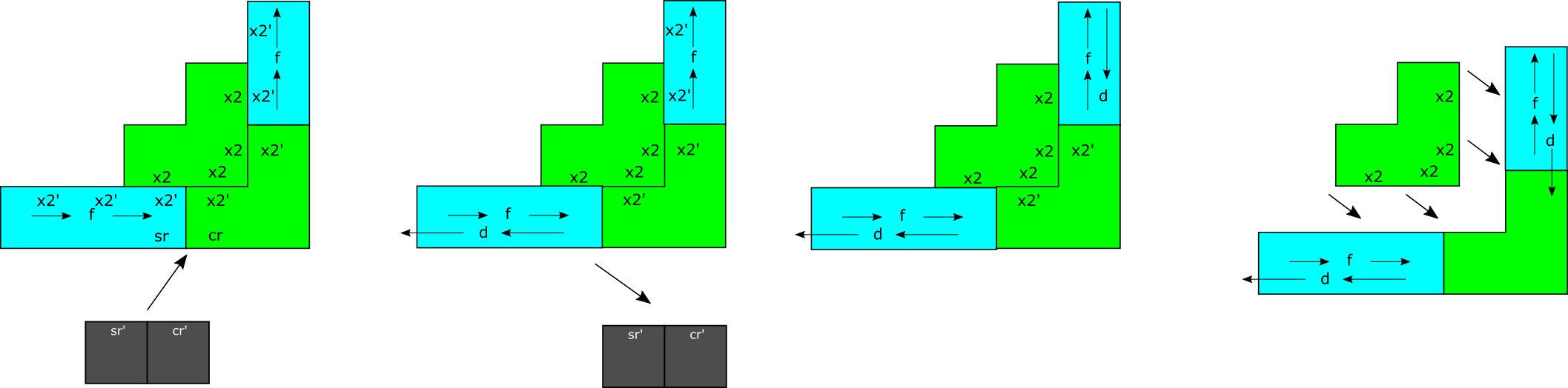}
    \caption{
        (left) A duple attaches to a tile of a standard path and the westernmost tile of a $\cgSE$ that initiates the $d$ (detach) message.
        (center left) This causes the $x2'$ glues of the standard path tiles to deactivate, and a strength two attachment of $d$ glues between tiles of the same layer to form.
        (center right) In the case that the layer which received the detach signal is the outermost later, this signal will return through the north side of the $\cgSE$
        (right) Both $x2'$ glues of the $\cgSE$ of the outermost layer also deactivate, leading to the entire frame being able to detach.
    }
    \label{fig:rectangle-detection-duple}
\end{figure}

\end{enumerate}

In Section~\ref{sec:frame-proof}, we will see that after some finite number of layers have assembled around $\alpha$, the last layer assembled will be a rectangle.
In the case where a frame layer is rectangular, the $\cgSE$ which initiates the $d$ message, call it $CG_d$, will eventually receive this message on its north side, turning off its $x2$ glues. 
Before turning $\off$ this $\cgSE$'s $x2'$ glues, we first \emph{prime} the layer that $CG_d$ is attached to. 
At a high-level, priming exposes a glue on the frame layer that $CG_d$ is attached to which allows a special singleton tile to bind, triggering the signals that will elect a leader (a single special tile that will initiate shape replication). 
The $d$ message described in Case~\ref{item:detach} will eventually turn all left $x2'$ glues exposed by this layer $\off$, allowing $CG_d$ to disassociate and expose the glue turned $\on$ in the priming process. 
Priming a layer and its purpose as well as electing a leader tile belonging to layer 1 are described in the next section. 
Additionally, it is important to note that layers which are not exactly rectangular will be unable to detach. 
This is due to the fact that a $\tau$ stable connected path exists to $\alpha$ for all $\cgSE$ part of a non-rectangular layer. 
An example can be seen in Figure \ref{fig:new-frame2-full-example-disconnect}.


\begin{figure}[htp]
\centering
    \includegraphics[width=3.5in]{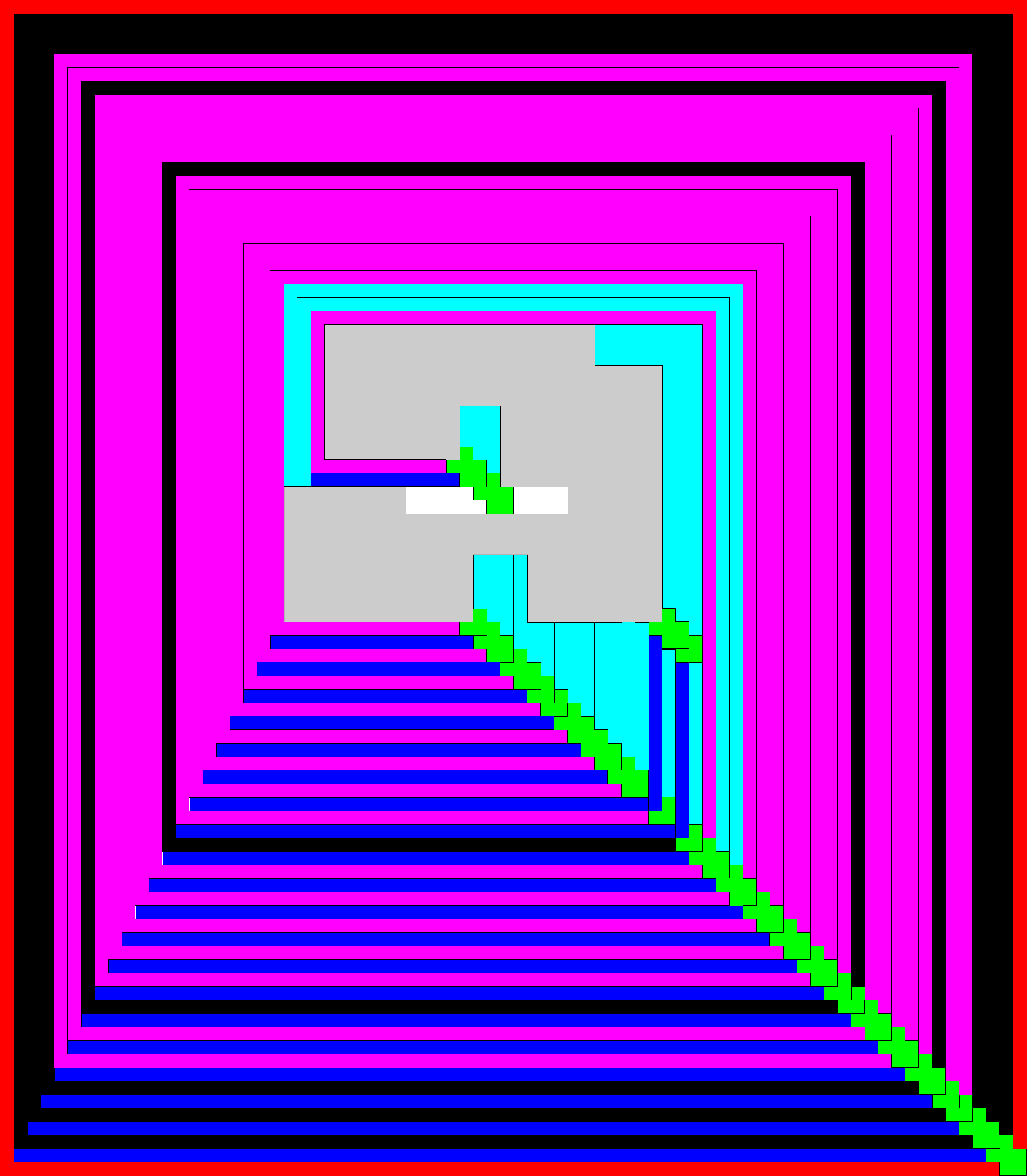}
    \caption{The example from Figure~\ref{fig:new-frame2-full-example} with pink portions highlighting paths which will disconnect from the layers inside of them (because each of those layers grew to the west side of a $\cgSE$), but will ultimately be unable to completely detach since the innermost $\cgSE$ on each connected path will not disconnect (as it won't receive the detach message from the north), and those $\cgSE$'s will create a $\tau$-stable connected path all the way to $\alpha$.  The red outer layer denotes the rectangular layer which can detach, and the entire segment of the longest connected path which attempts to detach is shown in black for highlighting. This path spirals around the perimeter of $\alpha$ several times before stopping at a segment which does not detach.}\label{fig:new-frame2-full-example-disconnect}
\end{figure}

\subsection{\texorpdfstring{Electing a leader and casting a mold of $\alpha$}{Electing a leader and casting a mold of alpha}}\label{sec:main-leader}

In this section, we assume that the last added layer $F$ of the frame has completed growth and is rectangular. Here we give a high-level description of what it means to elect a leader tile of layer 1 and how this is achieved.  (See Section~\ref{sec:leader-details} for more details.) When $F$ completes the growth of a rectangle, it will pass a \emph{detach} ($d$) message CW through each tile of $F$ back to the only $\cgSE$ belonging to $F$, which we denote by $CG_F$. When the $d$ message is received along the north edge of the northernmost tile of $CG_F$, it initiates a series of signals so that after $F$ detaches, the remaining assembly exposes a strength-2 glue. This strength-2 glue is exposed so that a singleton tile can bind to it. This binding event initiates the signals that ``scan'' from right to left for the first corner tile of a $\cgSE$ belonging to layer 1. Notice that this will be the easternmost tile of the southernmost tiles belonging to layer 1. The tile directly to the west of this tile is called the \emph{leader tile} of layer 1. For an overview see Figure~\ref{fig:leader-election-overview}. 

Now that a leader is elected, note that tiles need not completely surround $\alpha$ in concavities. 
In other words, there may be some empty tile locations adjacent to tiles of $\alpha$. 
Below we show how to guarantee every location adjacent to a tile of $\alpha$ is filled. At a high-level, we describe signals and tiles that ``extend'' the frame so as to completely surround $\alpha$ by passing a $g$ message CCW around $\alpha$.
Starting with the leader $\cgSE$, as this message is passed, it activates glues which we use to replicate $\alpha$. The leader tile exposes a unique glue.  For more detail, see Section~\ref{sec:outlining}.

Once we have exposed glues that will allow for the replication of $\alpha$, we propagate a $br$ message through layer 1 of the frame that deactivates all of the $x'$ glues of layer 1 except for the $x'$ glue on the north edges of the leader tile. This will allow $\alpha$ to disassociate from the frame and allow the frame to be used to replicate $\alpha$.

\subsubsection{Details for priming a frame layer and electing a leader}\label{sec:leader-details}

As discussed in the previous section, $F$ will pass a $d$ message CW through each tile of $F$ until the $\cgSE$ of $F$, which we denote by $CG_F$, is reached. When the $d$ message is received along the north edge of the northernmost tile of $CG_F$ (via the $d'$ glue), as depicted in Figure~\ref{fig:SE-corner-gadget-tiles}, it triggers a $p'$ glue to turn $\on$, which propagates a signal that turns the $p2'$ glue $\on$ on the east edge of a tile belonging a $\cgSE$ tile of the layer that $CG_F$ is attached to. We denote this tile by $CG$ and note that it is in position $B$ of Figure~\ref{fig:secgCases}. Moreover, the binding of the $p$ glue belonging to $CG_F$ fires signals which continue the propagation of the $d$ message and turn the $x2'$ glues of $CG_F$ $\off$. 

\begin{figure}[htp]
\centering
  \subfloat[][Signals and glues used to enhance corner tiles of $\cgSE$'s. Once a \emph{primer tile} (shown in \ref{fig:primerTile}) attaches via the $p2'$ glue, it initiates a $sp$ message. When the $sp$ message begins propagation, the $p2'$ glue turns $\off$, and the primer tile disassociates.]{%
        \label{fig:left-right-leader-election-init}%
        \makebox[2.7in][c]{\includegraphics[scale=.33]{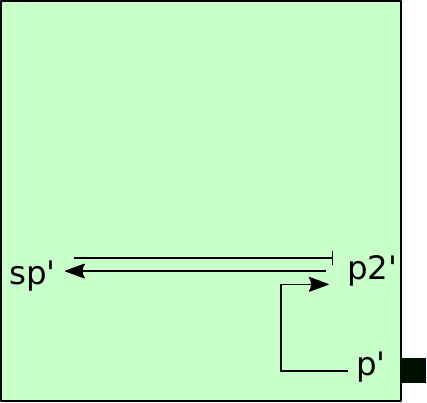}}
        }%
        \quad\quad
  \subfloat[][This tile binds to a primed $frame$ layer to initiate the inward propagation of the $sp$ message which eventually reaches the $\cgSE$ that is attached to $\alpha$. ]{%
        \label{fig:primerTile}%
        \makebox[2.7in][c]{\includegraphics[scale=.33]{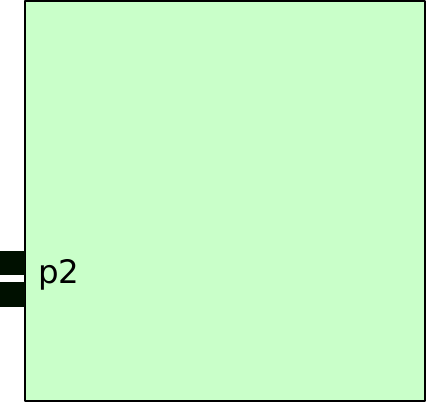}}
        }%
  \caption{Gadgets that initiate the propagation of a $sp$ message which ``marks'' a $\cgSE$ attached to $\alpha$.\vspace{-10pt}}
  \label{fig:frame-priming}
\end{figure}
%

Then, in order to detect a leader $\cgSE$, we enhance the corner tile belonging to a $\cgSE$ with the signals depicted in Figure~\ref{fig:left-right-leader-election-init} and all of the frame building tiles discussed (including those tiles belonging to $\cgSE$) so far with signals depicted in the center figure of Figure~\ref{fig:leader-election-tiles}.

\begin{figure}[htp]
\centering
        \includegraphics[scale=.4]{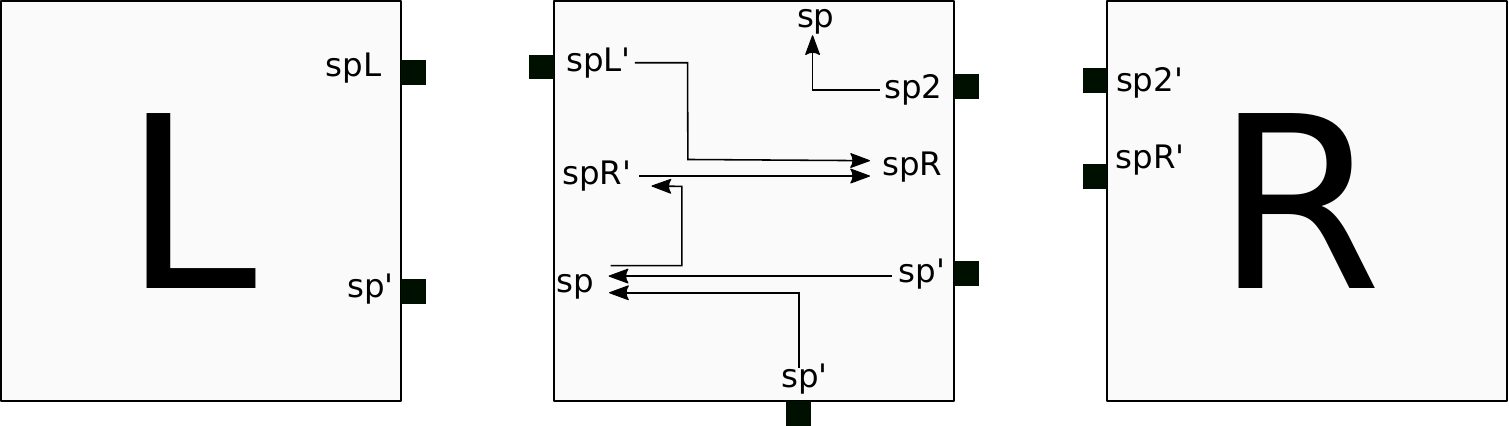}
  \caption{(Left) A tile that binds to the $sp'$ and $spL'$ glues on the west edge of a westernmost tile of a frame that initiate an $spR$ message. (Middle) We enhance each frame building tile with the signals depicted here. The $sp'$ glues, $spL'$ glue, and $sp'$ glue are initially $\latent$ and then triggered to turn $\on$  after an $x2'$ glue of the frame building tile binds. (These $\on$ signals and the $x2'$ glue are not shown here.) (Right) A tile that binds to the $sp2$ and $spR$ glues on the east edge of an easternmost tile of a frame that propagate an $sp$ message one tile to the north of the tile to which it binds.\vspace{-10pt}}
  \label{fig:leader-election-tiles}
\end{figure}

Equipped with these new signals, frame tiles elect a $\cgSE$ by ``scanning'' from right to left for the first corner tile of a $\cgSE$ belonging to layer 1. Notice that this will be the easternmost tile of the southernmost tiles belonging to layer 1. For an overview see Figure~\ref{fig:leader-election-overview}. This scanning is accomplished as follows. Frame tiles can pass an $sp$ message from right to left through tiles of standard paths, doubling rows and corner gadgets. If a tile belonging to layer 1 receives the $sp$ message, it must be a corner tile of some $\cgSE$ of layer 1 and this will be the elected $\cgSE$. If the propagation of the $sp$ message exposes an $sp$ glue on the west edge of a westernmost tile of the frame, then a singleton tile (labeled $L$ in Figure~\ref{fig:leader-election-overview}) may use this glue and an $spR$ glue to bind and initiate the propagation of an $spR$ message which propagates to the right. Notice that the $spR$ message will eventually expose an $spR$ glue on the east edge of an easternmost tile of the frame, allowing for a singleton tile (labeled $R$ in Figure~\ref{fig:leader-election-overview}) to bind via its $spR'$ and $sp2'$ glues. This fires signals which propagate the $sp$ message on tile to the north before continuing to propagate the $sp$ message from right to left, ensuring that eventually a $\cgSE$ of layer 1 will be elected.

\begin{figure}[htp]
      \centering
      \subfloat[][An example of electing a $\cgSE$. The particular $\cgSE$ that is elected has a corner tile that is the easternmost tile of the southernmost tiles of layer 1.]{
            \includegraphics[width=2.8in]{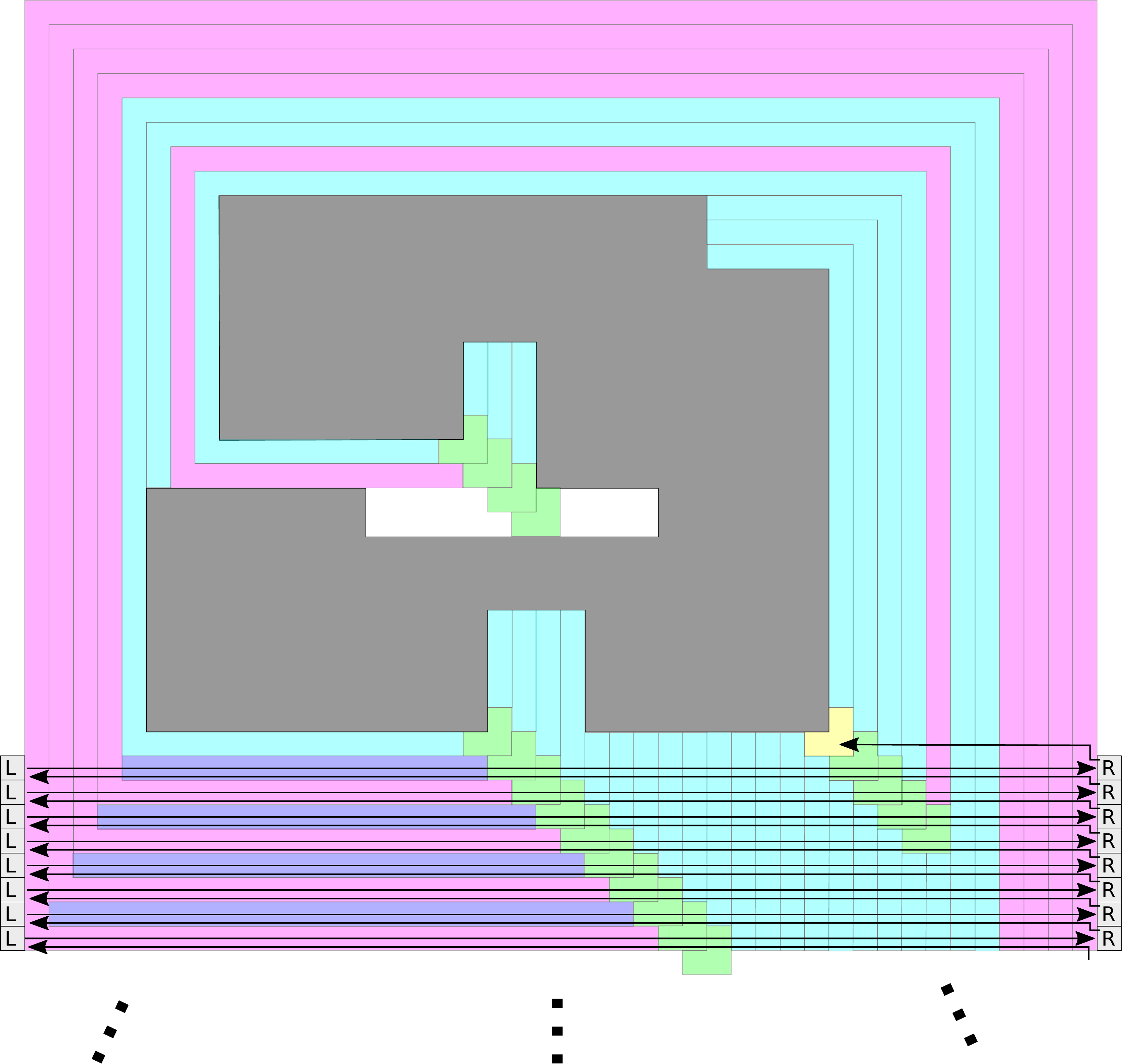}
            \label{fig:leader-election-overview}
      }
      \quad \quad
      \subfloat[][Once the leader $\cgSE$ has been elected, a $g$ signal is passed to its north (the signal closest to $\alpha$). This passes through frame tiles until reaching a void in the concavity. 
      From there, the exposed $g$ glues allow for placement of tiles from Figure~\ref{fig:post-collision-handling} (seen as orange). 
      Note that empty locations still exist, however the newly placed tiles are adjacent to the remaining edges of $\alpha$.
      After the $g$ signal returns to the leader $\cgSE$, the $br$ signal follows $g$.]{
            \includegraphics[width=2.8in]{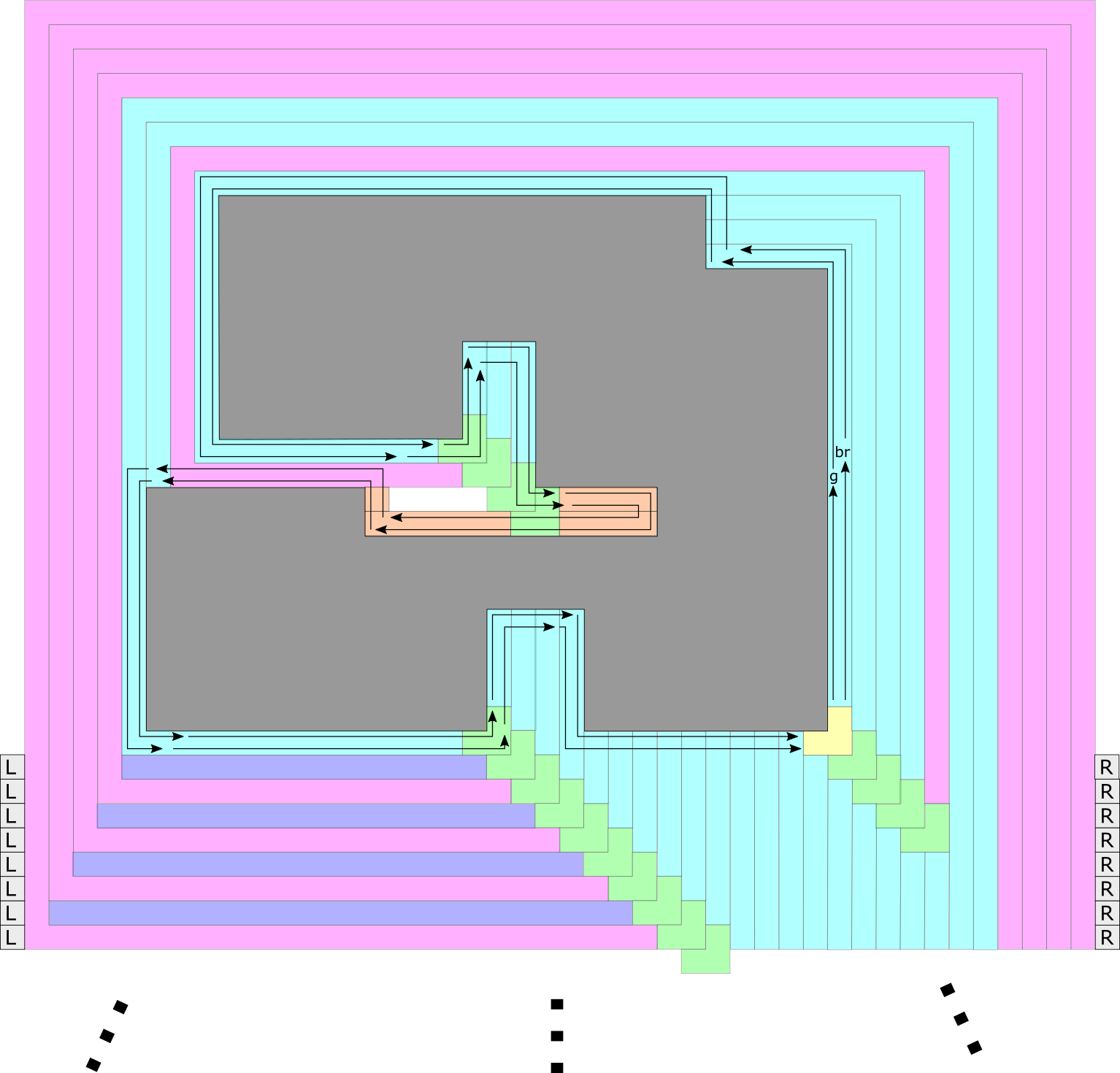}
            \label{fig:mold-creation}
      }
      \caption{Leader election and mold creation processes}
      \label{fig:leader-election-mold-creation}
\end{figure}


From Figure~\ref{fig:layer1-SEcg}, we can see that once a $sp$ message is received by a $\cgSE$ of layer 1, it propagates an $sp$ signal to the westernmost tile of that $\cgSE$. 
We will call this tile the \emph{leader} tile. When the $sp'$ glue of the leader tile binds, it fires signals to turn $\on$ a $z$ glue and a $y$ glue. The purposes of these glues are described in Section~\ref{sec:rep}.

Now that a leader is elected, note that layer 1 need not completely surround $\alpha$. In other words, there may be some empty tile locations adjacent to tiles of $\alpha$. 
In the next section, we show how to ``complete'' layer 1 so that for every tile location adjacent to a tile of $\alpha$, this location contains a tile of layer 1.

\subsubsection{Details for casting a mold of $\alpha$}\label{sec:outlining}

In this section, we show how to ``extend'' layer 1 of the frame to completely surround $\alpha$ so that for every $x$ glue on $\alpha$, there is an $x'$ on some tile of the extended layer 1 that is bound to $x$.

\begin{figure}[htp]
\centering
    \includegraphics[width=2in]{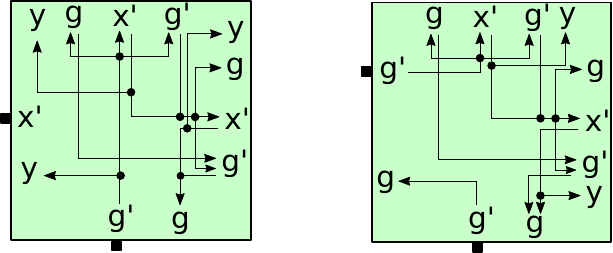}
    \caption{ (Left) A tile capable of propagating a $g$ message around an $\alpha$ in the CCW direction. Notice that this tile can pass a $g$ message once in each direction.
    This particular tiles propagate the $g$ message in the south-to-north direction.
    These tiles (included rotated variants) are added into the tileset, which enable the mold of $\alpha$ to be extended to all locations adjacent to locations of $\alpha$ (as seen in Figure~\ref{fig:mold-creation})    
     By appropriately rotating this figure, we can obtain signals and glues that propagate signals in the east-to-west, north-to-south, and west-to-east directions.
     Additionally, we enhance the frame building tiles that expose $x'$ on their left side with these signals and glues.
     (Right) A tile capable of propagating a $g$ message around a NE corner. 
     This tile is added to the tileset and we enhance the corner tile of the $\cgNE$ tiles with these signals and glues. 
     Moreover, we enhance the corner tiles of $\cgNW$ and $\cgSW$ tiles with these signals, appropriately rotated.\vspace{-10pt}}
  \label{fig:post-collision-handling}
\end{figure}

\begin{figure}[htp]
      \centering
          \includegraphics[width=2.2in]{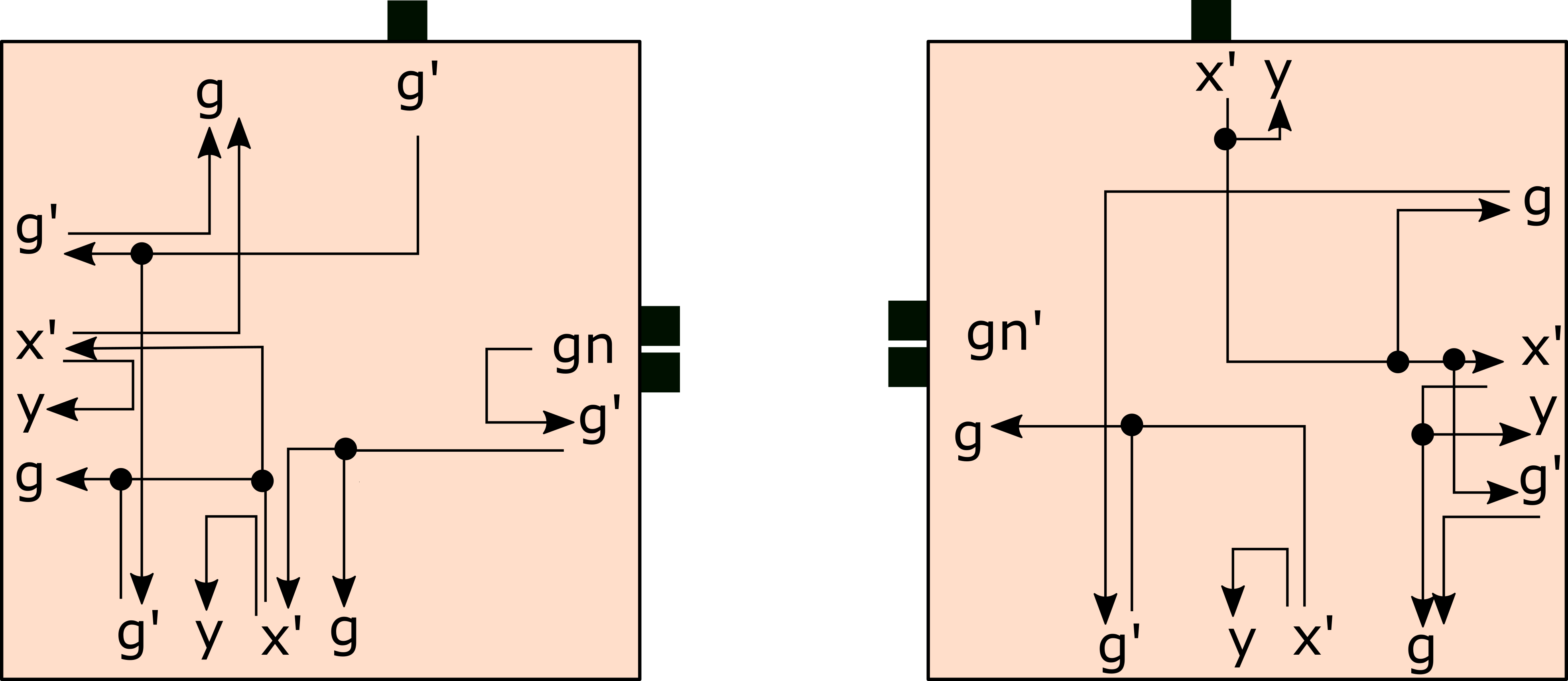}
          \caption{ 
              A duple required to handle the $g$ message being presented to overhangs.
              Both this and a variant rotated 180 CCW are added to the tileset.  
          \vspace{-10pt}}
        \label{fig:post-collision-handling-duple}
      \end{figure}

\begin{figure}[htp]
      \centering
          \includegraphics[width=3.5in]{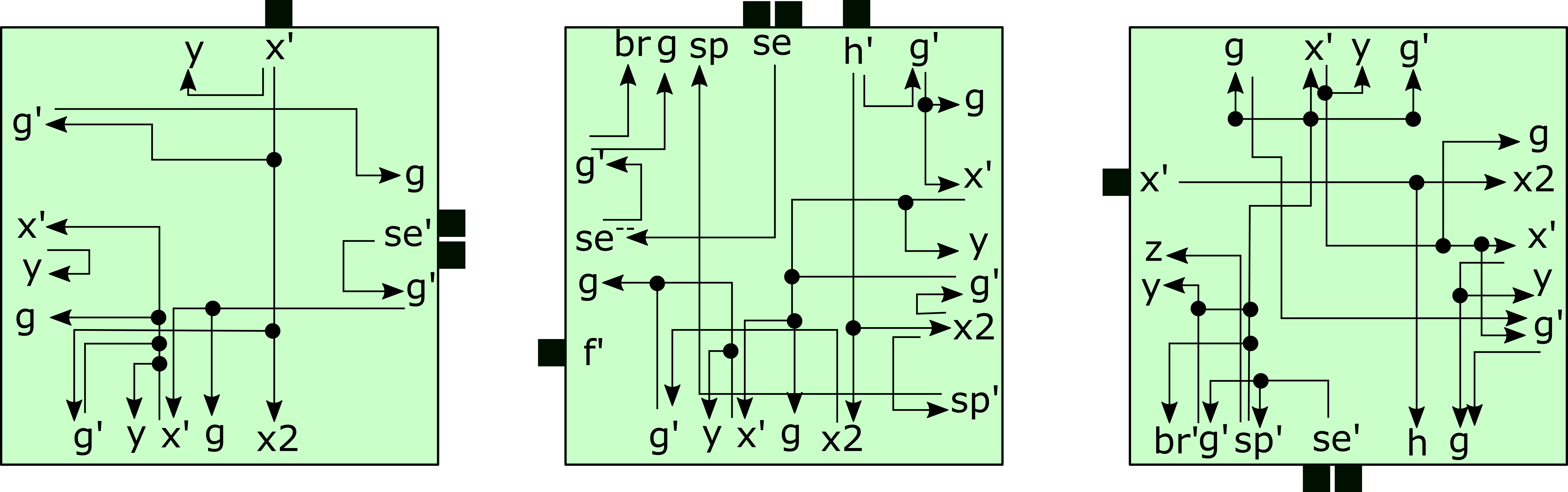}
          \caption{
                We now update the $\cgSE$ tiles of layer 1 with the following signals.
                Note that the only $\cgSE$ that will receive an $sp$ signal is the leader, and as such it is able to activate the $z$ glue signaling the southeasternmost corner.
                The $g$ message is also begun at this point, and continue through the tiles adjacent to $\alpha$, in a CCW manner.
                Once the $g$ signal returns to the leader tile, the $br$ signal is activated.
                The remaining $br$ signals which follow $g$ are not visualized,
                but the description of how they are carried out is detailed in Section~\ref{sec:signal-following}.
                These signals are also applied to the layer 2 $\cgSE$, minus the $br$ activation and leader election of $z$.
          \vspace{-10pt}}
        \label{fig:L1-cgSE-g}
      \end{figure}

Figure~\ref{fig:post-collision-handling} depicts tiles capable of propagating a $g$ message CCW around $\alpha$.
When we enhance frame building tiles with the signals and glues depicted on the left in Figure~\ref{fig:post-collision-handling}, the $g'$ glue (shown in the $\on$ state in the figure) is initially $\latent$ and is turned $\on$ when $x'$ binds (this signal is not shown in the figure). 
\update{
As the $g$ message propagates around the perimiter of $\alpha$, this message is passed through tiles of layer 1 that have previously been bound to $\alpha$ or tiles of layer 2 which are in locations adjacent to $\alpha$.
If no tile is adjacent to $\alpha$ with an exposed $g$ glue, the message propagates by the attachment of a tile (depicted on the left in Figure~\ref{fig:post-collision-handling} up to rotation).
See Figure~\ref{fig:mold-creation} for an example of tile attachment by exposed $g$ glues.

Additionally, we may run into certain concavities with overhangs - that is, it must change its direction from south to east, or north to west.
In both these cases, only a single $g$ glue is available for binding. This is solved by the usage of duple tiles in Figure~\ref{fig:post-collision-handling-duple}.
See Figure~\ref{fig:post-collision-handling-duple-need} for demonstrations of such situations.
}
With these enhancements and additional tiles, the $g$ message propagates CCW starting from the leader tile until it is passed back to the leader tile. At this point, for every $x$ glue on $\alpha$, there is an $x'$ on some tile of the extended layer 1 that is bound to $x$. An example is given in Figure~\ref{fig:mold-creation}.

From Figure~\ref{fig:post-collision-handling}, we can notice that as the $g$ message propagates, it ensures that a $y$ glue is exposed on any edge that binds to an $x$ glue of $\alpha$. Though only a single $y$ glue is signaled to turn $\on$ in Figure~\ref{fig:post-collision-handling}, note that we could turn on any number of glues that we like. In particular, we not only turn a $y$ glue $\on$, but we also turn a $w$ glue $\on$. The purposes of these glues will be explained in Section~\ref{sec:rep}.

\begin{figure}[htp]
      \centering
          \includegraphics[width=4in]{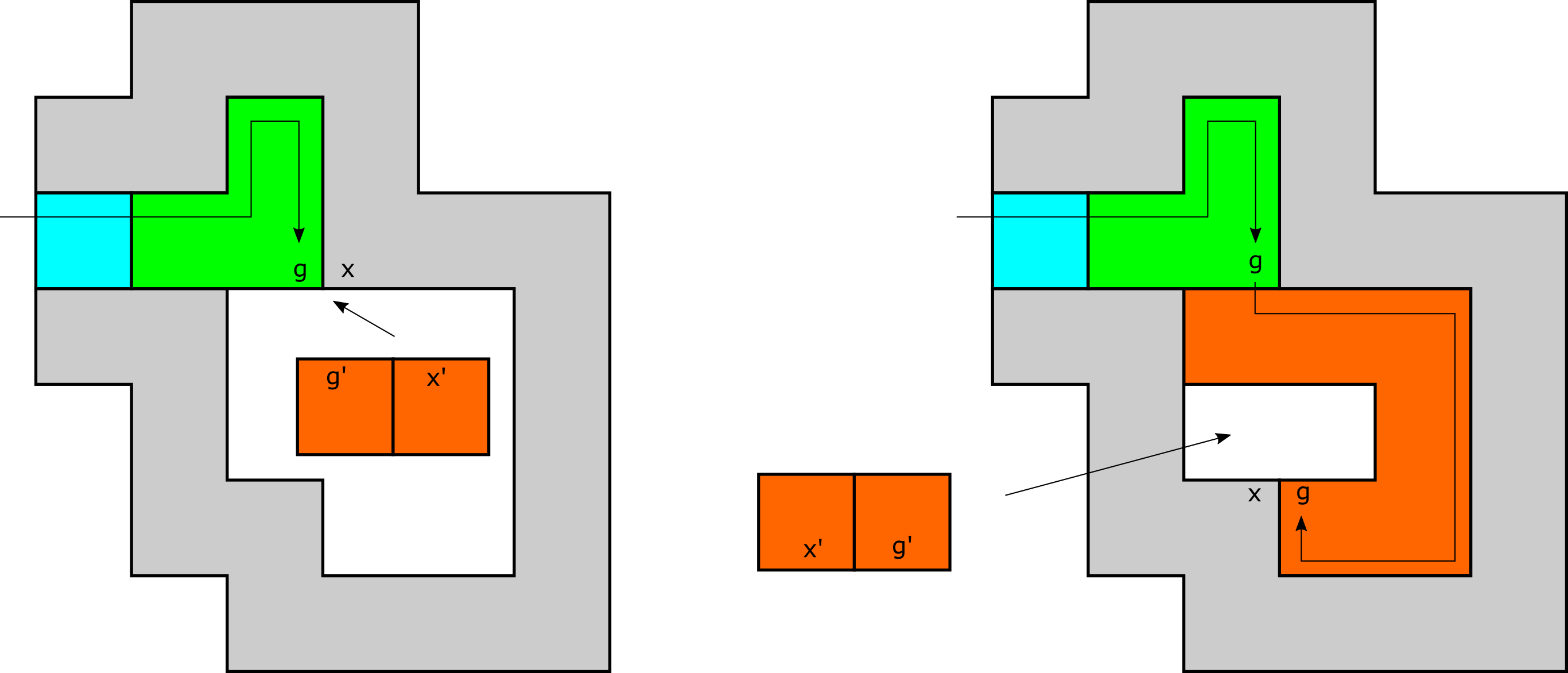}
          \caption{ 
              Situations where $g$ signal cannot be passed without presence of duple from Figure~\ref{fig:post-collision-handling-duple} 
          \vspace{-10pt}}
        \label{fig:post-collision-handling-duple-need}
      \end{figure}


Next we propagate a $br$ message through layer 1 of $\alpha$ that deactivates all of the $x'$ glues of layer 1 except for the $x'$ glue on the north edges of the leader tile.
The beginning of this signal is shown in Figure~\ref{fig:L1-cgSE-g}
This will allow $\alpha$ to disassociate from the frame and allow the frame to be used to replicate itself via the same frame building process.
$br$ propagates as follows. Once the $g$ message is received by the leader tile, it initiates the $br$ message. 
In Section~\ref{sec:signal-following}, we describe a technique for designing a message that ``follows'' another message.  
Here we use this technique, enhancing each frame building tile discussed so far with a $br$ message that follows the $g$ message in order. 
Moreover, as $br$ is passed to each consecutive tile of layer 1, it fires deactivation signals that turn $\off$ any $x'$ glues on each tile. 
In other words, the $br$ message is passed through each tile of layer 1 in the same order that the $g$ message was passed through each tile of layer 1. 
In this way, the $x'$ glues of layer 1 bound to the $x$ glues of $\alpha$ turn $\off$ with the exception of the $x'$ glue on the north edge of the leader tile. 
This $x'$ glue is left $\on$.

After the $br$ message propagates through the tiles of layer 1 and $\alpha$ disassociates, note that the edges that had previously exposed an $x'$ glue bound to an $x$ glue of $\alpha$ now expose (or will eventually expose after pending signals fire) $z$ and $y$ glues on the north edge of the leader tile, and $w$ and $y$ glues otherwise.  After all pending signals of the frame fire, we say that the frame is a \emph{complete frame assembly for shape} $\alpha$. In Section~\ref{sec:rep}, we show how to use a completed frame for a shape to replicate arbitrary shapes.

\subsection{Followable Messages}\label{sec:signal-following}

In this section we show how to pass a massage through a sequence of tiles such that after the message has been passed, a second message can be passed through the exact same sequence of tiles in either the same order. 
For example, in our frame construction, signals propagate a $g$ message through a sequence of layer 1 tiles $\{\T_i\}_{i=0}^n$ (not necessarily distinct). 
In our construction, we then propagate a $br$ message through layer 1 through a series of glue activations such that this message follows the sequence of tiles $\{\T_i\}_{i=0}^n$ in that order. In this case, we say that the $br$ message \emph{follows} the $g$ message.

\update{Figure~\ref{fig:signal-following-none} shows an $g$ message being passed through a tile. Let $T_G$ denote this tile. This message enters from the south and then may potentially be output through the north, east, or south depending on if collisions occur.
 The goal is to ensure that a second message can be output through exactly that same side (and no others). Other cases where the $g$ message enters through the north, east, or west are equivalent up to rotation. 
 For each possible output signal of the $g$ glue in $T_G$, we define glues on the signal input side of the $T_G$ which are activated by the output $g$ glue being bound. 
 As shown in Figure~\ref{fig:signal-following-none}, the north $g$ glue activates $brn^\prime$, the east $g$ glue activates $bre^\prime$, and the south $g$ glue activates $brs^\prime$. Informally, the activated $brn^\prime$, $bre^\prime$, or $brs^\prime$ glue ``records'' the output side of the $g$ message. In the case shown in Figure~\ref{fig:signal-following-none} where the $g$ message enters from the south, the $brn^\prime$, $bre^\prime$, and $brs^\prime$ glues are sufficient for recording the output side of the $g$ message. In cases where the $g$ message enters through the north, east, or west, a $brw^\prime$ glues is required to record the case where the $g$ message exits through the west side of a tile. The $br$ signal is then propagated using $brn^\prime$, $brs^\prime$, $bre^\prime$, and $brw^\prime$ glues. Figure~\ref{fig:signal-following-forward} depicts the signals and glues for propagating the $br$ signal in the case where the $g$ message enters from the south. In this case the $br$ signal will also enter from the south. 
 The $br$ signal is propagated through $T_G$ as exactly one of the $brn^\prime$, $brs^\prime$, and $bre^\prime$ glues binds to one of the $brn$, $bre$, and $brs$ glues on the output side of a tile to the south of $T_G$ that is propagating $br$. All of the $brn$, $bre$, and $brs$ glues must be activated as the tile to the south of $T_G$ has no ability to know which direction the $g$ message of $T_G$ will take. The $br$ signal passed to $T_G$ will have the same output side as the $g$ signal. For example, if the $g$ message enters from the south and exits through the east, then, as shown in Figure~\ref{fig:signal-following-none}, the glue $bre^\prime$ will be activated; $brn^\prime$ and $brs^\prime$ will remain latent. Then, as the $br$ signal propagates through the tile to the south of $T_G$, $brn$, $bre$, and $brs$ are all activated on the north side of the tile. When $bre$ and the $bre^\prime$ glue on the south edge of $T_G$ bind, this binding event activates the glues $bre$, $brs$, and $brw$ on the east edge of $T_G$, effectively propagating the $br$ signal to the tile to the east of $T_G$. This is shown in Figure~\ref{fig:signal-following-forward}. Notice that there are no signals belonging to $T_G$ that fire when $brs^\prime$ binds. This is because no signals are needed to propagate $br$ to the south of $T_G$. The binding of $brs$ and $brs^\prime$ are enough to propagate $br$ to the south of $T_G$. 
 
 }




\begin{figure}[htp]
\centering
  \subfloat[][
        An example of signals used to propagate an $g$ message CCW.]{%
        \label{fig:signal-following-none}%
	        \includegraphics[width=1.1in]{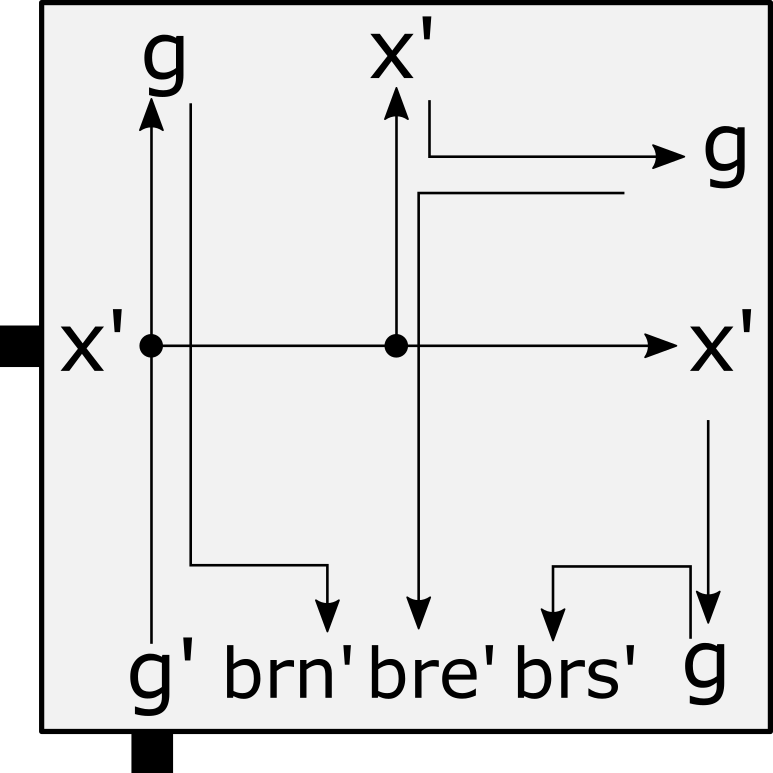}
        }%
        \quad\quad
  \subfloat[][A $br$ message that is following a previously passed $g$ message. 
  ]{%
        \label{fig:signal-following-forward}%
        		\includegraphics[width=1.1in]{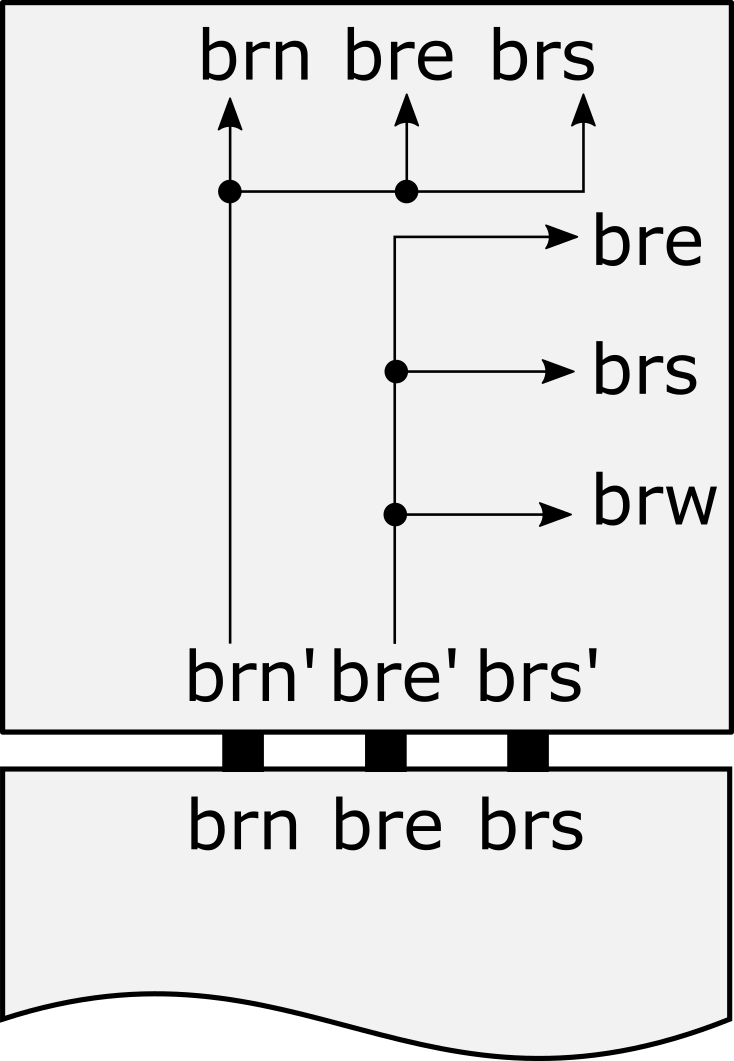}
        }%
  \caption{Tiles which demonstrate signal following.\vspace{-15pt}}
  \label{fig:signal-following-all}
\end{figure}


\subsection{Correctness of the frame construction}\label{sec:frame-proof}

The goal of the frame construction is to result in an assembly which completely encases $\alpha$, making a perfect mold of its shape, and then detaches.  Furthermore, and extremely importantly, that mold must have uniquely identified exactly one tile on its interior (i.e. its ``elected leader'').  We now state the properties which are guaranteed by the frame construction, in the form of lemmas, and prove each of them.

First, for notation we will refer to the collision detection tiles as $\cdtiles \subset T$, the tile type(s) which make up $\alpha$ as $\alphatiles$, and the remaining tiles as $\frametiles = (T - \cdtiles) - \alphatiles$.

\begin{lemma}\label{lem:rectangle-layer}
For each STAM system $\calT = (T,\alpha,2)$ with tile set $T$ (i.e. the frame building tile set) and input assembly $\alpha$, there exists some constant $c_{\alpha}$ such that regardless of the assembly sequence, after $c_{\alpha}$ tile attachments a rectangular frame layer of tiles in $\frametiles$ will have grown around $\alpha$. 
\end{lemma}

\begin{proof}

To prove Lemma~\ref{lem:rectangle-layer}, the main point we must prove is that, regardless of the shape of $\alpha$, every possible assembly sequence of $\calT$ will, in a finite number of steps, build a rectangular layer.  To prove this, we begin by noting that a rectangle is the only shape with no concavities.  The frame, which is the portion of the assembly consisting of tiles of $\frametiles$, grows as a series of (sometimes partial) layers around (portions of) $\alpha$.  Every tile of $\frametiles$ attaches (either as a singleton, preformed duple, or 3-tile $\cgSE$) to $\alpha$, a tile of a layer which is between the attaching tile(s) and $\alpha$ (i.e. part of a layer that grew before and between), or a doubling row.  This follows directly from the definitions of the tile types of $\frametiles$ and the fact that the only glues that are $\on$ before they attach to the assembly require at least one attachment to $\alpha$, a previous layer of the frame, or a doubling row to allow a stable attachment.  Every portion of a layer which cannot grow from the north output of a $\cgSE$ to the west input of a $\cgSE$ (the same or a different $\cgSE$) must be prevented from doing so by (1) a collision with either $\alpha$ or a tile of a previous layer or doubling row, or (2) a situation where it grew over both tiles of the south side of a $\cgSE$.

Case 1:  In any such case of a collision, there must be a location adjacent to that path where a collision detection tile or duple can attach due to the fact that such a path will expose $c$ glues along its entire right side, and any collision ensures that there is a tile (of $\alpha$, some portion of layer 1, some portion of a layer 2 path, or a doubling row) next to that path which exposes a glue matching one on a collision detection tile/duple.  The design of the collision detection tiles, as seen in Figures~\ref{fig:collision-detection-tiles} and \ref{fig:collision-detection-corner-cases}, ensures that regardless of the relative locations of those two glues, there is a collision detection tile/duple which can attach.  This attachment will begin the propagation of a $q$ message which will propagate back to the $\cgSE$ which initiated the growth of that layer.  (Various collision detection scenarios can be seen in Figures~\ref{fig:new-frame2-concave-collision1}, ~\ref{fig:new-frame2-duple-detect-collision}, ~\ref{fig:new-frame2-single-detect-collision}, ~\ref{fig:standard-row-nonquitting-cgse}, ~\ref{fig:special-collision-detection}.)

This $q$ message must be able to follow the path exactly backwards by the definitions of the layer 2 tiles, eventually arriving at the $\cgSE$ from which it grew.  This will cause that $\cgSE$ to activate glues which allow another $\cgSE$ to attach to its outside
, initiating the growth of another layer, and it will also continue to pass the $q2$ message CW which will initiate growth of a doubling row on the south side of a layer 2 path which either may have already collided, or later will, with the west side of that $\cgSE$.  Note that the $\cgSE$ receiving the $q$ message could be prevented from allowing the attachment of a new $\cgSE$ on its outside by a row (or doubling row) which has grown south of it.  This is discussed in case 2 next, but note that in this case a new layer is also initiated. 

\begin{figure}
    \centering
    \includegraphics[width=2.0in]{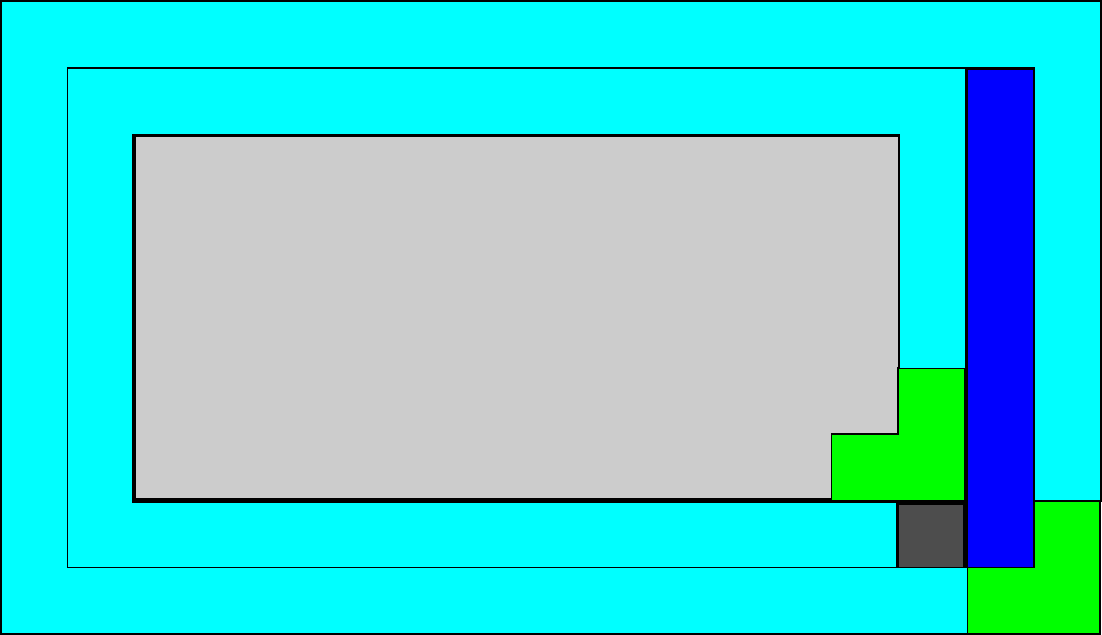}
    \caption{Example frame growth around an input assembly $\alpha$ (grey) which is nearly rectangular (with the exception of a single-tile concavity on the southeast corner). $\cgSE$'s are green, standard paths in light blue, and doubling rows in dark blue. This example shows how, when a path grows across the bottom of a $\cgSE$, it initiates the growth of a vertical doubling row, and the attachment of another $\cgSE$ at an offset of $(2,-2)$ from the first. \vspace{-10pt}}
    \label{fig:collision-detection-corner-cases2}
\end{figure}

Case 2:  When a row grows across the south side of a $\cgSE$, it detects that situation and initiates a vertical doubling row which grows up the east side of the row initiated by that $\cgSE$.  Additionally, the bottom tile of that vertical doubling row exposes the glues necessary to allow a $\cgSE$ to bind to it, thus initiating the growth of a new layer. (An example can be seen in Figure~\ref{fig:collision-detection-corner-cases2}.)

At this point, we distinguish the cases of $\cgSE$'s which attach to corners on $EXT(\alpha)$ (i.e. on the exterior of $\alpha$, not in a concavity) from those which attach to corners on $INT(\alpha)$. (See Section \ref{sec:frame-defns} for the definitions of $EXT$ and $INT$.)  We call the former \emph{exterior} $\cgSE$'s and the latter \emph{interior} $\cgSE$'s.  It is clearly possible for an interior $\cgSE$ to attach in a concavity which is too narrow for a new $\cgSE$ to attach to its outside, but our argument does not depend on the ability of such $\cgSE$'s to initiate growth of new layers so we will ignore them.  (This is because growth inside, and into, concavities must eventually completely fill at least the entrances to those concavities - even if some portions of concavities remain unfilled (at this stage of the construction).  As long as exterior $\cgSE$'s can continue to initiate the growth of new layers until a rectangular layer is formed, the layers will eventually grow over the filled in entrances to all concavities.) Instead we will focus on external $\cgSE$'s (implicitly) for the rest of the proof, and we will note that because of the handling of Case 1 and Case 2 above, a \emph{stack} of exterior $\cgSE$'s (i.e. a sequence of $\cgSE$'s which have bound either directly to each other or the doubling rows separating them, growing outward from the same corner - see Figure~\ref{fig:stack-catch-up} for an example) can always continue outward growth of additional layers except in the case where one stack, $s_1$, which is to the west of a second stack, $s_2$, grows to a point where the $x$-coordinate of the eastern tiles of a $\cgSE$ in $s_1$ is greater than the $x$-coordinates of all portions of $\cgSE$'s in $s_2$.  When this occurs, we say that $s_2$ is \emph{covered}. At this point new layers are no longer initiated by $s_2$, although the final $\cgSE$ of $s_2$ must receive a $q$ message since its layer is not a rectangle. Moreover, the tiles of the path which grew from the final $\cgSE$ of $s_2$ will expose $x2$ glues allowing a path initiated by a $\cgSE$ of $s_1$ to grow around its outside edge. In fact, one path from that stack will collide with the southeast tile of $s_2$'s final $\cgSE$, allowing a collision detection duple to attach and pass a $q$ message backward, finally allowing a path from $s_1$ to envelope $s_2$. Growth of layers will continue to be initiated by $s_1$ (assuming none is yet rectangular).  We will now discuss why this is not only desired, but guaranteed to happen so that eventually there is exactly one stack of $\cgSE$'s, which must be the case for a rectangular layer to exist. 

The above arguments ensure that an exterior stack of $\cgSE$'s can continue to initiate the growth of additional layers as long as that stack is not covered by another.  Since there can be only one southeast convex corner in a rectangular layer, and therefore only one $\cgSE$, from a single stack, contained within it, we now must show that ultimately there will be only one remaining stack.  Therefore, assume that there are $n > 1$ stacks of exterior $\cgSE$'s.  We will discuss the growth of the stack furthest to the west, $s_w$ and that of the stack furthest to the east, $s_e$.

First, we note that after some finite amount of outward growth, as $s_e$ grows outward each $\cgSE$ must be attached directly to the outside of the $\cgSE$ immediately before it, i.e. there are no vertical doubling rows.  This is because neither of the 2 scenarios which result in a vertical doubling row and thus a space between the consecutive $\cgSE$'s of a stack is possible:  (1) given that there is at least one stack of exterior $\cgSE$'s to the west of it, and that it is an exterior stack, then it is geometrically impossible for a path which does not grow into a concavity of $\alpha$ (which must be the case for all paths after some finite number of layers since, as previously discussed, all concavities become covered up) to grow eastward to, and/or along the outside of, a $\cgSE$ in the easternmost stack since all paths only grow CCW (and note that this is the case for any exterior stack which has another to its west), and (2) as the easternmost stack, paths growing upward from $\cgSE$'s of $s_e$ cannot collide with other $\cgSE$'s from a stack to the north or east.  Therefore, both situations which result in vertical doubling rows are impossible for the $\cgSE$'s of $s_e$.  This means that for each path which is output from $s_e$, the southeast corner of the lowest $\cgSE$ in $s_e$ is offset from the previous by the vector $(1,-1)$.

Now, we note that after some finite amount of outward growth of $s_w$, since each additional $\cgSE$ must be placed at an $x$-coordinate which is at least 1 greater than the $\cgSE$ which preceded it, the $x$-coordinate of the eastern tiles of a $\cgSE$ in $s_w$ must be greater than or equal to the $x$-coordinates of some $\cgSE$'s in the stack immediately to its east. An important property of the stacks of $\cgSE$'s is that they all grow along the slope $(1,-1)$ (although sometimes at the vector $(2,-2)$), so no two stacks can have $\cgSE$'s which collide.  However, as soon as the $\cgSE$'s of one stack fall to the immediate south of $\cgSE$'s of a stack to its east (and again, those $\cgSE$'s cannot have paths growing into their west sides after some finite amount of outward growth), then the western stack will start having vertical doubling rows between each pair of consecutive $\cgSE$'s.  This will cause the offsets between them to be $(2,-2)$.  Assuming that there are $n$ stacks, it is possible that each of the western $n-1$ has grown to a position where it is growing over $\cgSE$'s of the stack to its east.  However, as previously mentioned, the easternmost (i.e. $s_e$) can not be growing vertical doubling rows and grows downward by the vector $(1,-1)$.  Since the paths from $s_e$ must eventually grow to $\cgSE$'s in $s_w$, there can only be one $\cgSE$ added to $s_e$ for each added to $s_w$, since the only ways that a $q$ message can be passed back to $s_e$ and initiate growth of another layer require that a $\cgSE$ be added to $s_w$ (i.e. there must be a collision with a $\cgSE$ of $s_w$ or a path initiated by it, which will cause a new layer to be initiated in $c_w$ as well). Now, as noted, each additional $\cgSE$ of $s_w$ eventually grows at a vector of $(2,-2)$ relative to the previous while those of $s_e$ can only grow at the vector $(1,-1)$.  (See Figures~\ref{fig:new-frame2-full-example2} and \ref{fig:new-frame2-full-example2-disconnect} for an example.)  As we are interested in proving that no race condition can occur which allows multiple stacks to grow indefinitely, we will show how
$s_w$ covers those to its east until only $s_w$ remains.  Each stack between $s_w$ and $s_e$ must eventually grow so that they are each placing $\cgSE$'s directly south of stacks which are east of them.  For them not to eventually cover $s_e$, they must grow at the same vector as $s_e$, namely $(1,-1)$ and grow layers at no greater a rate than $s_e$.  However, $s_w$ must grow layers at the same rate as $s_e$, since $s_e$ is not able to grow new layers without continued growth of $s_w$ (as previously shown), and $s_w$ must eventually grow strictly at the vector $(2,-2)$.  This ensures that $s_w$ eventually covers all stacks and becomes the sole stack of $\cgSE$'s.  Otherwise, the stacks between $s_w$ and $s_e$ can grow faster than $s_e$ and eventually cover it, but then as each becomes the new easternmost stack, its growth must slow relative to $s_w$ (as it becomes the new $s_e$ and behaves exactly as the previous), eventually ensuring that all stacks are covered by $s_w$.

The final thing to show is that the paths growing from $s_w$, the only remaining stack of $\cgSE$'s, will eventually become rectangular.    Assuming that at the point at which $s_w$ becomes the single stack, the next layer it grows is not rectangular, then there are two scenarios to consider.  The most recently placed $\cgSE$, $c_x$, grows a path CCW which eventually (1) collides with a $\cgSE$, say $c_y$, in $s_w$ which is north of $c_x$ in the stack (and thus was placed earlier), or (2) grows to the south of $c_x$.  In (1), each path which collides with such a $\cgSE$ will receive a $q$ signal since the path which grows outward from $c_y$ is not rectangular and therefore as previously shown must encounter a collision which ensures the propagation of a $q$ message (even if this message must pass around $\alpha$ multiple times and through multiple $\cgSE$'s).  This will cause a (horizontal) doubling row to grow along the south side of the path to the west of $c_y$.  As this doubling row must grow before another layer can grow across and collide with another $\cgSE$ in $s_w$, and that doubling row covers up one of the $\cgSE$'s in $s_w$, the vertical distance between $c_x$ and $c_y$ will be one greater than the distance between the path which grows from the $\cgSE$ that attaches to $c_x$ and the $\cgSE$ which it collides with.  Eventually, with the distance decreasing by one each time, it will become $0$, meaning that a path grows out of the north of a $\cgSE$ and back into the west of that same $\cgSE$, thus it is a rectangle.  In (2), since $c_x$ cannot initiate the growth of another layer (via the attachment of a $\cgSE$ to its outside) without its path colliding with something, there must be no $\cgSE$ to the south of $c_x$ in $s_w$ (i.e. it would have needed to be there before the collision which enabled it be placed there).  Therefore, the path must grow across the outside (i.e. on the south side) of $c_x$.  In this case, as shown in Figure~\ref{fig:collision-detection-corner-cases2} and previously discussed, a vertical doubling row will be initiated and a new $\cgSE$ will be allowed to attach to the east and south of the bottom of it.  In this way, the path which grew around and back below the $c_x$ terminates and a new $\cgSE$ initiates a row which now must be exactly rectangular.

Thus, we have shown that all possible assembly sequences must eventually result in a rectangular row after a finite number of tile additions. \qed \end{proof}

\begin{figure}
\centering
\subfloat[
\label{fig:new-frame2-full-example2}]{%
\includegraphics[width=1.35in]{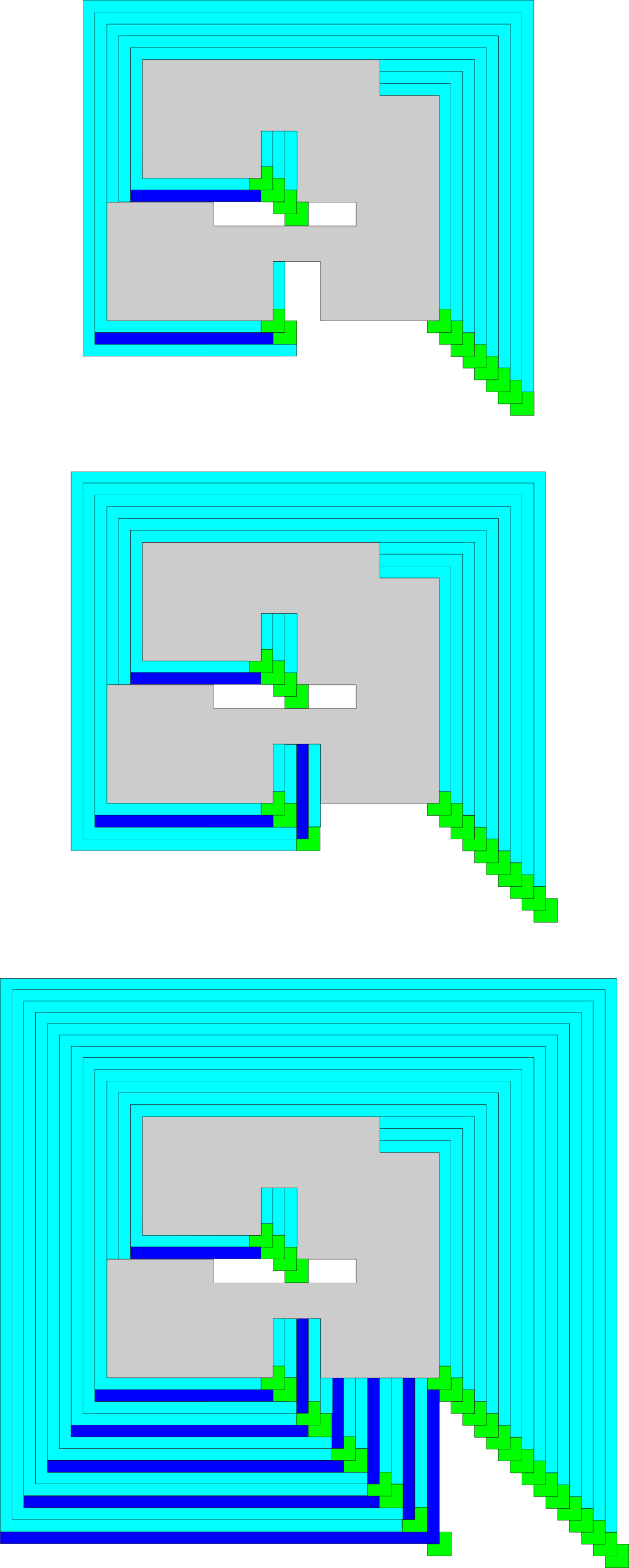}%
}%
\qquad%
\subfloat[
\label{fig:new-frame2-full-example2-disconnect}]{%
\includegraphics[width=3in]{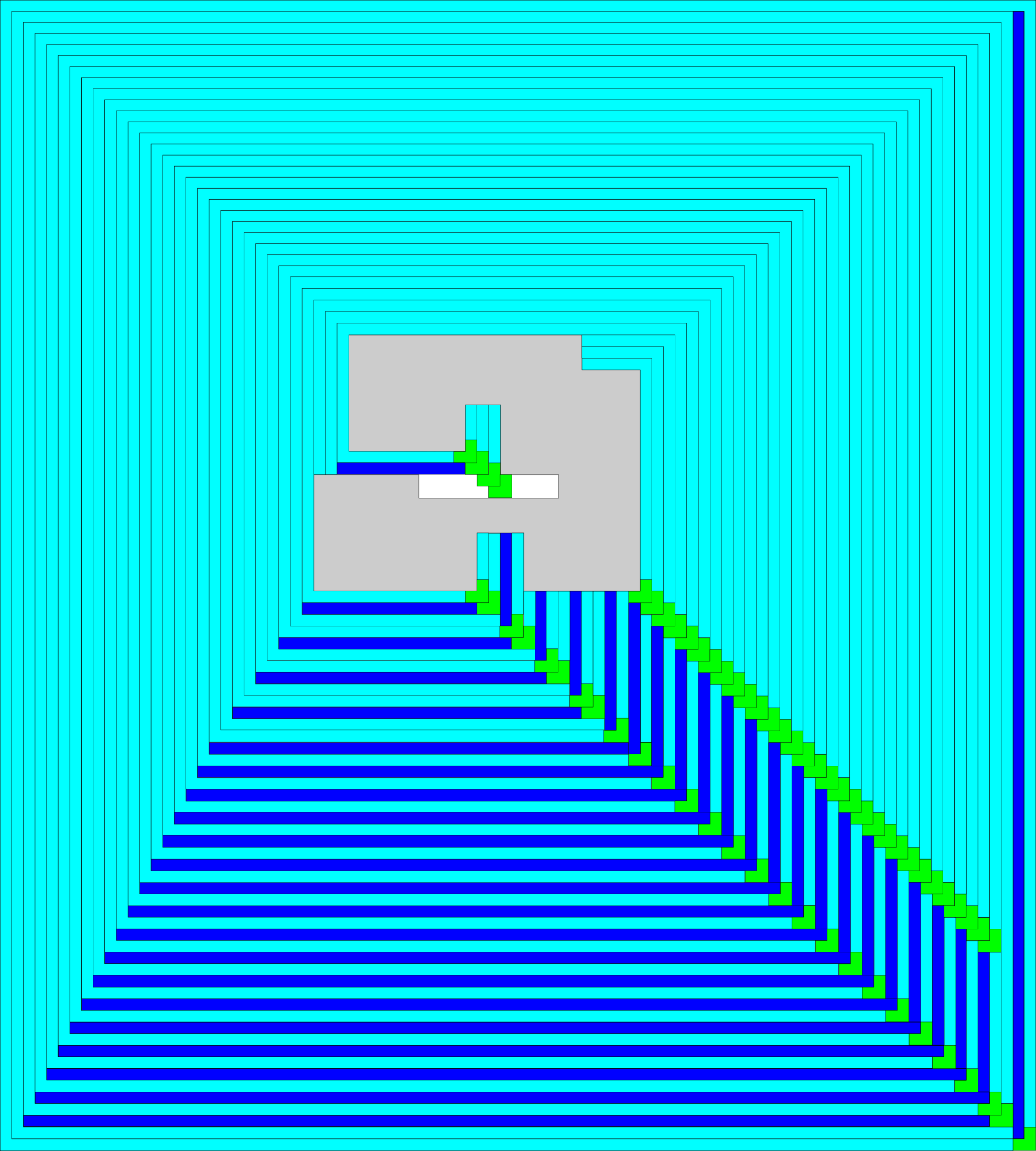}%
}
    \label{fig:frame-full-ex}
    \caption{ (a) The full frame for the example of Figure~\ref{fig:new-frame2-full-example-begin} showing an alternate assembly sequence 
    (from that of Figure~\ref{fig:new-frame2-full-example}) for the growth of the frame around $\alpha$ until it terminates in the first rectangular layer.  
    In this assembly sequence, the eastern stack of $\cgSE$'s continues growth much further than in the other.  
    In particular, the $\cgSE$'s of the eastern stack win the maximum number of race conditions possible, 
    i.e. the rows which grow from them grow very quickly and $q$ messages travel back quickly, so new $\cgSE$'s are able to attach before the $\cgSE$'s of the west 
    stack can grow past them, which could occur if a vertical doubling row from the west stack is able to quickly grow northward and collide with the south 
    side of the $\cgSE$ which caused it to be initiated, preventing the east stack's next $\cgSE$ from attaching. By winning these race conditions, 
    the east stack is able to grow to a maximal depth.  Nonetheless, the depth is still bounded since the west stack can only catch up and surpass it by use of both
    vertical and horizontal doubling rows. In this example, the standard path overtake the western $\cgSE$ before they are able to grow a standard path north, causing
    the second case of doubling row to be activated from Figure~\ref{fig:vdr1}.
     This causes a repeating pattern of vertical and horizontal doubling rows until the western $\cgSE$ begin to overtake the eastern $\cgSE$, at which case the
     vertical doubling row instance from Figure~\ref{fig:vdr2} occurs. 
Light grey: $\alpha$, Green: $\cgSE$, Light blue: standard path, Dark blue: doubling row. 
(b) The example from Figure~\ref{fig:new-frame2-full-example2} completed. 
Note that at the second to last western $\cgSE$, the second vertical doubling row case from Figure~\ref{fig:vdr1} required to ensure the outermost layer is rectangular.
\vspace{-10pt}}
\end{figure}

\begin{lemma}\label{lem:rectangle-detach}
For each STAM system $\calT = (T,\alpha,2)$ with tile set $T$ (i.e. the frame building tile set) and input assembly $\alpha$, as a frame assembly grows, no subassembly containing tiles of $\frametiles$ can completely detach from the assembly unless it consists of a path which forms a complete and exact rectangle.  Furthermore, the first such rectangular path will detach and then no further layers will grow.
\end{lemma}

\begin{proof}
To prove Lemma~\ref{lem:rectangle-detach}, we first show that no frame path $p$ which is non-rectangular can ever detach.  As mentioned in the proof of Lemma~\ref{lem:rectangle-layer}, every frame tile (i.e. those in $\frametiles$) which attaches to the assembly does so by making one of its initial bonds to $\alpha$, a tile of a layer which is between it and $\alpha$, or a doubling row.  If it is a $\cgSE$, it makes two such bonds, and otherwise it makes one of its initial bonds to another tile of the same path.  Thus, before any glue deactivations, all tiles are stably attached to each other and the assembly to their inside (i.e. left sides).  The only way for a frame tile to detach, therefore, is for one or more glues to be deactivated, and the only glues that can be deactivated on a frame tile are (1) the $x2$ glue binding it to the tile to its inside, and (2) the $c$ glue on its outside (i.e. right side) which can attach to a collision detection tile.  Since the $c$ glue only binds (temporarily, before being deactivated) to a collision detection tile, its deactivation is not sufficient to allow a frame tile - possibly along with an attached subassembly - to detach.  Therefore, we now discuss the deactivation of the $x2$ glue, which occurs only as a $d$ message is passed CW through a path.  A $d$ message is initiated only when a frame path grows around three convex corners (NE, NW, and SW) and then places a tile so that its front (east) side is adjacent to the west side of a $\cgSE$, and then a special collision detection duple attaches to that $\cgSE$ and path tile.  This binding causes the west tile of the $\cgSE$ to activate a $d$ glue on its west side, which attaches to the $d'$ glue on the east of the colliding path tile.  This initiates the passing of the $d$ message through the entire path, back to the $\cgSE$ which initiated it.  Each tile which the $d$ message passes through deactivates the $x2$ glue on its left side and continues passing the message.  It is important to note that the $d$ and $d'$ glues are strength-2, so as a path detaches from the tiles inside of it, the tiles of the path itself remain stably connected to each other, as well as to the $\cgSE$ which initiated the $d$ message. (Note, the frame tiles which bind to $\alpha$, i.e. those of layer 1, do not pass the $d$ message, so the innermost layer does not detach following this procedure.)  In this way, the tiles in a path can detach from the assembly inside of them, but they remain bound to the $\cgSE$ which initiated the $d$ message.  The only way that a $\cgSE$ will deactivate its $x2$ glues is if it receives a $d$ message from its north, and it can only do that once it has already sent a $d$ message out its west side (since the glue that receives the signal from the north is only activated after the $d$ glue on its west has bound).  While a series of paths and $\cgSE$'s to which they are attached may cause a $d$ message to make multiple CW passes around the assembly, causing most of that connected path to detach from the inner assembly, that path is ultimately still connected to the most CCW $\cgSE$ which initiated a $d$ message on that path.  However, any path which is not part of a layer which is not perfectly rectangular must, by following this path, be connected to a $\cgSE$ whose north-growing path terminated in a collision.  That means that that final $\cgSE$ must never be able to receive a $d$ message from its north (and will only receive the $q$ message initiated by the collision), and so will never detach, ensuring that the entire path which may be connected to it through its west side also remains attached.  (See Figure~\ref{fig:new-frame2-full-example-disconnect} for an example.)  Therefore, no assembly containing $\frametiles$ can detach from the assembly if they are not part of an exactly rectangular layer.

Second, if a path grows from the north side of a $\cgSE$ $c$, around $\alpha$, and arrives with the front of its final tile at the west side of $c$, then it could only have grown as a perfect rectangle.  As previously mentioned, a special collision detection duple will attach to the final tile of the path and the western tile of $c$, and this attachment will initiate the CW propagation of a $d$ message which will exactly follow the path backward to the north side of $c$.  Each tile that the $d$ message passes through will deactivate the $x2'$ glue on its left side, and once that message arrives at the north side of $c$ it will cause $c$ to activate a $p$ glue which will ``prime'' the $\cgSE$ to the inside of it, whose binding will then cause one $x2'$ glue of $c$ as well as the $x2$ glue of the other $x2$-$x2'$ pair of the $\cgSE$ inside it to be deactivated.  This will result in all glues binding that layer to the layer inside of it, other than the $p$ glue, to turn $\off$ (other than one of those on its $\cgSE$, but the glue it was bound to on the inner $\cgSE$ will turn $\off$), causing the layer to be attached with only a single strength-1 glue and therefore to detach.  The detached layer will only at that point have one active $x2$ glue and one active $p$ glue, and therefore nothing can bind to it since all $p'$ glues which have been activated on other layers are hidden by outer layers.  (Additionally, no $q$ message will ever be initiated or received by that layer, and therefore no additional layer will be able to grow on its outside.)  Finally, the remaining frame assembly, without the outer layer, must only have one of the $x2$ glues on its $\cgSE$ $\on$ and therefore no new $\cgSE$ can attach to begin re-growth of another layer.  Only the next priming tile can attach.
\qed
\end{proof} 

\begin{lemma}\label{lem:marking}
For each STAM system $\calT = (T,\alpha,2)$ with tile set $T$ (i.e. the frame building tile set) and input assembly $\alpha$, as a frame assembly grows, the eventual detachment of the rectangular layer will result in a single tile of the frame, one which is adjacent to a location of $\alpha$, activating a $z$ glue.
\end{lemma}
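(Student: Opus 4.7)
The plan is to trace, step by step, the chain of signals initiated by the detachment of the rectangular layer guaranteed by Lemma~\ref{lem:rectangle-detach}, and to argue at each stage that there is a unique signal path leading to the activation of a single $z$ glue on a layer 1 tile. Let $F$ denote the rectangular frame layer that detaches, and let $CG_F$ be its (unique) $\cgSE$. By the construction of the detach mechanism described in Case~\ref{item:detach}, the $d$ message travels CW around $F$ and arrives at the north side of $CG_F$, which (by the signals of Figure~\ref{fig:SE-corner-gadget-tiles}) fires a $p$ signal that activates a $p2'$ glue on the east edge of the $\cgSE$ (call it $CG$) directly interior to $CG_F$. Crucially, because $F$ is rectangular, $CG_F$ is the unique $\cgSE$ of $F$, so exactly one $p2'$ glue is activated by the detachment.

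Next I would argue that the subsequent scanning process uniquely identifies a $\cgSE$ of layer 1. A single primer tile (Figure~\ref{fig:primerTile}) binds to the unique $p2'$ glue exposed on $CG$, triggering the $sp$ message that propagates east-to-west along the row of frame tiles containing $CG$. This scan has only two possible fates per row: either it meets a tile of layer 1, or it reaches the westernmost frame tile without doing so. In the first case, by the design of the $\cgSE$ tiles of layer 1 (Figures~\ref{fig:layer1-SEcg} and \ref{fig:leader-election-tiles}), the tile that catches the $sp$ message must be a corner tile of a $\cgSE$ of layer 1, since standard path tiles and other corner gadgets of layer 1 either do not expose $sp'$ glues or would have already been bypassed. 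In the second case, the west-edge tile $L$ of Figure~\ref{fig:leader-election-tiles} binds, fires an $spR$ message across the row, which exposes an $spR$ glue on the east end, causing tile $R$ to bind and re-emit $sp$ one row further north. Since the frame is finite and contains only finitely many rows between $CG$ and layer 1, iteration of this shift eventually reaches the southernmost row containing layer 1 tiles, and by Observation~\ref{obs:secExists} (applied to $\alpha$ and the surrounding frame growth) this row contains a $\cgSE$ of layer 1, so the scan must terminate by binding to such a corner tile.

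The third step is to show that precisely one such termination occurs. Because the $sp$ message at each row is a single signal traveling east-to-west along a single linear chain, and because the shift-up gadget (tiles $L$ and $R$) is triggered exactly once per row (its glues are $\latent$ until the $sp$ message exposes them, and each one is used only once by the STAM semantics), the scan is a single-threaded process: there is no branching, and the first layer 1 $\cgSE$ encountered is uniquely determined by the geometry. By Observation~\ref{obs:secExists}, the winning $\cgSE$ sits along the southernmost row of layer 1, and its corner tile is the easternmost such tile. The $sp$ signal then fires the $u$ signal described in Figure~\ref{fig:layer1-SEcg}, which propagates to the westernmost tile of this $\cgSE$, whose $sp'$ binding activates the $z$ glue on its north edge. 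This tile is a layer 1 tile, hence by definition directly adjacent to a tile of $\alpha$.

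The main obstacle I expect is proving the uniqueness claim rigorously, because the STAM's asynchrony allows multiple pending signals and multiple candidate primer/shift-tile attachments in principle. The argument will rely on the fact that each $p2'$, $sp'$, $spL'$, $spR'$, and $sp2'$ glue is initially $\latent$, is activated only by a predecessor binding event, is consumed (turned $\off$) immediately upon producing its output signal, and thus can participate in at most one binding event. Chaining these single-use guarantees from $CG_F$ all the way down to the westernmost tile of the elected $\cgSE$ of layer 1 yields exactly one activation of $z$, completing the proof. \qed
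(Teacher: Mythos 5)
Your proposal is correct and follows essentially the same route as the paper's proof: invoke Lemmas~\ref{lem:rectangle-layer} and \ref{lem:rectangle-detach} to obtain the detaching rectangular layer with its unique $\cgSE$, then trace the $p \to p2' \to$ primer-tile $\to sp$ cascade with the $L$/$R$ shift gadgets until the scan terminates at the easternmost of the southernmost layer~1 $\cgSE$'s, whose westernmost tile activates the $z$ glue. Your added emphasis on the single-use semantics of the $\latent$ glues to rule out branching under asynchrony is a reasonable elaboration of the same argument rather than a different approach.
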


\begin{proof}
Given Lemmas~\ref{lem:rectangle-layer} and \ref{lem:rectangle-detach}, $\calT$ is guaranteed to form a rectangular frame layer which will detach.  That layer will have exactly one $\cgSE$, and before it can detach that $\cgSE$ will, by the required binding of its $p$ glue, ensure that the $\cgSE$ to which it was attached will activate a $p2'$ glue.  Once the outer layer has detached, the only tile that can attach to the now outer layer of the frame is the ``primer'' tile with $p2$.  Upon binding to the $\cgSE$'s $p2'$ glue, a cascade of messages will begin.  See Section~\ref{sec:leader-details} for specific details of the tiles that pass these messages, but beginning from the $\cgSE$ an $sp$ message will be sent along the entire bottom row of the assembly.  Once it reaches the western end, an `$L$' tile will attach which will cause an $spR$ message to be sent all the way back to the right side.  At this point, an `$R$' tile will attach and cause the tile it binds to to pass the $sp$ message one row to the north and then begin the process again.  This will occur until a westward traveling $sp$ signal eventually arrives at the easternmost of the southernmost $\cgSE$'s which are attached directly to $\alpha$.  Then, that $\cgSE$ is ``marked'' as the leader so that it will expose the single $z$ glue, and the $sp$ and $spR$ messages will no longer be passed.  By the facts that (1) there must be some $\cgSE$ which is the easternmost of the southernmost, (2) the $sp$ signal cannot collide with any tile of $\alpha$ before it stops at the southeasternmost $\cgSE$ (because of the fact that any signal traveling in that pattern must collide with a tile at that position of $\alpha$ and such a convex corner must have a $\cgSE$ attached to it in order for the subsequent layers to have grown, and (3) all frame building tiles can pass these messages, the correct corner gadget must receive the message.  Upon receiving the message, it will activate the necessary $z$ glue.
\qed
\end{proof}

\begin{lemma}\label{lem:full-mold}
The leader tile will initiate a message which will propagate around $\alpha$ and guarantee that for every tile location directly or diagonally adjacent to a tile in $\alpha$, layer 1 contains a tile in this location. Moreover, after all pending signals of the $br$ message have fired, the entire frame assembly detaches from $\alpha$ exposing on its interior a unique glue activated only on the leader.
\end{lemma}

\begin{proof}
All of the perimeter edges of $\alpha$ can be enumerated starting from the north edge of the leader tile and proceeding to each consecutive incident edge in the CCW direction. Then, by the construction given in Section~\ref{sec:outlining}, as the $g$ message propagates around $\alpha$ CCW starting from the leader tile, if the only tiles that are allowed to attach to the existing assembly ($\alpha$ and attached frame building tiles) are the singletons given in Section~\ref{sec:outlining}, then it is clear that each $x$ glue on each consecutive edge enumerated in the CCW direction either binds to a tile passing the $g$ message or binds to a tile via an $x$ glue activated by one of signals of the $g$ message. Also, by the construction presented in Section~\ref{sec:leader-details}, we can ensure that the north edge of the leader tile is the only edge that exposes some unique glue. Then since the $br$ message follows $g$ deactivating each $x'$ glue, after the frame disassociates from $\alpha$, the leader tile will expose a unique glue.

Therefore, to complete the proof of this lemma it suffices to show that as the $g$ message propagates around $\alpha$ CCW starting from the leader tile, the only tiles that are allowed to attach to the existing assembly ($\alpha$ and attached frame building tiles) are the singletons given in Section~\ref{sec:outlining}.
Note that during the propagation of the $g$ message, at most a single $x'$ and $g$ glue may be exposed on an edge of the perimeter of the existing assembly. Therefore, some (possibly much smaller) shape could not bind via this $x'$ glue since it only has strength 1. Then note that if even though two such assemblies exist, they possibly expose $x$ glues and an $x'$ glue. However, these two assemblies cannot bind because the edges that expose these glues on each assembly must belong to tiles that are completely surrounded by tiles of one or more frame layers. Therefore, an assembly that exposes an $g$ and $x'$ glue on its perimeter can only bind to one of the singleton tiles or a 2-tile corner gadget described in Section~\ref{sec:outlining}.
\qed
\end{proof}

\begin{lemma}\label{lem:no-combine}
Let $\beta$ and $\gamma$ be distinct producible assemblies of $\mathcal{T}$ that each have an input assembly as a subassembly. Then $\beta$ and $\gamma$ cannot bind.
\end{lemma}

\begin{proof}

During the assembly of frame layers, many of the glues are activated (mostly for message passing purposes) only after the edge that the glues are on abut an edge of some tile. As long as these glues are never exposed on the perimeter of an assembly, these glues cannot allow for two producible assemblies to bind. Then, by construction, all of glues that are exposed on the exterior of an assembly have specific purpose and only allow for either some singleton tile, 2-tile gadget or 3-tile gadget to bind. For example, we use collision detection tiles/duples to minimize exposed $x'$ glues.
The one exception of this is that when layer 1 is propagating a $g$ message as described in Section~\ref{sec:outlining}, a single $x'$ glue may be exposed on the perimeter of a producible assembly. As noted in the proof of Lemma~\ref{lem:full-mold}, the only tile that may bind to a producible assembly with an input assembly as a subassembly is a singleton tile or a 2-tile corner gadget described in Section~\ref{sec:outlining}.
\qed
\end{proof}

In Theorem~\ref{thm:complete-frame}, let $\mathcal{T} = (T,\sigma,2)$ be a STAM system such that $T$ is the frame building tile set given by the construction in this section and $\sigma$ is a set of input shape assemblies.
Given an input shape assembly $\alpha$, and a TAS $\mathcal{T}$, we say that a terminal assembly $\beta$ is a \emph{completed frame assembly for} $\alpha$ \emph{ that exposes a unique glue}, $z$ say, if (after some translation of the locations in $\dom(\beta)$) there is a single finite connected component of the complement of $\dom(\beta)$ in $\Z^2$ that is equal to $\dom(\alpha)$, and moreover,  the tiles of $\beta$ at locations adjacent to locations of $\dom(\alpha)$ expose glues on the perimeter of $\beta$ such that there is a single tile (a leader tile) which exposes $z$ on a single edge of the perimeter of $\beta$.

\begin{theorem}\label{thm:complete-frame}
For any finite set $\sigma$ of input shape assemblies, if $\beta\in \termasm{T}$, then either $\beta$ is in $\sigma$, $\beta$ is a completed frame assembly for $\alpha$ that exposes a unique glue, which we denote by $z$, or $\beta$ does not expose $z$ on its perimeter (for example, the rectangular frame layer that detaches in our frame construction).
\end{theorem}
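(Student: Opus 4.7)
The plan is to perform a case analysis on the terminal assembly $\beta$, splitting on whether it contains an input assembly from $\sigma$ as a subassembly, and then to invoke the frame-construction lemmas already established in Section~\ref{sec:frame-building} in each case. The first (trivial) case is $\beta \in \sigma$, which falls immediately under the first option of the disjunction. So from here on, assume $\beta \notin \sigma$; then $\beta$ must contain at least one tile of $T \setminus \alphatiles$, and we proceed by whether it contains an input assembly.

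First I would handle the case in which $\beta$ contains some $\alpha \in \sigma$ as a subassembly. By Lemma~\ref{lem:rectangle-layer}, every assembly sequence starting from $\alpha$ produces, after finitely many attachments, a rectangular frame layer enveloping $\alpha$. By Lemma~\ref{lem:rectangle-detach}, the first such rectangular layer detaches and, once detached, no further layers grow, so the outermost frame structure around $\alpha$ is stable. Lemma~\ref{lem:marking} then shows that the detachment primes the layer immediately inside, triggers the $sp$-scan leader election, and activates a $z$ glue on a single leader tile of layer 1. Lemma~\ref{lem:full-mold} further guarantees that the leader's $g$-message propagates CCW entirely around $\alpha$, filling any gaps of layer 1 so that every perimeter edge of $\alpha$ is covered, after which the following $br$ message deactivates every $x'$ bond from layer 1 to $\alpha$ except the one on the leader's north edge, so that $\alpha$ disassociates from the frame and $z$ remains the only uniquely exposed glue on the frame's interior. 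Since $\beta$ is terminal, this process must have run to completion, and by Lemma~\ref{lem:no-combine} the resulting completed frame cannot combine with any other producible assembly containing an input subassembly. Hence $\beta$ is a completed frame assembly for $\alpha$ that exposes $z$, placing us in the second option.

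Next I would handle the remaining case, in which $\beta$ contains no input assembly and so is built entirely from tiles in $\frametiles \cup \cdtiles$. The only terminal assemblies of this form are (i) small preformed gadgets (such as $\cgSE$ triples, two-tile corner gadgets, collision-detection singletons/duples, or primer tiles) that have not attached to any input assembly, and (ii) rectangular frame layers that have detached per Lemma~\ref{lem:rectangle-detach}. In neither situation does $\beta$ expose $z$: a $z$ glue is activated only on the leader tile of a layer-1 $\cgSE$, and the signal that activates it is fired only after the $sp$-scan cascade described in the proof of Lemma~\ref{lem:marking}, which itself requires both the presence of an input assembly $\alpha$ and the priming event that follows the detachment of the first rectangular layer wrapped around $\alpha$. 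Consequently, standalone gadgets and detached rectangular outer layers never have $z$ active on their perimeters, so $\beta$ falls under the third option of the disjunction.

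The main obstacle is the verification in the third case: one must be precise that no spurious $z$-activation can occur in an assembly disjoint from any $\alpha$. This reduces to tracing every activation pathway for $z$ in the transition functions defined in Sections~\ref{sec:layer1}--\ref{sec:main-leader} back to the $sp$-signal the leader tile receives during leader election, and confirming that this signal is gated on the primer-tile attachment which itself requires a detached rectangular layer produced around an input assembly. The earlier structural lemmas already carry most of this burden, so what remains for this step is essentially a bookkeeping check that the leader tile has no other in-edge into its $z$-activation, together with confirming terminality: if $\beta$ contained an input assembly but were somehow ``stuck'' mid-cascade, one must point to a pending signal or unbound glue on $\beta$'s perimeter to contradict terminality, which the chain of Lemmas~\ref{lem:rectangle-layer}--\ref{lem:full-mold} rules out.
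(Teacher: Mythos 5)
Your proposal is correct and matches the paper's approach: the paper's own proof is a one-line appeal to Lemmas~\ref{lem:rectangle-layer}, \ref{lem:rectangle-detach}, \ref{lem:marking}, \ref{lem:full-mold}, and \ref{lem:no-combine}, and your case analysis simply makes explicit how those same lemmas cover the three disjuncts. The extra bookkeeping you flag for the third case (tracing every $z$-activation back to the primer-gated $sp$-cascade) is a reasonable elaboration of what the paper leaves implicit.
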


\begin{proof}
This follows directly from the Lemmas~\ref{lem:rectangle-layer},~\ref{lem:rectangle-detach},~\ref{lem:marking},~\ref{lem:full-mold}, and \ref{lem:no-combine}.
\end{proof}

\section{Replication} \label{sec:rep}
Having demonstrated the ability to create a frame from any hole-free 2D shape within a constant number of steps, we can utilize this tileset to carry out a variety of tasks related to unique shape recognition. We seek to replicate input shapes by filling in the frames generated, using them as a mold to create shapes of identical form. With careful construction, we can carry out infinite replication of shapes from only a single copy of a shape.

In this section, we provide a construction in the STAM which is capable of replicating the shapes of an arbitrary set of hole-free input assemblies. Furthermore, we show this replication can be done in an exponential manner. A high-level overview of the process by which this construction creates a copy of an arbitrary shape given by some frame is as follows. First, tiles grow in a counter-clockwise manner inside a frame beginning at the southeast corner which is detectable for any hole-free shape as indicated by Lemma \ref{lem:marking}. As this tile growth progresses, it generates a message which follows the edges of the frame. Once the message is detected by the starting tile, a second message passed through the recently placed tiles causes each tile to detach from the frame, activate new exterior glues, and fill in any internal voids.

\subsection{Definition of exponential, unbounded replication}\label{sec:exp-rep}

\begin{wrapfigure}{R}{0.4\textwidth}
    \centering
    \includegraphics[width=0.35\textwidth]{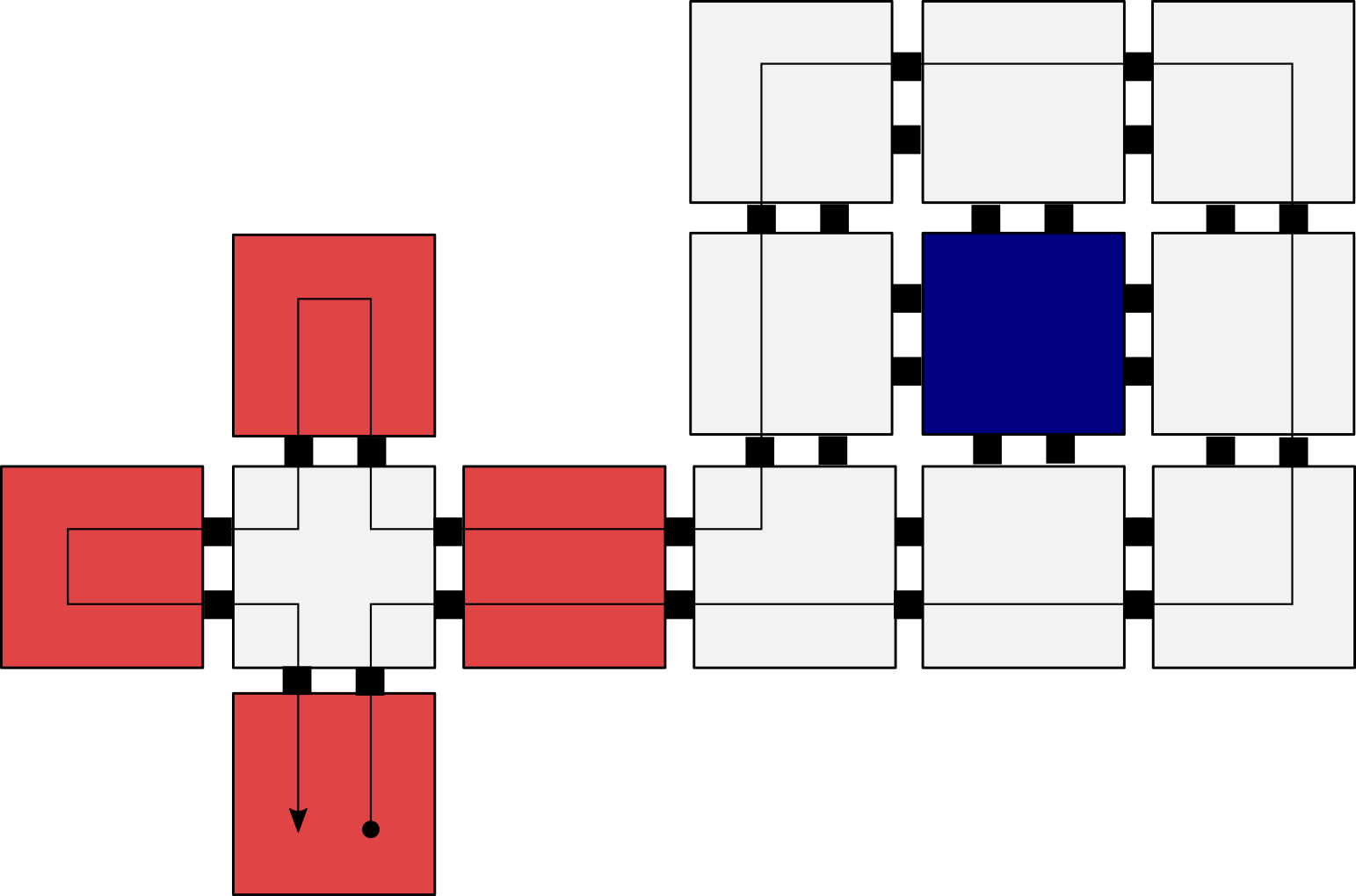}
    \caption{An example width-1 shape, $P$. For an assembly $\sigma$ such that $\dom{\sigma} = P$, red tiles of $\sigma$ fulfill the criterion of having two opposite edges in $PERIM(\sigma)$. The blue tile in the shape will be added by a fill tile. The line starting and finishing at the south eastern-most tile demonstrates the order of tiles visited by the $w$ message in the process of generating the ring.\vspace{-20pt}}
    \label{fig:w1-example}
    
\end{wrapfigure}

We now define what it means for an STAM system to replicate a shape $P$.  
In the following definition, we assume that the input assemblies $\sigma_1, \sigma_2, ..., \sigma_n$ all have the same glue on each edge of their perimeter 
and there is only one of each in the system.  We say that an STAM system $\mathcal{T} = (T, \{\sigma_1, \sigma_2, ..., \sigma_n\}, \tau)$ infinitely 
replicates the shapes $P_1, P_2, ..., P_n$ provided that 1) $\dom{\sigma_i} = P_i$ for all $i$, 2) there are a constant number of different terminal assemblies, 
and 3) there are an infinite number of assemblies of shape $P_i$ for all $i$.  We call assemblies $\sigma_1, \sigma_2, ..., \sigma_n$ replicable supertiles.  
In general, a replicable supertile in a system $\mathcal{T}$ is a supertile which can be ``replicated''.

In the replication process for a shape $P$, we consider a time step to be the worst-case number of tile attachments to produce a new assembly $\alpha$ with $\dom{\alpha} = P_i$ starting from an input assembly $\sigma_i$.  We say that an STAM system $\mathcal{T}$ \emph{exponentially replicates without bound} the shapes $P_1, P_2, ..., P_n$ given four properties. First, the system $\mathcal{T}$ infinitely replicates the shapes $P_1, P_2, ..., P_n$. Second, the number of (super)tile attachments and/or detachments required in the creation of a copy of shape $P_i$ (i.e. one time step) is bounded by $poly(|P_i|)$. Third, every (super)tile attachment involves a singleton tile or a two tile assembly and an existing assembly that is replicating $P_i$ for some $i$ (i.e. tiles either attach to a frame layer or attach as filler tiles); and moreover, all supertile detachment involves either 
a completed frame detaching from a seed  assembly $\sigma_i$ corresponding to $P_i$ for some $i$, or a completed frame detaching from a replicated shape assembly  corresponding to $P_i$ for some $i$. Finally, if there are $n$ copies of assemblies with the shape $P_i$ at time step $t$, then there are $2n$ copies of assemblies with the shape $P_i$ at time step $t+1$. Though this definition is specific to the replication technique given by our construction, it is motivated by the observation that for a single shape $P$, if a system $\mathcal{T}$ exponentially replicates without bound the shape $P$, then $\mathcal{T}$ exponentially replicates the shape $P$ according to the more general definition of exponential replication given in~\cite{SignalsReplication} (Definition 3).

We now provide a high-level overview of the construction.  Let $P$ be a connected, hole-free shape. Our construction begins with utilizing the tile set described in Section \ref{sec:frame-building} that builds a frame around $\sigma$ where $\dom(\sigma) = P$. Next, we design tiles so that a \emph{ring}, all tiles of $\sigma$ which contain at least one edge of $PERIM(\sigma)$, forms within the completed frame.  Once this ring is complete, it detaches and fills in its interior. The frame is then free to make another copy while the newly formed ring can host the growth of another frame. For descriptive purposes, we define \emph{width-1} shapes as shapes such that there exists at least one tile $t \in \sigma$ such that two opposite edges ($N$ and $S$ or $E$ and $W$) are both in $PERIM(\sigma)$. An example width-1 shape is shown in Figure \ref{fig:w1-example}. 

\subsubsection{Frame Tileset}
Our construction begins by adding tiles to our tile set which build a frame around $\sigma$ using the machinery from Section \ref{sec:frame-building}. We design our tile set for this frame system so that once the frame completes, it detaches from $\sigma$ with 1) exactly one south east corner of the frame which exposes a $z$ glue on the west face of the northernmost tile of the $\cgSE$, and 2) on every other exposed interior tile of the frame, a $y$ and $w^\prime$ glue are active. We note that due to the asynchronous nature of STAM signal activation, all $y$, $w^\prime$, and $z$ glue(s) may not be active on the frame at the same time. The ring creation tileset design considers this feature of the model by preventing ring growth until both $y$ and $w^\prime$ glues on the frame have been activated.

\subsubsection{Ring Creation}\label{sec:ring-form}

Since we have shown it is possible to determine the southeast most corner of a shape in Lemma \ref{lem:marking}, we utilize that as our starting point to build the ring of our shape inside the exposed frame. The overall process by which this occurs can be thought of as the following of the perimeter in a CCW direction - since we have a hole-free connected shape as an input, this will ensure that once the start tile is reached again, every tile in the ring has been reached. This guarantee is critical to ensure that the filling and detachment process does not begin spuriously. 


The initial set of tiles which creates a ring is demonstrated in Figure \ref{fig:rep_tiles}. In order for the ring to sense that it is complete and can begin the jettisoning process, we must pass a message CCW which starts at, and returns to, the start tile shown in Figure \ref{fig:rep1d_start}. To complete the tile set for the ring, we add copies of the crawler and duple tiles rotated 90, 180, and 270 degrees. 

\begin{figure}
\centering
  \subfloat[][The tile which begins the growth of the ring.]{%
        \label{fig:rep1d_start}%
	        \includegraphics[width=0.9in]{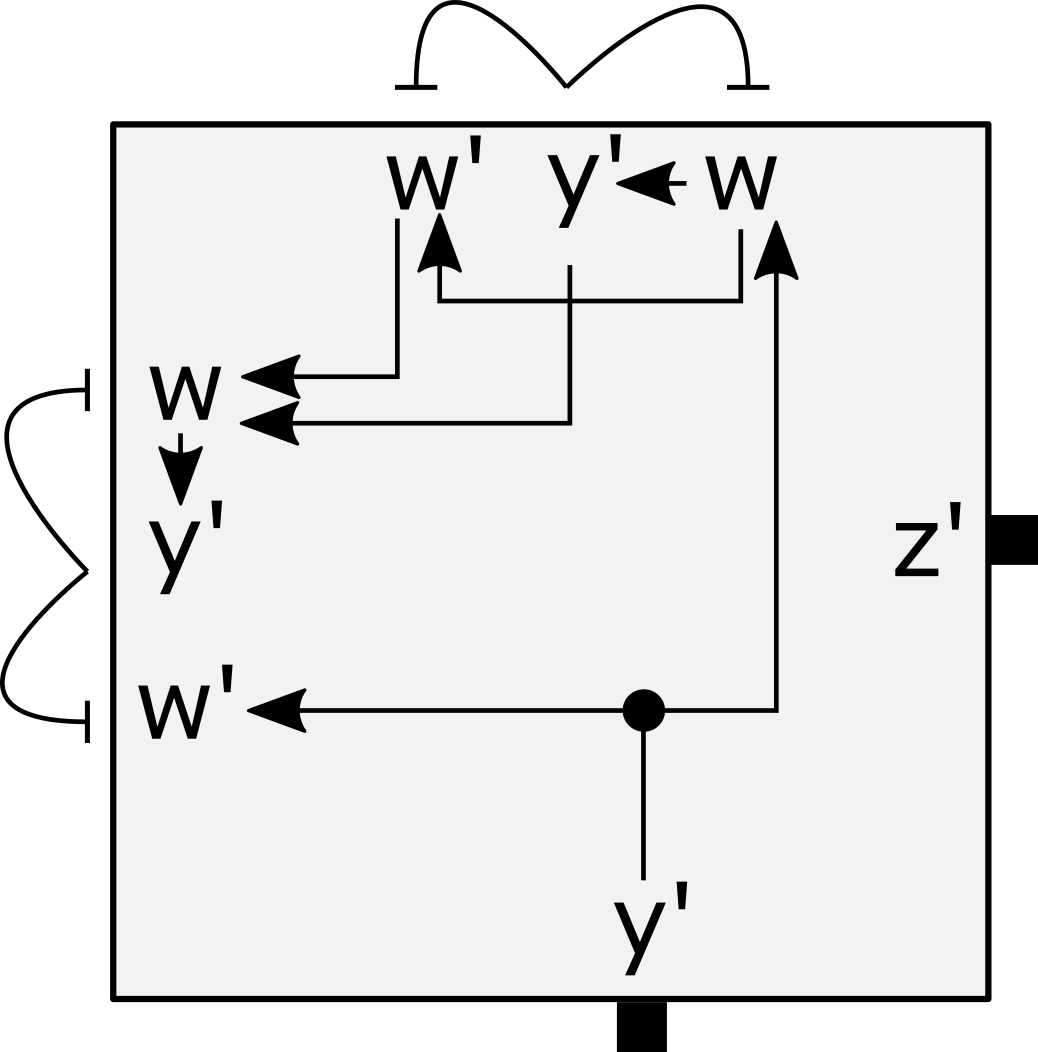}
        }%
        \quad\quad
  \subfloat[][A north crawler tile which propagates the $w$ message north.]{%
        \label{fig:rep1d_crawl}%
        		\includegraphics[width=0.9in]{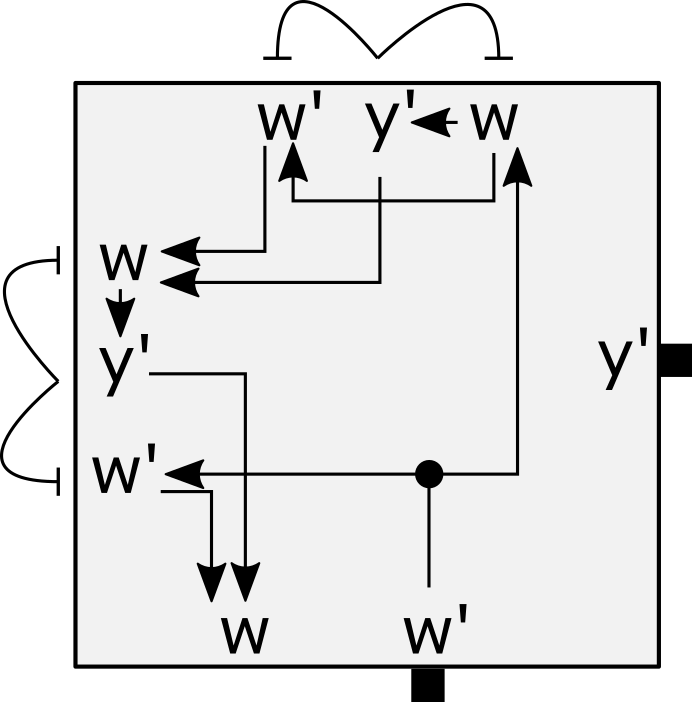}
        }%
        \quad\quad
  \subfloat[][The ``duple'' which allows the $w$ message to propagate to the east when there is not a $y$ glue for a north crawler tile to cooperatively bind to.]{%
        \label{fig:rep1d_duple}%
        		\includegraphics[width=2in]{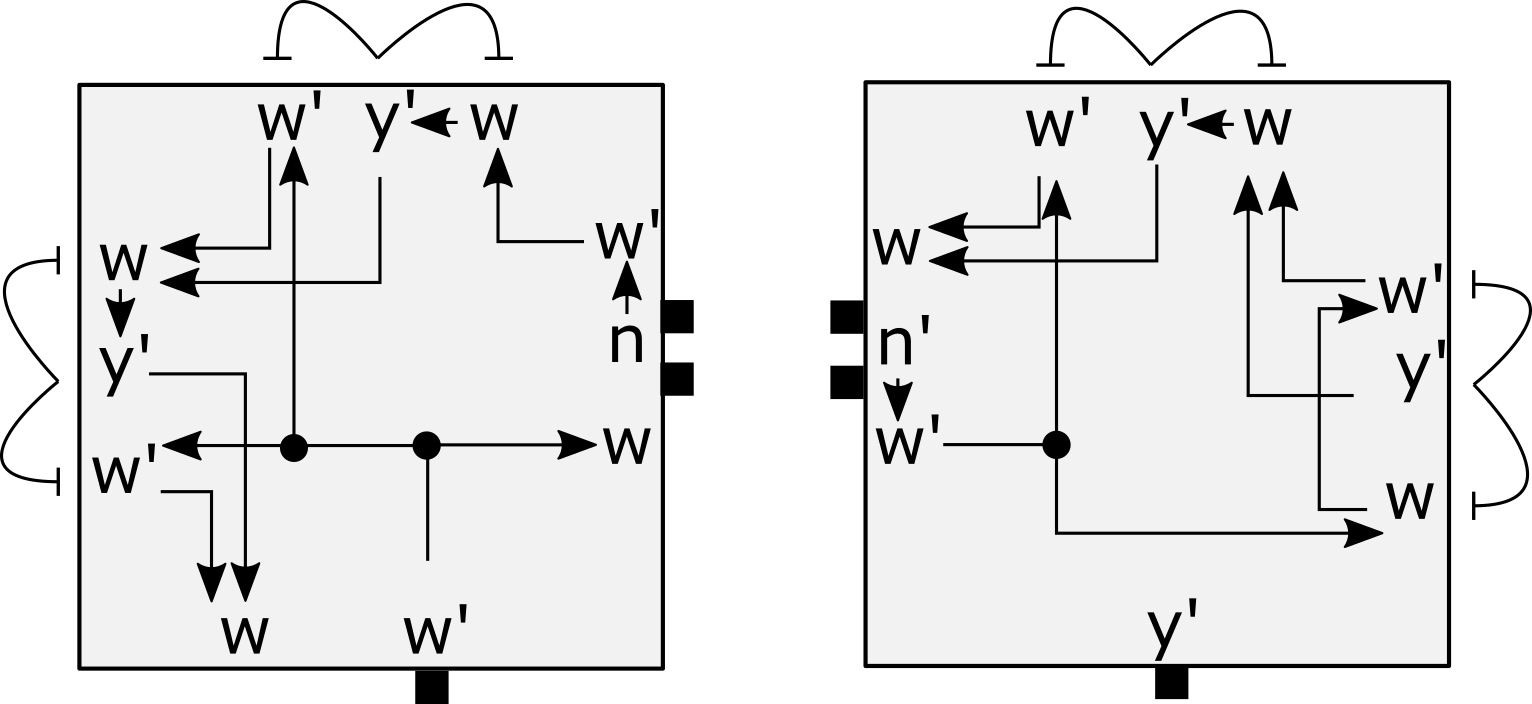}
        }%
        \quad\quad
  \subfloat[][The ``filler'' tile.]{%
        \label{fig:rep1d_fill}%
                \includegraphics[width=0.83in]{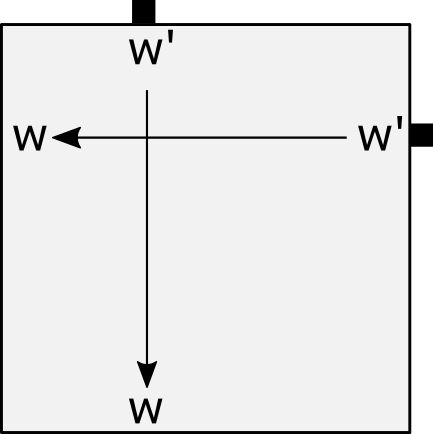}
        }%
  \caption{The partial tile set which builds a ring and fills it.\vspace{-15pt}}
  \label{fig:rep_tiles}
\end{figure}

After the start tile is bound to the southeastern-most corner of the frame, a $w$ glue is exposed on its north side. The exposed $w$ glue provides a location for a north crawler tile (Figure \ref{fig:rep1d_crawl}) to cooperatively bind with a frame tile or an overhang duple (Figure \ref{fig:rep1d_duple}) to follow the perimeter along a 90 degree turn to the right. Alternatively, if the start tile encounters a frame tile to the north, it binds with that frame tile's $w^\prime$ glue using its north $w$ glue, and then activates the north $y^\prime$ glue. Note that the $y^\prime$ glue is activated in both instances, but will only bind in the case of being exposed to a frame tile due to geometric hindrances (i.e. being hidden between ring-forming tiles). At the same time the $w$ glue is sent an activation signal, a $w^\prime$ glue is exposed to the west side to sense for an incoming crawler tile. After either a crawler or duple tile binds to the start tile, that most recently added tile then exposes a $w$ glue to continue the building of the ring. This process of sequentially exposing a $w$ glue is propagating the $w$ message, and allows for the ring to be built such that it follows the rightmost edge of $PERIM(P)$ using all rotations of the crawler and duple tiles. An example of the ring building process is provided in Figure~\ref{fig:rep-build-ex}.

\begin{figure}
    \centering
    \subfloat[Start tile (Figure~\ref{fig:rep1d_start}) added via cooperative binding of $z$ glue (middle right green) and $y$ glue (bottom green) on frame tiles. $w$ glue (fuchsia) on start tile exposed for addition of next tile]{%
        \includegraphics[width=1.75in]{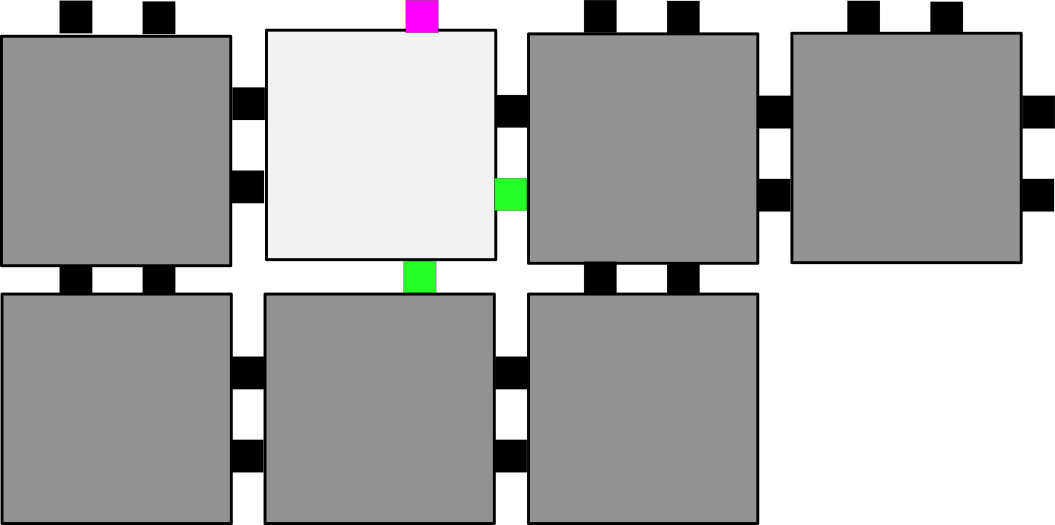}
        \label{fig:rep-build-ex1}
    }%
    \quad%
    \subfloat[Duple tile assembly (Figure~\ref{fig:rep1d_duple}) added via cooperative binding to prior exposed $w$ glue (bottom left green) on start tile and $y$ glue (bottom right green) on frame tile. $w$ glue (fuchsia) on right duple tile follows frame, and $w^\prime$ glues (yellow) exposed to detect incoming $w$ messages.]{%
        \includegraphics[width=1.75in]{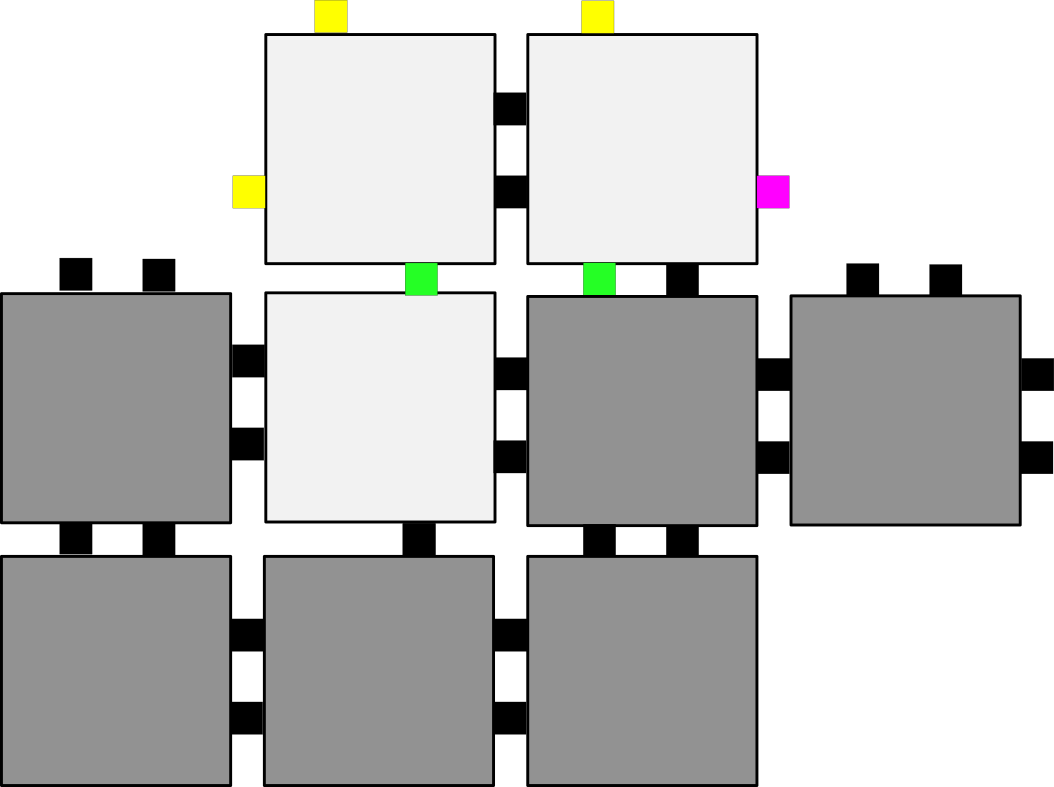}
        \label{fig:rep-build-ex2}
    }%
    \quad%
    \subfloat[East crawler tile (Figure~\ref{fig:rep1d_crawl} rotated $90^\circ$ CW) added via cooperative binding to prior exposed $w$ glue (middle left green) from duple tile and frame $y$ glue (bottom left green). $w$ glue (fuchsia) on tile follows frame, and $w^\prime$ glue (yellow) exposed to detect incoming $w$ messages.]{%
        \includegraphics[width=1.75in]{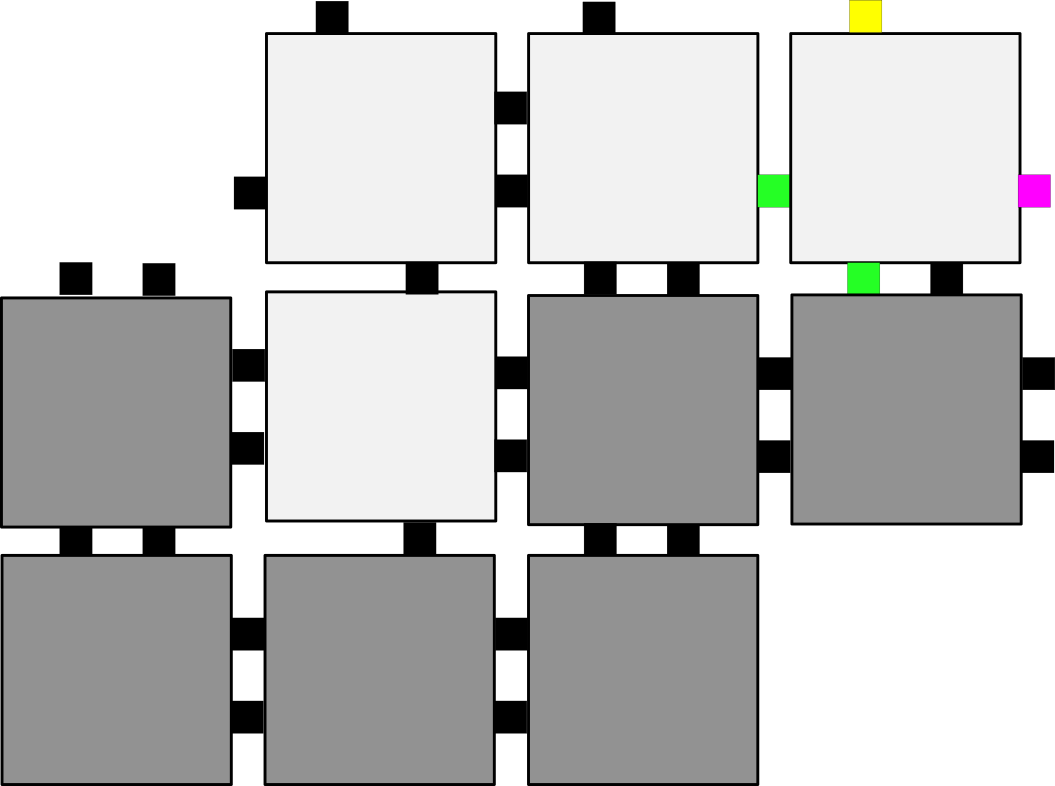}
        \label{fig:rep-build-ex3}
    }%
    \caption{Demonstration of first three tiles added to frame in replication of shape shown in Figure~\ref{fig:w1-example}\vspace{-20pt}}%
    \label{fig:rep-build-ex}%
\end{figure}

We now discuss the process by which a tile can sense when it is adjacent to a corner of $P$ and must redirect the $w$ message. Due to the cooperative nature of crawler or overhang duple tiles binding, each crawler or overhang tile can only be added if it is adjacent to both the exposed $w$ glue of the expanding ring and an adjacent $y$ glue of the frame. \update{We say that the $w$ message \emph{bounces} off the frame when the first $w$ glue activated on a ring tile is adjacent to $P$ and is redirected CCW by a set of activation and deactivation signals. For purposes of outlining the process by which a signal $w$ bounces, we define the E, N, and W $y^\prime$ glues on north crawler tiles as $y^\prime_0, \: y^\prime_1, \text{ and } y^\prime_2$. After $y^\prime_1$ is activated and binds to the frame, it sends a signal to activate the $w$ glue on the W side of the crawler tile. We say that a signal \emph{double bounces} if the same process occurs where both $w$ and $y^\prime_1$ bind to $P$. In a double bounce, instead of a $w$ glue being exposed to an edge of the crawler tile adjacent to $P$, the $w$ message is passed backwards to the preceding ring tile.} 
The process of a double bounce is illustrated in Figure \ref{fig:rep_bounce}. A double bounce is the maximum which can occur for a crawler tile. There exists the possibility for up to four bounces to occur for a duple; this is equivalent to two double bounces in succession.

\begin{figure}
    \centering
    \includegraphics[width=.9\textwidth]{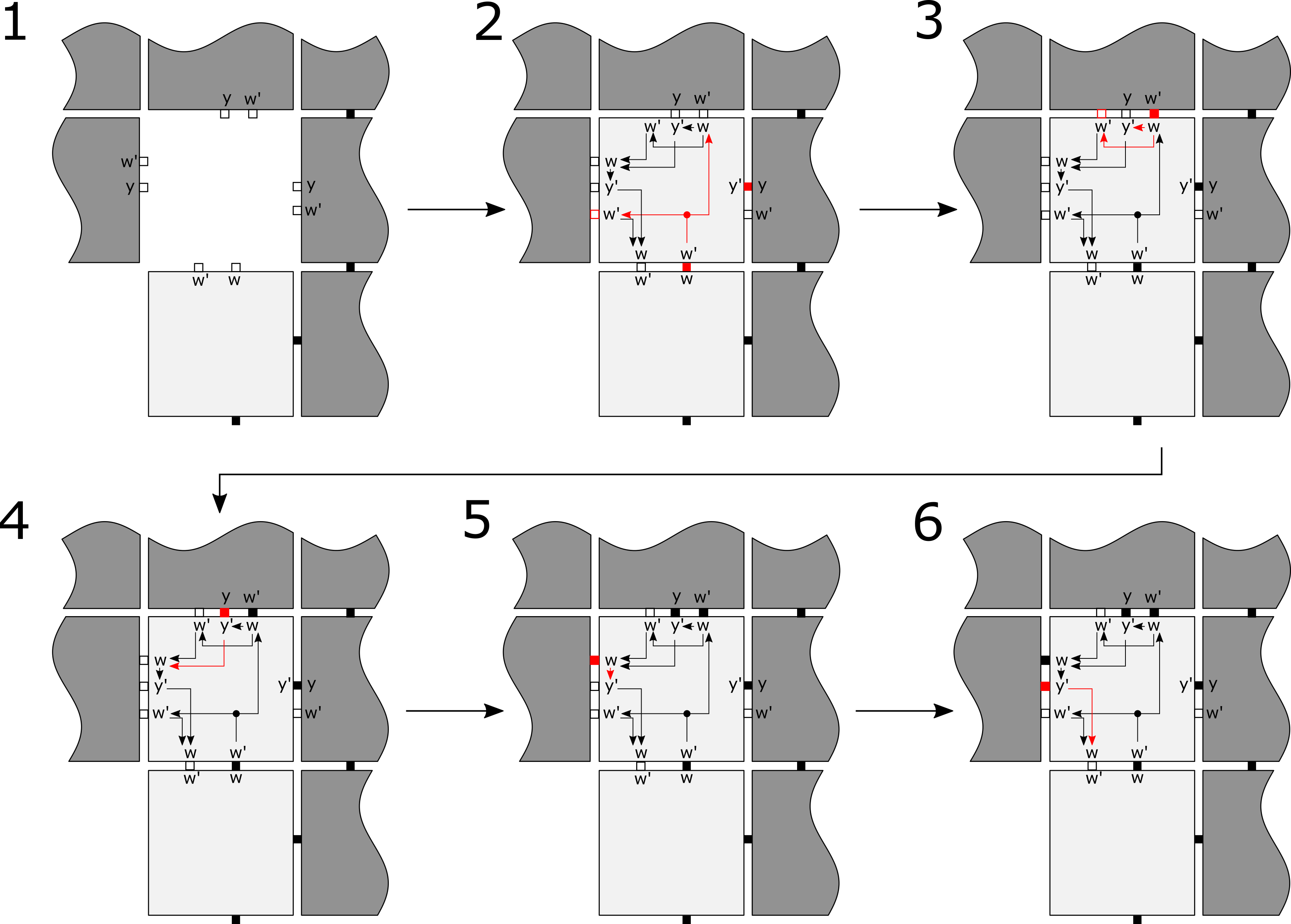}
    \caption{Step by step illustration of the binding of a north crawler tile and the double bouncing of the $w$ message, beginning with exposed glues which allow for cooperative binding. Dark tiles are cutaways of frame tiles. Glues with empty insides are active but not bound. Red signals and glues indicate activation and binding, respectively. Detachment signals are not shown for clarity. \vspace{-10pt}}
    \label{fig:rep_bounce}
\end{figure}

Width-1 shapes cause the need for this tile set to handle the $w$ message being propagated backwards (binding a $w$ glue to a tile which is already a member of the ring) as well as forwards (directing the placement of a ring tile). This requires a $w^\prime$ glue to be active on any edge which has an active $w$ glue. The first case by which messages must be passed backwards is initiated by the process of double bouncing, described in the prior paragraph. There exists a second possible scenario for a message being passed backwards, when a width-1 tile is connected to a set of width-2 tiles. It is caused by either a crawler or overhang duple tile encountering the west edge of another crawler tile or overhang duple, binding a $w$ glue of one to the $w^\prime$ of another. When this occurs, the message must be passed backwards to the prior tile via the activation of a $w$ glue on the same side as the initial $w$ glue which caused the tile to bind to the frame and expanding ring structure. An example of this is demonstrated in Figure~\ref{fig:w1-example} by the bottom left tile in the $3 \times 3$ square, and the first overhang duple.

The ring construction process completes when the $w$ message reaches the west face of the start tile, which can be caused by three unique processes. First, the case where a crawler or duple tile binds to the exposed $w^\prime$ glue on the west side of the start tile. This is how all non width-1 shapes will complete the ring, and some width-1 shapes as well. The second case is when a message is passed backwards to the start tile through the start tile's north face, and the west side of the start tile is adjacent to a frame tile. The $w$ message `bouncing' off the west side via the binding of the $y^\prime$ glue indicates a completed ring. In the final case, the west side of the start tile is available for growth of an additional crawler tile after the $w$ message has been returned to the start tile's north face. When the $w^\prime$ glue on the west side of the start tile is bound, similar to the first case, the ring has been completed. 

\subsubsection{Ring Detachment and Filling}\label{sec:ring-det}

Additional signals that cause the following behavior must be added to allow for replication: 1) the ring separates from the frame upon completion with no active glues which could interfere with other concurrent replication, 2) the ring exposes $x$ glues along its border like the original shape to enable the production of additional frames, and 3) the ring fills any holes internally by adding in fill tiles.

All three of these activities are initialized by the following of the $w$ messages through the completed ring beginning at the start tile. 
We utilize the gadgetry presented in Section \ref{sec:signal-following} to generate sets of glues which follow the $w$ message, deactivate the $y^\prime$ glues, 
and expose $x$ glues along $PERIM(P)$ to allow for construction of new frames. These signals and glues are then layered on top of the tile set shown in Figure \ref{fig:rep_tiles}.
As all the exterior-facing $w$ glues have been sent deactivation signals, the only glues which are guaranteed to be holding the ring to the frame are the active $y^\prime$ glues. 
The $y^\prime$ glues must be deactivated to enable the separation of the ring from the current frame. For purposes of detaching the completed ring from the frame, 
deactivation could be done na\"ively by sending \texttt{off} signals to all $y^\prime$ glues on all ring tiles. 
However, $x$ glues need to be activated only on the perimeter edges of the ring in order to prevent spurious frame formation. 
This is accomplished by adding signals which expose glues along the initial tile binding edge; the signals which cause separation are shown in Figures \ref{fig:rep1d_start-det}, \ref{fig:rep1d_crawl-det}, and \ref{fig:rep1d_duple-det}. The start tile (Figure \ref{fig:rep1d_start-det}) initiates a $w_f$ message which follows the $w$ message, such that $w_f$ glues are activated only on ring tile edges which are bound to successor ring tiles (i.e., ring tiles which the current ring tile's $w$ glues are bound). The $w_f$ message is initiated by the return of the $w$ message to the start tile, indicating completion of the ring.

\begin{figure}
\centering
  \subfloat[][The signals on the start tile which initiate detachment.]{%
        \label{fig:rep1d_start-det}%
	        \includegraphics[width=0.9in]{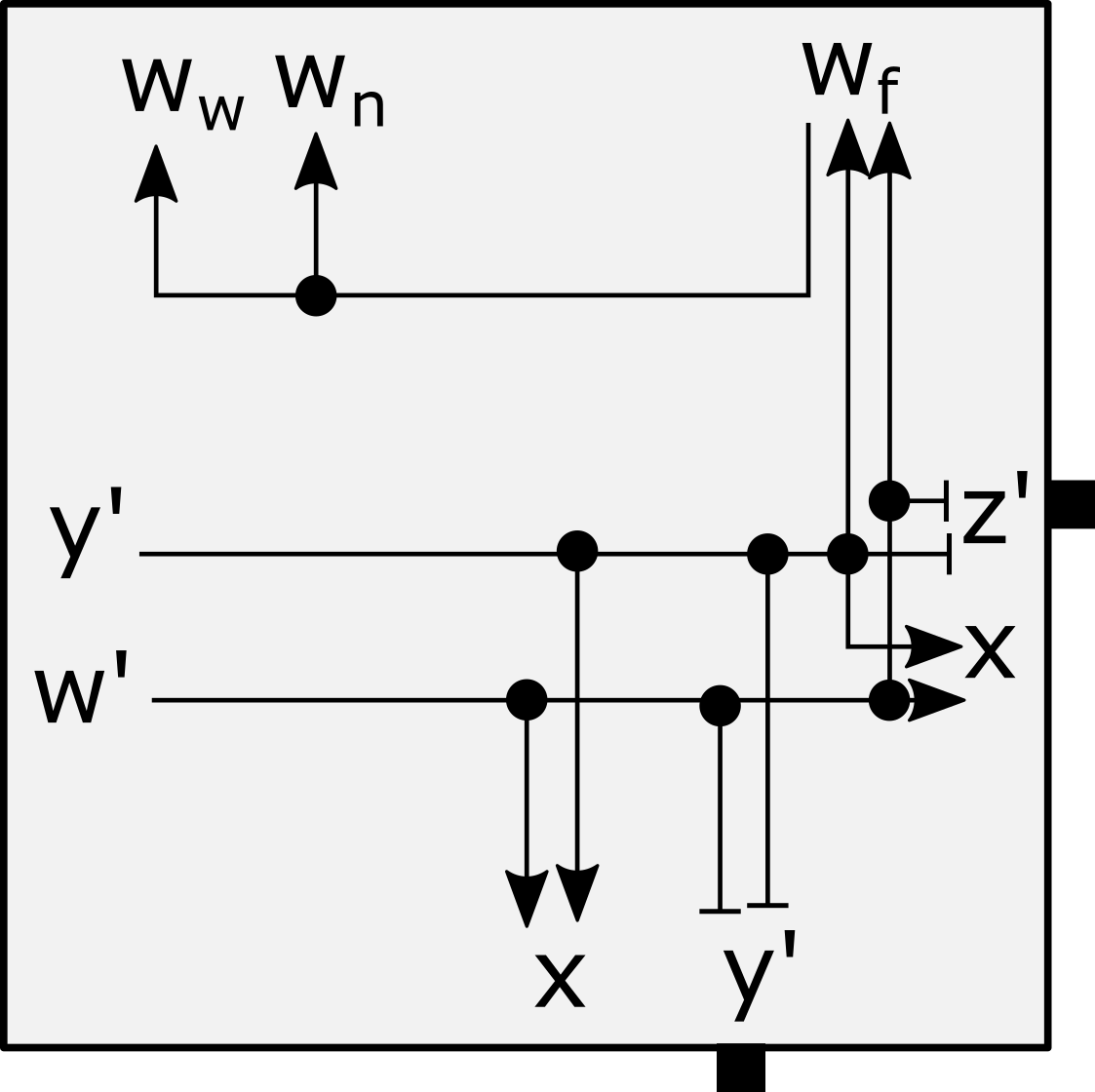}
        }%
        \quad\quad
  \subfloat[][A detachment signals for the north crawler tile.]{%
        \label{fig:rep1d_crawl-det}%
        		\includegraphics[width=0.9in]{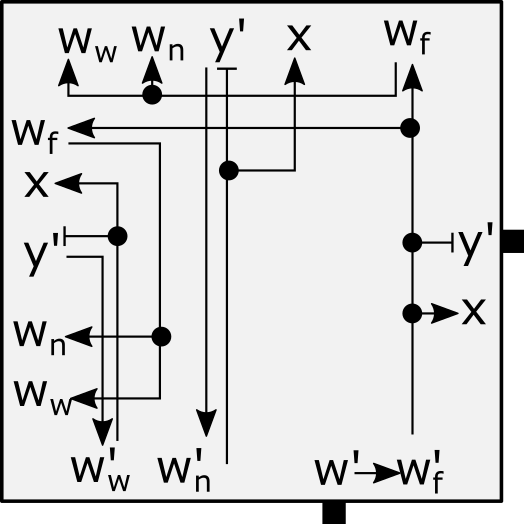}
        }%
        \quad\quad%
  \subfloat[][The detachment signals for the ``duple'' tiles]{%
        \label{fig:rep1d_duple-det}%
        		\includegraphics[width=1.9in]{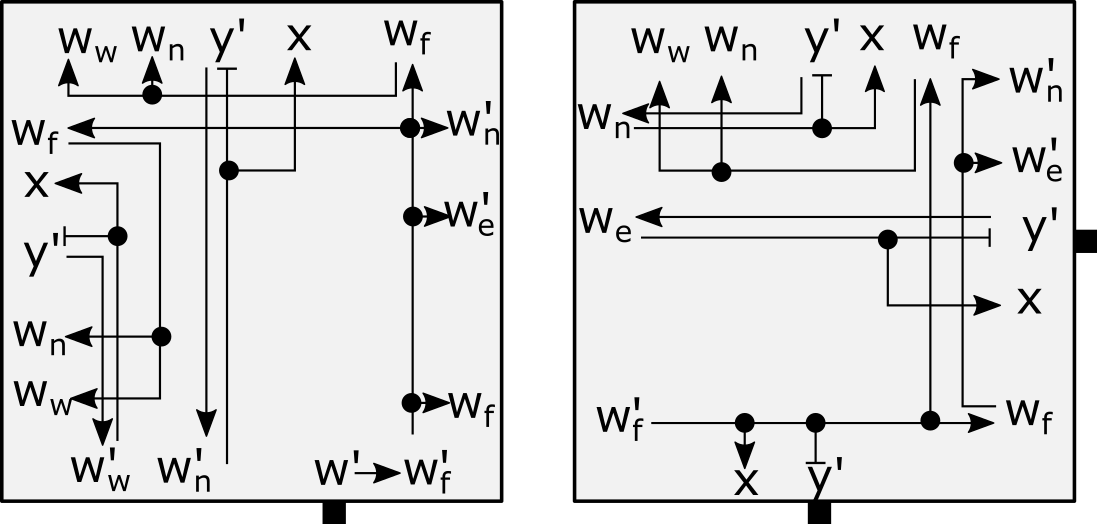}
        }%
  \caption{Addition of signals which enable detachment of ring from frame\vspace{-10pt}}
  \label{fig:rep_tiles-detach}
\end{figure}

The initial step of creating the ring does not allow the placement of any interior tiles. In order for the current assembly to have the same domain as the input shape $\alpha$,
we must place the remaining interior tiles. A north-eastern corner exists which can host cooperative binding of a filler tile demonstrated in Figure~\ref{fig:rep_fill}, 
as the minimal shape which requires a fill tile is a $3 \times 3$ square. With the binding of this initial tile, cooperative binding locations are added. 
This process continues until fill tiles are geometrically inhibited by ring tiles. 
These cooperative sites can be accessed using $w$ glues on crawler tiles which comprise the ring and were not activated in the initial ring formation. 
We activate these glues by sending messages which follow the $w_f$ glue activation. 

\begin{figure}
    \centering
	\includegraphics[width=0.4\textwidth]{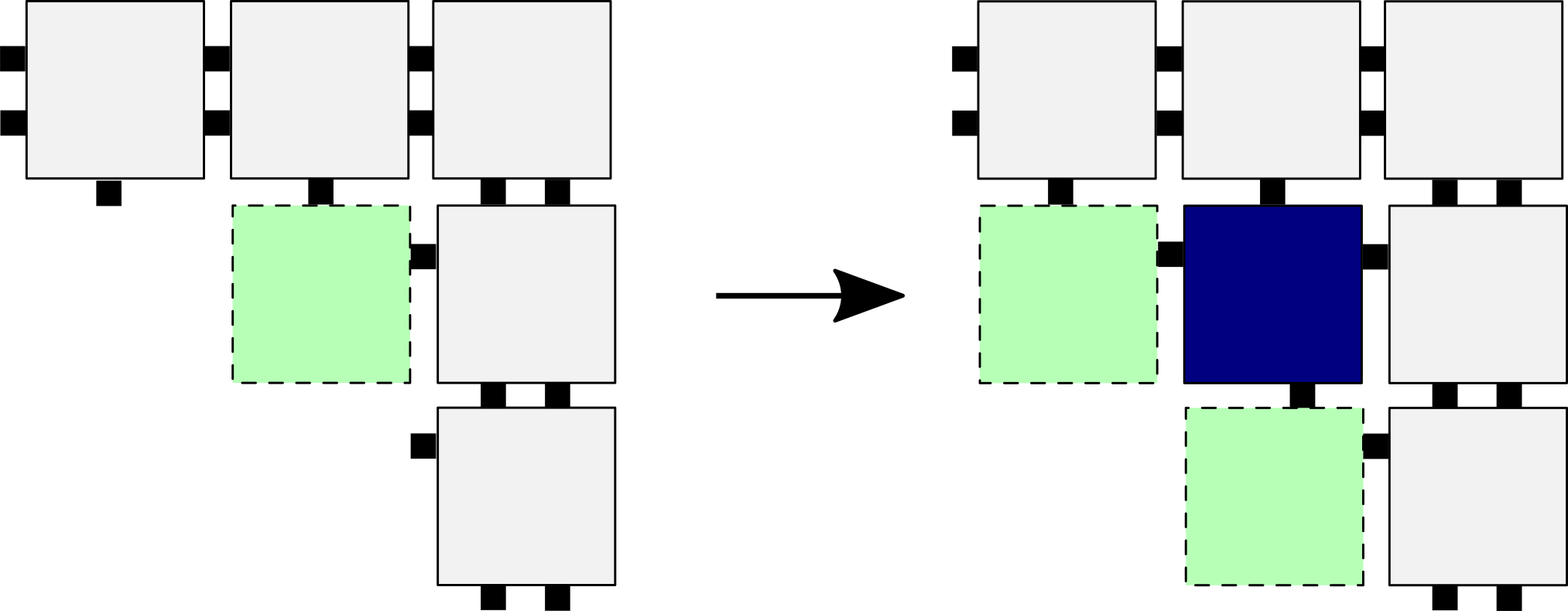}%
       
    \caption{Example dynamics of filling in ring structure.
    Green boxes indicated possible locations for cooperative binding of filler-type tiles. After initial filler tile binds, it opens up additional locations for binding of filler tiles.\vspace{-10pt}}
    \label{fig:rep_fill}
    
\end{figure}

\subsection{Correctness of exponential replication tileset}
We now demonstrate the correctness of the construction by validating each of the requirements of exponential replication.
\begin{theorem}\label{thm:exp_rep}
For any finite set of hole-free, shapes $P$ $=$ $\{P_1, P_2, ...,$ $ P_n\}$, there exists an STAM system $\calT = (T, \sigma_P, 2)$, with $\sigma_P$ a set of $n$ assemblies, one of each shape in $P$, and $\calT$ exponentially replicates without bound all shapes in $P$.
\end{theorem}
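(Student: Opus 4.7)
The plan is to extend the frame-building tile set of Theorem~\ref{thm:complete-frame} with a universal \emph{filling} subset that, starting from the unique leader glue $z$, reconstructs a copy of the input shape inside a completed frame. Replication then proceeds in rounds: for each current copy of a shape $P_i$ in the system, frame building produces a completed mold around it; after detachment the original copy is restored and the mold is filled in to produce a second copy; both then serve as input assemblies for the next round, doubling the population of $P_i$.

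First I would establish that the \emph{round cycle} is well defined. By Theorem~\ref{thm:complete-frame} and Lemma~\ref{lem:no-combine}, each input assembly independently produces a completed frame exposing $z$ on exactly one leader position, without cross-interference between distinct shape copies. When the $br$ message has propagated through layer~1, every $x'$ glue binding the frame to the original shape is deactivated except the one on the leader's north edge, and the original detaches from the frame with its uniform $x$ glues intact. Thus the original is immediately a valid input assembly for the next round, while the frame is a mold ready to be filled.

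Next I would describe the \emph{filling phase}. A dedicated filler tile binds using the $z$ glue inside the mold and initiates a traversal that uses the interior $y$ and $w$ glues exposed by the frame's layer~1 as guide rails. The traversal places a tile at every interior location of the mold and, on each edge that faces the mold boundary, activates a fresh uniform $x$ glue that will be latent until the replica is complete. Because the input shapes have min-width~2 and are supplied at scale~2, every interior location has a $2\times 2$ neighborhood that provides room to route the signals needed by the filler, so a single universal tile set can carry out the traversal for any hole-free min-width~2 shape. Once the interior is complete, a release signal analogous to $br$ deactivates the binding glues between the new replica and the surrounding frame, detaching a fresh copy of $P_i$ with the standard uniform $x$ perimeter.

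Finally I would verify the \emph{counting and terminal} criteria. Each round turns one input assembly into two, so if there are $n$ copies of $P_i$ at time step $t$ there are $2n$ copies at time $t+1$, which is exactly the required exponential, unbounded replication. The number of distinct terminal assemblies stays bounded because every input assembly has a shape from the fixed finite set $P$, spent frames fall into finitely many shape classes determined by $P$, and no exposed glue on a spent frame matches any producible assembly containing an input shape (by construction the frame's residually active glues are not matched by any frame-building, filling, or perimeter glue of an input). Combined with Lemma~\ref{lem:no-combine}, this prevents spurious merges and confirms termination in the required sense.

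The main obstacle is the filling phase. The universal filler must correctly reproduce any hole-free min-width~2 shape from purely local interactions with a mold whose interior has been marked with uniform $y$ and $w$ glues together with one distinguished $z$ glue. The scale-2 interior, together with the min-width~2 hypothesis, is precisely what enables a universal traversal tile set to ``walk'' the interior without getting topologically stuck; the technically delicate step is designing the filler gadgets so that the walk provably covers every interior location for every such shape, while ensuring that no intermediate state exposes glues that could cause two partially filled frames to merge.
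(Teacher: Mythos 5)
Your overall architecture (frame $\to$ detached mold $\to$ filled replica $\to$ both serve as inputs next round, giving the $n \to 2n$ count) matches the paper, and your round-cycle and counting arguments are essentially the paper's. But there is a genuine gap at exactly the step you flag as ``the main obstacle'': you propose a single traversal that ``places a tile at every interior location of the mold'' guided by the $y$ and $w$ glues on the mold's interior surface. Those guide glues exist only on the boundary-facing edges of the mold; a location of the shape that is not adjacent to the mold's interior surface has no guide rails at all, and a fixed universal tile set cannot encode a Hamiltonian-style walk of an arbitrary hole-free region. The appeal to min-width~2 and scale~2 does not repair this, since those hypotheses constrain the boundary layer, not the deep interior. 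As written, the filling phase does not go through.

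The paper closes this gap by decomposing the filling into two stages, neither of which needs global shape information. First, only a one-tile-wide \emph{inner ring} is grown hugging the mold's interior surface: a start tile binds to the $z$ glue and a $w$ signal is propagated CCW, with crawler tiles and duples cooperatively binding to the mold's $y$ glues and ``bouncing'' off blocked edges to find the CCW-most path (Figures~\ref{fig:rep_tiles} and~\ref{fig:rep_examples-main}); min-width~2 is what guarantees this single-tile-wide ring is well defined. Each ring tile activates an $x$ glue on the edge where its $y'$ glue bound, so the ring's outer surface becomes the uniform perimeter of the new copy. Second, once the $w$ signal returns to the start tile, the ring deactivates its $y'$ bonds (following the $w$ path, as in Section~\ref{sec:signal-following}) and exposes strength-$2$ \emph{fill} glues on its interior; generic filler tiles then flood-fill the enclosed region in an order-independent way, requiring no traversal or guidance whatsoever. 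If you restructure your filling phase along these lines --- guided growth for the boundary only, unguided strength-$2$ flood fill for the interior --- the rest of your argument stands.
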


\begin{proof}

We first demonstrate that $\calT$ infinitely replicates. 1) Theorem~\ref{thm:complete-frame} demonstrates the ability of a constant sized tileset to generate a frame which defines the perimeter of each shape.
Lemma~\ref{lem:no-combine} demonstrates that each frame created is unique, and is unable to combine with any other. 
2) The only terminal assemblies are the rectangular layers of frames
3) At the end of the generation of an assembly of some shape $P_i$, the only remaining glues exposed to the exterior are strength 1 $x$ glues.
 These are never deactivated and will not be consumed. Given an infinite supply of $T$ as defined, for all shapes $P_i$ an infinite number can be replicated.
 
Second, the number of attachments required to generate an assembly of shape $P_i$ is bound by $poly(|P_i|)$. 
As demonstrated in Section~\ref{sec:ring-form} each singleton is bound to a single predecessor, and at most two successors which are placed by its $w$ glue by the passing of the $w$ message in ring formation. A `duple' assembly is bound to a single predecessor, and has at most three successors. Each fill tile is required to be bound to two tiles, and may be bound to two more tiles.

Third, every (super)tile attachment involves a singleton tile or `duple' assembly. The passing of the $w$ message in ring formation allows for the placement of one tile or duple at a time (beginning with the placement of the start tile), and a singleton filler tile binds to an available location. Deactivation of all exterior glues not of type $x$ on an arbitrary replicated assembly $P_i$ as demonstrated in Section~\ref{sec:ring-det} prevents any attachment to either an existing frame or any other shape at any stage of replication.

Finally, If $n$ copies of any $P_i$ exist at time $t$, then at time $t+1$ there are $2n$. Lemma~\ref{lem:no-combine} indicates each frame created from a shape $P_i$ can only create a copy of $P_i$, and since ring generation tiles can only create a shape outlined by a frame we will end up with $2n$ shapes given $n$ shapes of type $P_i$, as there exist $n$ frames to carry out replication of $P_i$.
 \qed
\end{proof}
\vspace{-13pt}
\bibliographystyle{amsplain}
\bibliography{tam,experimental_refs}

\end{document}